\newcommand{\R}{{\mathbb R}}
\newcommand{\Z}{{\mathbb Z}}
\newcommand{\perm}{\mathrm{Perm}}
\newcommand{\myperm}{\mathit{perm}}
\renewcommand{\det}{\mathrm{Det}}
\newcommand{\mydet}{\mathit{det}}
\newcommand{\sgn}{\mathrm{sgn}}
\newcommand{\mdisc}{\mathrm{Disc}}
\newcommand{\mydisc}{\mathit{disc}}
\newcommand{\mvol}{{\mathrm{MVol}}}
\newcommand{\vol}{\mathit{vol}}
\newcommand{\tw}{\mathrm{tw}}
\newcommand{\sharpp}{\#\rm{P}}
\newcommand{\card}[1]{\ensuremath{\left\vert#1\right\vert}}
\renewcommand{\subset}{\subseteq}
\renewcommand{\supset}{\supseteq}
\newtheorem{thm}{Theorem}
\newtheorem{prop}[thm]{Proposition}
\newtheorem{cor}[thm]{Corollary}
\newtheorem{lem}[thm]{Lemma}
\newtheorem*{ques}{Question}
\newtheorem*{notation}{Notation}
\theoremstyle{definition}
\newtheorem{defn}{Definition}[section]
\newtheorem{exmp}{Example}[section]
\theoremstyle{remark}
\newtheorem{rem}{Remark}[section]
\let\oldtabular\tabular
\renewcommand{\tabular}{\footnotesize\oldtabular}
\title[]{An efficient tree decomposition method for permanents and mixed discriminants}
\date{\today}
\author{Diego Cifuentes} 
\address{
Laboratory for Information and Decision Systems (LIDS), 
Massachusetts Institute of Technology, Cambridge MA 02139, USA}
\email{diegcif@mit.edu}
\author{Pablo A. Parrilo}
\address{
Laboratory for Information and Decision Systems (LIDS), 
Massachusetts Institute of Technology, Cambridge MA 02139, USA}
\email{parrilo@mit.edu}
\keywords {Permanent, Structured array, Mixed discriminant, Treewidth}
\begin{document}
\maketitle

\begin{abstract}
  We present an efficient algorithm to compute permanents, mixed discriminants and hyperdeterminants of structured matrices and multidimensional arrays (tensors).
  We describe the sparsity structure of an array in terms of a graph,
  and we assume that its treewidth, denoted as $\omega$, is small.
  Our algorithm requires $\widetilde{O}(n\, 2^{\omega})$ arithmetic operations to compute permanents, and $\widetilde{O}(n^2 + n\, 3^{\omega})$ for mixed discriminants and hyperdeterminants.
  We finally show that mixed volume computation continues to be hard under bounded treewidth assumptions.
\end{abstract}


\section{Introduction}

The \emph{permanent} of a $n\times n$ matrix $M$ is defined as
\begin{align*}
  \perm(M) := \sum_{\pi} \prod_{i=1}^n M_{i,\pi(i)}
\end{align*}
where the sum is over all permutations $\pi$ of the numbers $1,\ldots,n$.
Computing the permanent is $\sharpp$-hard \cite{Valiant1979}, which means that it is unlikely that it can be done efficiently for arbitrary matrices.
As a consequence, research on this problem tends to fall into two categories: algorithms to approximate the permanent, and exact algorithms that assume some structure of the matrix.
This paper lies in the second category.
We further study related problems in structured higher dimensional arrays, such as mixed discriminants, hyperdeterminants and mixed volumes.

The sparsity pattern of a matrix $M$ can be seen as the bipartite adjacency matrix of some bipartite graph $G$.
This bipartite graph fully encodes the structure of the matrix.
We assume here that the treewidth $\omega$ of $G$ is small.
Even though it is hard to determine the treewidth of a graph, there are many good heuristics and approximations that justify our assumption~\cite{bodlaender2008combinatorial}.
We show an algorithm to compute $\perm(M)$ in $\widetilde{O}(n\,2^{\omega})$ arithmetic operations.
In this paper, the notation $\widetilde{O}$ ignores polynomial factors in~$\omega$.
We note that the algorithm can be used over any commutative ring.

The permanent of a matrix can be generalized in several ways.
In particular, given a list of $n$ matrices of size $n\times n$, its \emph{mixed discriminant} generalizes both the permanent and the determinant \cite{Gurvits2005,Bapat1989}.
Our algorithm for the permanent extends in a natural way to compute the mixed discriminant.
The natural structure to represent the sparsity pattern in this case is a tripartite (i.e., $3$-colorable)  graph.
The running time of the resulting algorithm is $\widetilde{O}(n^2 + n\,3^{\omega})$, where $\omega$ is the treewidth of such graph.
In particular, this algorithm can compute the determinant of a matrix in the same time.

More generally, our methods extend to generalized determinants/permanents on tensors.
A special case of interest is the \emph{multidimensional permanent} \cite{Avgustinovich2010, Dow1987,Tichy2015}.
Another interesting case is the first Cayley \emph{hyperdeterminant}, also known as Pascal determinant, which is the simplest generalization of the determinant to higher dimensions \cite{Cayley1846,Luque2003,Barvinok1995}.
Note that unlike the determinant, the hyperdeterminant is $\sharpp$-hard, in particular because it contains mixed discriminants as a special case \cite{Gurvits2005,Hillar2013}.

Given a set of $n$ polytopes in $\R^n$, its mixed volume provides a geometric generalization of the permanent and the determinant \cite{Schneider2013}.
We focus on the special case of mixed volume of $n$ zonotopes.
Although there is no ``natural'' graph to represent the structure of a set of zonotopes, we associate to it a bipartite graph that, when the mixed volume restricts to a permanent, corresponds to the matrix graph described above.
This allows us to give a simple application for mixed volumes of zonotopes with few nonparallel edges.
Nevertheless, we show that mixed volumes remain hard to compute in the general case, even if this bipartite graph has treewidth~1 and the zonotopes have only 3 nonzero coordinates.

The diagram of Figure~\ref{fig:hassediagram} summarizes the scope of the paper.
It presents the main problems we consider, illustrating the relationships among them.
Concretely, an arrow from $A$ to $B$ indicates that $B$ is a special instance of $A$.
It also divides the problems according to its difficulty, with and without bounded treewidth assumptions.
In this paper we start from the simplest problems, i.e., permanents and determinants of matrices, moving upwards in the diagram.

  \begin{figure}[htb]
  \centering
  \includegraphics[scale=0.7]{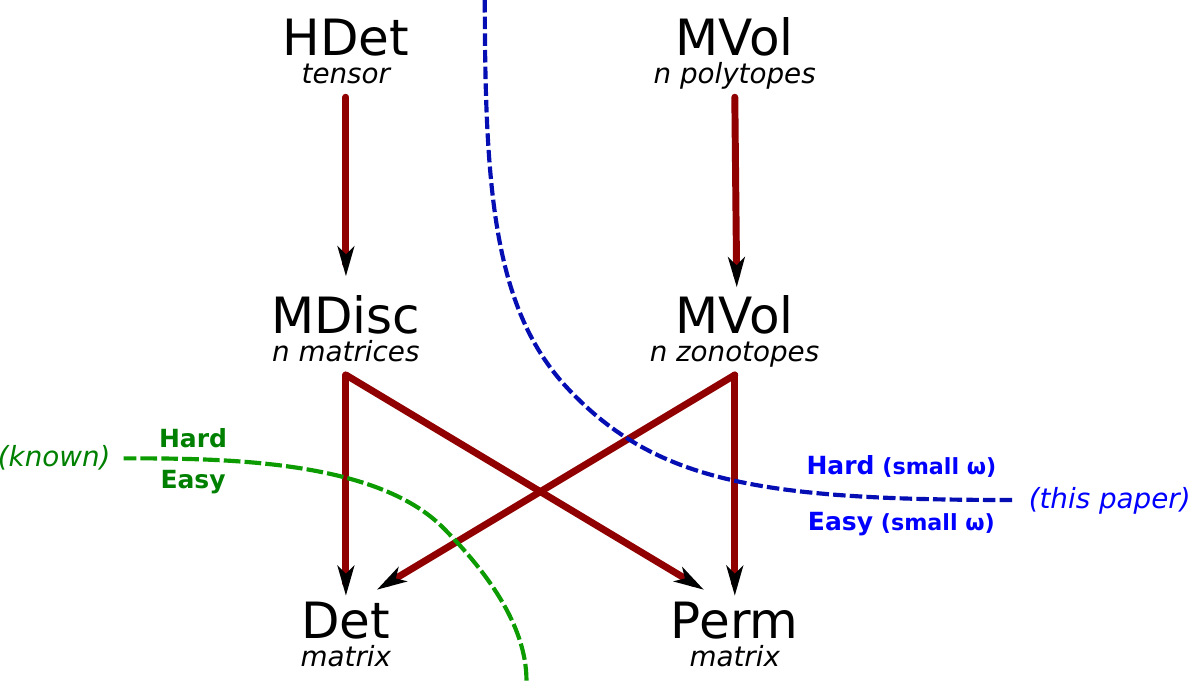}
  \caption{Diagram describing the complexity relations of computing: determinants, permanents, mixed discriminants, hyperdeterminants and mixed volumes.
}
  \label{fig:hassediagram}
  \end{figure}

The document is structured as follows.
In Section~\ref{s:treedecomposition} we review the concept of treewidth and tree decompositions.
We also present three graph abstractions of a sparse matrix.
Among these graphs is the bipartite graph $G$ described above, and a projection $G^X$ onto the column set.
In Section~\ref{s:treepermanent} we present a decomposition method, Algorithm~\ref{alg:colsperm}, that computes the permanent based on the graph $G^X$.
We first use graph $G^X$ because the decomposition algorithm is easier to explain in this case.
In Section~\ref{s:bipartite} we extend this method to work with the bipartite graph $G$, as shown in Algorithm~\ref{alg:bipartiteperm}.
We provide a Matlab implementation of this algorithm.
In Section~\ref{s:discriminant} we discuss the case of mixed discriminants, presenting a decomposition method for it.
We also treat the case of generalized determinants/permanents on tensors.
Finally, in Section~\ref{s:zonotope} we discuss the case of mixed volumes of zonotopes.



\subsection*{Related work}

\subsubsection*{Permanents}
The best known to date method for exactly computing the permanent of a general matrix was given by Ryser and its complexity is $O(n\,2^{n})$ \cite{Ryser1963}.
There are two main research trends on permanent computation:
approximation algorithms and exact algorithms for structured matrices.
We briefly discuss related work in both of them.
Tree decomposition algorithms, which belong to the second group, will be presented afterwards.

We first mention some work on approximation schemes.
For arbitrary matrices, Gurvits gave a randomized polynomial time approximation, with error proportional to the operator norm \cite{Gurvits2005}.
For nonnegative matrices there is a vast literature, see e.g., \cite{Vontobel2013} and the references therein.
Most remarkably, Jerrum et~al.\ gave a fully polynomial randomized approximation scheme (FPRAS) \cite{Jerrum2004}.
Recent work studies approximation schemes based on belief propagation, also for nonnegative matrices \cite{Vontobel2013,Watanabe2010}.

As for exact algorithms for structured matrices, different types of structure have been explored in the literature.
Fisher, Kasteleyn and Temperley gave a polynomial time algorithm for matrices whose associated bipartite graph is planar \cite{Kasteleyn1967,Temperley1961}.
Barvinok showed that the permanent is tractable when the rank is bounded \cite{Barvinok1996}.
The case of $0/1$ circulant matrices has also been considered \cite{Minc1987}, as well as sparse $0/1$ Toeplitz matrices \cite{Codenotti1997}.
Schwartz showed a $O(\log(n)2^{6w})$ algorithm for certain band, Toeplitz matrices, where $w$ is the bandwidth \cite{Schwartz2009}.
Temme and Wocjan showed a $O(n\,2^{3w^2})$ algorithm for a special type of band matrices \cite{Temme2012}.
Note that for arbitrary band matrices our algorithm is $\widetilde{O}(n\,2^{2w})$.

\subsubsection*{Permanents and treewidth}

Tree decomposition methods for permanent computation have been considered.
Courcelle et~al.\ first showed that the permanent can be computed efficiently if the treewidth is bounded \cite{Courcelle2001}, although their methods, based on the Feferman-Vaught-Shelah Theorem, do not lead to an implementable algorithm. 
Later work of Flarup et~al.\ gives a  $O(n\,2^{O(\omega^2)})$ algorithm \cite{Flarup2007}.
This algorithm is extended in~\cite{Meer2011} to a wider class of matrices.
Note the strong dependency on the treewidth.
Furthermore, the graph abstraction used in the above methods, which is not the bipartite graph $G$, has two inconvenient features: its treewidth can be significantly larger than the one of $G$ (see Example~\ref{exmp:gridgraph}) and it is dependent on the specific order of the columns of the matrix (see Remark~\ref{rem:graphinvariance}).


Closer to this paper is the work of van Rooij et~al.~\cite{Rooij2009}.
They gave a $\widetilde{O}(n\,2^{\omega})$ decomposition algorithm for counting perfect matchings in a graph. 
Counting perfect matchings is closely related to the permanent, and one could derive from their proof an analogous, but different, method for calculating the permanent.
Our algorithm could be seen as a variant of such method that is easier to extend to the higher dimensional problems we consider.

\subsubsection*{Mixed discriminants, mixed volumes, tensors}

The higher dimensional problems we study generalize the permanent of a matrix, and thus are $\sharpp$-hard in general.
As for the permanent, there are two natural relaxations: approximation algorithms and exact algorithms under special structure.
Approximation algorithms have been considered for mixed discriminants \cite{Barvinok1997,Gurvits2005}, mixed volumes \cite{Barvinok1997,Dyer1998} and multidimensional permanents \cite{Barvinok2011}.
As for exact algorithms under special structure, we are only aware of Gurvits' tractability result for mixed discriminants and the 4-hyperdeterminant under some bounded rank assumptions \cite{Gurvits2005}. 

To our knowledge this is the first paper that studies tree decomposition methods for mixed discriminants, mixed volumes and generalized determinants/permanents on tensors.
Related to this is a recent log-space algorithm for computing determinants under bounded treewidth assumptions~\cite{Balaji2015}.
Also related is the problem of partitioning a low treewidth graph into $k$-cliques, which is considered in~\cite{Rooij2009}.




\section{Tree decompositions and matrix  graphs }\label{s:treedecomposition}

In this section we review some basic facts regarding tree decompositions of a graph.
We also present three graphs that can be associated to a sparse matrix.

\subsection{Tree decompositions}

The notions of treewidth and tree decompositions are fundamental in many areas of computer science and applied mathematics \cite{Dechter2003,bodlaender2008combinatorial}.
Intuitively, the treewidth of a graph $G$ is a measure of how close it is to a tree.
A graph has treewidth~1 if and only if it is a forest, i.e., a disjoint union of trees.
The smaller the treewidth, the closer the graph is to a tree, and the easier it is to solve certain problems on it.
We note that a graph of treewidth $\omega$ has at most $n\omega$ edges, and thus treewidth imposes a sparsity constraint.
We give the formal definition now.

\begin{defn}\label{defn:treedecomposition}
Let $G$ be a graph with vertex set $X$.
A \emph{tree decomposition} of $G$ is a pair $(T,\chi)$, where $T$ is a rooted tree and $\chi: T\to \{0,1\}^X$ assigns some $\chi(t)\subset X$ to each node $t$ of $T$, that satisfies the following conditions.
\begin{enumerate}[ i.]
  \item The union of $\{\chi(t)\}_{t\in T}$ is the whole vertex set $X$.
  \item For every edge $(x_i,x_j)$ of $G$, there exists some node $t$ of $T$ with $x_i,x_j\in \chi(t)$.
  \item For every $x_i \in X$ the set $\{t:x_i\in \chi(t)\}$  forms a subtree of $T$.
\end{enumerate}
The sets $\chi(t)$ are usually referred to as \emph{bags}.
The \emph{width} $\omega$ of the decomposition is the size of the largest bag (minus one). 
The \emph{treewidth} of $G$ is the minimum width among all possible tree decompositions.
\end{defn}

Algorithms based on tree decompositions typically depend exponentially on the width of the decomposition, and polynomially on the number of nodes \cite{bodlaender2008combinatorial}.
Thus, given a graph $G$ it is desirable to obtain a tree decomposition of minimum width.
However, finding the treewidth is NP-hard \cite{Arnborg1987}.
The treewidth of some simple graphs are known:
for a tree is~$1$, 
for a cycle graph is~$2$,
for the $n\times n$ grid graph is~$n$,
for the complete graph $K_n$ is~$n-1$,
for the complete bipartite graph $K_{n,n}$ is~$n$.
For general graphs, there are good heuristics and approximation algorithms  \cite{bodlaender2008combinatorial}.

Tree decompositions are closely related to chordal graphs \cite{Blair1993,Dechter2003}.
Indeed, given a chordal graph $G$, we can construct a tree decomposition where the bags $\chi(t)$ correspond to its maximal cliques.
We remark that there are at most $n$ maximal cliques in a chordal graph.
In general, we can always assume that a tree decomposition has at most $n$ nodes.

The following is a simple property of tree decompositions.

\begin{lem}\label{thm:treedecompclique}
  Let $(T,\chi)$ be a tree decomposition of $G$.
  Then for any clique $Y$ of $G$ there is some node $t$ of $T$ with $Y\subset \chi(t)$.
\end{lem}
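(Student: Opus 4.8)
The plan is to prove this by induction on the size of the clique $Y$, using the "helly-type" property that the subtrees associated to the vertices of $Y$ pairwise intersect, and then invoking the Helly property of subtrees of a tree. Concretely, for each vertex $x_i \in Y$, condition (iii) of Definition~\ref{defn:treedecomposition} tells us that $T_i := \{t : x_i \in \chi(t)\}$ is a subtree of $T$. The goal is to show that $\bigcap_{x_i \in Y} T_i \neq \emptyset$, since any node $t$ in this common intersection satisfies $Y \subset \chi(t)$, which is exactly the conclusion.

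First I would establish pairwise intersection. For any two distinct vertices $x_i, x_j \in Y$, the pair $(x_i, x_j)$ is an edge of $G$ because $Y$ is a clique; condition (ii) then guarantees a node $t$ with $x_i, x_j \in \chi(t)$, i.e. $T_i \cap T_j \neq \emptyset$. So the subtrees $\{T_i\}_{x_i \in Y}$ pairwise intersect. The key remaining step is to upgrade pairwise intersection to a common intersection.

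The main tool — and the step I expect to carry the real content — is the Helly property for subtrees of a tree: a finite family of subtrees of a tree has a common vertex if and only if they pairwise intersect. I would prove this by induction on the number of subtrees. The base cases are immediate, and for the inductive step one assumes $T_1, \ldots, T_{k-1}$ share a common node and shows $T_k$ meets that common part. The cleanest argument fixes a root of $T$ and, for each subtree $T_i$, considers its node $r_i$ closest to the root; pairwise intersection forces these highest nodes to lie on a single root-to-leaf path, and the one farthest from the root lies in every $T_i$. Alternatively, one can phrase everything in terms of the "helly number 1" of trees via a direct connectivity argument: if two subtrees of a tree intersect, their union is connected, and pairwise-intersecting connected subgraphs of a tree must share a vertex because a tree contains no cycle to separate them.

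I would then conclude by taking any $t \in \bigcap_{x_i \in Y} T_i$ and noting $x_i \in \chi(t)$ for all $x_i \in Y$, hence $Y \subset \chi(t)$. The only genuine obstacle is the Helly step; condition (i) plays no role here, and conditions (ii) and (iii) feed in cleanly. Since the statement is a standard lemma about tree decompositions, I would expect the author's proof to either cite the Helly property for subtrees directly or give the short root-based induction sketched above.
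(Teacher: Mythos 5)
Your proposal is correct and, in its concrete form (take each subtree's node closest to the root, observe that pairwise intersection makes these nodes pairwise comparable as ancestors, and pick the one farthest from the root), it is exactly the argument the paper gives. The only cosmetic difference is that you package it as a general Helly property for subtrees while the paper applies the root-based induction directly to the family $\{T_y\}_{y\in Y}$.
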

\begin{proof}
  For each vertex $y \in Y$, let $T_y$ denote the subtree of all bags containing $y$.
  Let $t_y\in T_y$ be the closest node to the root and let $d(t_y)$ denote the distance from $t_y$ to the root.
  Observe that if $d(t_y)\leq d(t_{y'})$ then $t_{y'}\in T_y$
  (otherwise, $T_y\cap T_{y'}=\emptyset$ and the edge $(y,y')$ would not belong to any bag).
  Let $t\in\{t_y\}_{y\in Y}$ be the farthest away from the root, i.e., $d(t_y)\leq d(t)$ for all $y\in Y$.
  It follows that $Y\subset \chi(t)$.
\end{proof}

\subsection{Matrix graphs}

The sparsity structure of a matrix, i.e., its pattern of nonzero entries, can be described in terms of a graph.
We consider here three possible graph abstractions of such sparsity structure, and we compare their treewidths.

Let $M$ be a $n\times n$ matrix.
We will index the rows with a set $A = \{a_1,\ldots,a_n\}$ and the columns with a set $X= \{x_1,\ldots,x_n\}$.
We use subindices to index the coordinates of $M$, i.e., $M_{a,x}$ denotes the entry in the $a$-th row and $x$-th column.
Similarly, let $M_a$ be the $a$-th row of $M$.
We now present two (undirected) graphs that are usually associated to a sparse matrix.

\begin{defn}[Bipartite graph]\label{defn:graphbipartite}
  Let $M$ be a $n\times n$ matrix, let $A$ denote its set of rows, and let $X$ denote its set of columns.
  The \emph{bipartite graph} of $M$, denoted as $G(M)$,  has vertices $A\cup X$, and there is an edge $(a,x)$ if  $M_{a,x}$ is nonzero.
\end{defn}

\begin{defn}(Symmetrized graph)\label{defn:graphsymmetric}
  Let $M$ be a $n\times n$ matrix, let $A$ denote its set of rows, and let $X$ denote its set of columns.
  The \emph{symmetrized graph} of $M$, denoted as $G^s(M)$, has vertices $1,\ldots, n$ and has an edge $(i,j)$ if $M_{a_i,x_j}\neq 0$ or $M_{a_j,x_i}\neq 0$.
\end{defn}
\begin{rem}
  Note that $G^s(M)$ is the adjacency graph of the symmetric matrix $M+M^T$, assuming no terms cancel out.
\end{rem}

\begin{rem}[Permutation invariance]\label{rem:graphinvariance}
  Note that the permanent of a matrix is invariant under independent row and column permutations.
  The bipartite graph $G$ preserves this invariance.
  On the other hand, the symmetrized graph $G^s$ is only invariant under simultaneous row and column permutations.
\end{rem}

The bipartite graph $G$ is our main object of study in this paper.
Let $\omega:=\tw(G)$ be the treewidth of $G$ and $\omega_s:=\tw(G^s)$ the one of $G^s$.
Tree decomposition methods based on graph $G^s$ have been studied before 
\cite{Courcelle2001, Flarup2007, Meer2011, Balaji2015}.
We claim that graph $G$ is a better abstraction for the purpose of permanent computation.
In particular, $G$ preserves the permutation invariance of the permanent as stated above.
Furthermore, $\omega_s$ can be much larger than $\omega$ as shown in the following example.

\begin{exmp}[Two nonzero entries per row]\label{exmp:gridgraph}
  \begin{figure}[htb]
  \begin{minipage}{.5\textwidth}
    {\footnotesize \begin{align*}
    \begin{blockarray}{cccccccccc}
    & x_1\!\!\! & x_2\!\!\! & x_3\!\!\! & x_4\!\!\! & x_5\!\!\! & x_6\!\!\! & x_7\!\!\! & x_8\!\!\! & x_9 \\
    \begin{block}{c[ccccccccc]}
      a_1\mbox{ } & 0 & 1 & 0 & 2 & 0 & 0 & 0 & 0 & 0\\ 
      a_2\mbox{ } & 0 & 0 & 1 & 0 & 2 & 0 & 0 & 0 & 0\\ 
      a_3\mbox{ } & 0 & 0 & 3 & 0 & 0 & 2 & 0 & 0 & 0\\ 
      a_4\mbox{ } & 0 & 0 & 0 & 0 & 1 & 0 & 2 & 0 & 0\\ 
      a_5\mbox{ } & 0 & 0 & 0 & 0 & 0 & 1 & 0 & 2 & 0\\ 
      a_6\mbox{ } & 0 & 3 & 0 & 0 & 0 & 0 & 0 & 0 & 2\\ 
      a_7\mbox{ } & 0 & 0 & 0 & 0 & 0 & 0 & 3 & 1 & 0\\ 
      a_8\mbox{ } & 0 & 0 & 0 & 3 & 0 & 0 & 0 & 0 & 1\\ 
      a_9\mbox{ } & 3 & 0 & 0 & 0 & 0 & 0 & 0 & 0 & 0\\
    \end{block}
    \end{blockarray}
  \end{align*}}
  \end{minipage}%
  \begin{minipage}{.5\textwidth}
  \centering
  \includegraphics[scale=0.4]{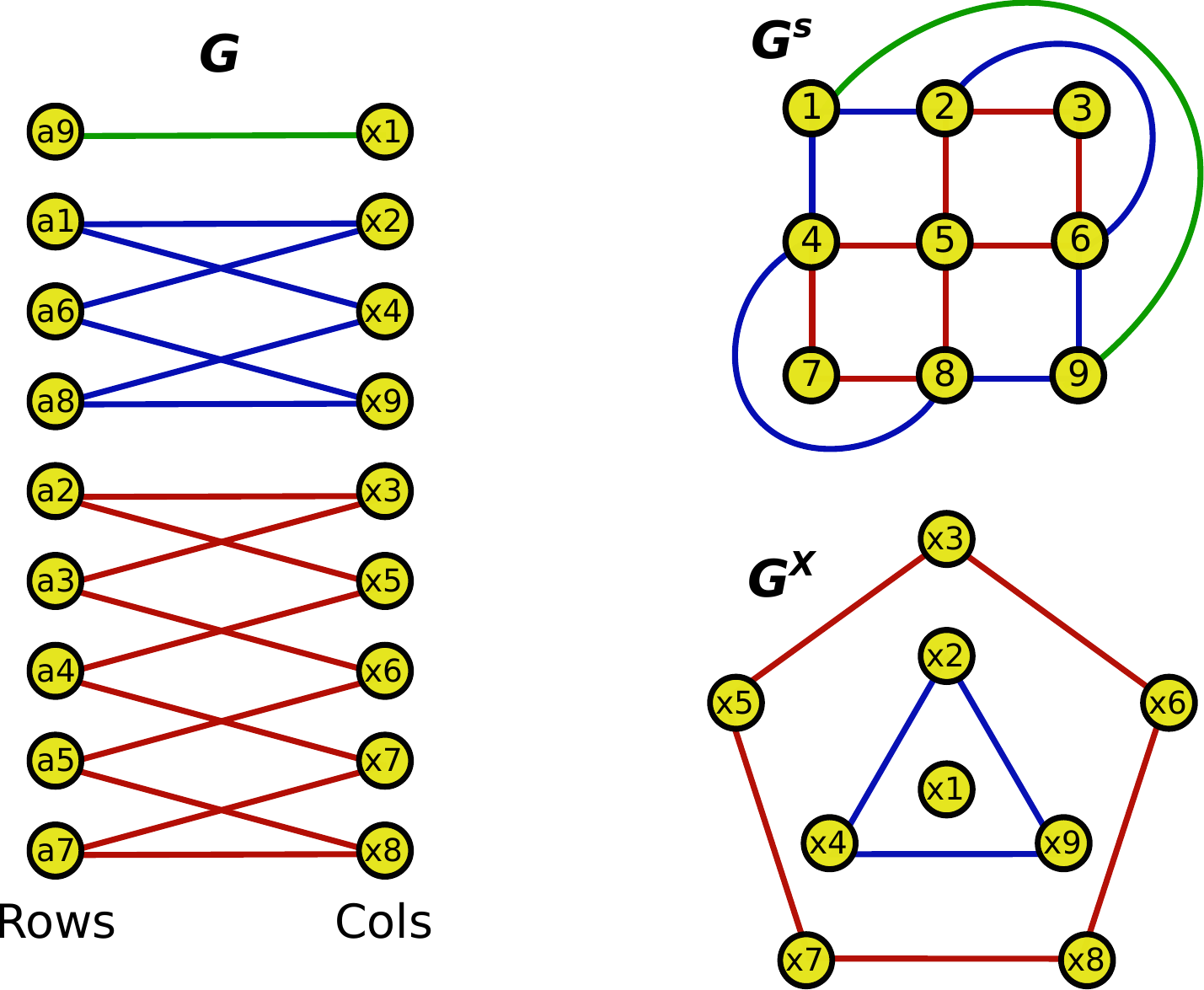}
  \end{minipage}
  \caption{Graph abstractions of a matrix: bipartite graph $G$, symmetrized graph $G^s$ and column graph $G^X$.}
  \label{fig:gridgraph}
  \end{figure}

  Let $M$ be a matrix with at most two nonzero entries per row.
  We claim that for all nontrivial cases the bipartite graph $G$ has treewidth $\omega\leq 2$.
  Let $G_0$ be a connected component, and let $n_0$ be its number of row vertices.
  In order for $G_0$ to have a perfect matching, it must have as many row vertices as column vertices.
  Note also that $G_0$ has at most $2n_0$ edges because the row degrees are at most $2$.
  Then $G_0$ is a connected graph with $2n_0$ vertices and at most $2n_0$ edges, so it has at most one cycle.
  It follows that $\omega\leq 2$.

  On the other hand, we will see that $\omega_s$ is unbounded.
  Let $n= m^2$ and consider the matrix $M$ whose nonzero entries are
  \begin{align*}
    M_{a_{i},\,x_{i+1}} &= 1, &\mbox{if $m$ does not divide $i$} \\
    M_{a_{i},\,x_{i+m}} &= 2, &\mbox{if $1\leq i\leq n-m$ \qquad} \\
    M_{a_{(m-i+1)m},\,x_{i}} &= 3, &\mbox{if $1\leq i\leq m$\qquad\qquad\,} \\
    M_{a_{n-i},\,x_{im+1}} &= 3, &\mbox{if $1\leq i< m$\qquad\qquad\,} 
  \end{align*}
  The corresponding graph $G^s$ contains the grid graph, and thus $\omega_s\geq \sqrt{n}$.
  The case $n=9$ is shown in Figure~\ref{fig:gridgraph}.
  \begin{align*}
  \end{align*}
\end{exmp}

The following example shows that the treewidth of $G$ is always better than the treewidth of $G^s$.

\begin{exmp}[$G$ is ``better'' than $G^s$]\label{exmp:GbetterGs}
  Let's see that given a tree decomposition of $G^s$ of width $\omega_s$ we can form a tree decomposition of $G$ of width $2\omega_s$.
  Let $(T,\iota)$ be a tree decomposition of $G^s$, where $\iota(t)\subset \{1,\ldots,n\}$.
  Let $\mu:T\to \{0,1\}^{A\cup X}$, be such that $\mu(t) = \{a_i: i\in \iota(t)\} \cup \{x_i: i\in \iota(t)\}$.
  Then $(T,\mu)$ is a decomposition of $G$ of width $2\omega_s$.
  On the contrary, for a fixed $\omega$ the treewidth of $G^s$ is unbounded as seen in Example~\ref{exmp:gridgraph}.
\end{exmp}

We now introduce a third graph $G^X$ that we can associate to matrix $M$.

\begin{defn}(Column graph)\label{defn:graphrow}
  Let $M$ be a $n\times n$ matrix, let $A$ denote its set of rows and let $X$ denote its set of columns.
  For any $a\in A$ let $X(a)$ denote the set of nonzero components of row $M_a$.
  The \emph{column graph} $G^X(M)$ has $X$ as its vertex set, and for each $a\in A$ we put a clique in $X(a)$.
  Equivalently, there is an edge $(x_i, x_j)$ if there is some $a\in A$ such that $x_i,x_j \in X(a)$.
\end{defn}


Graph $G^X$ can be seen as a projection of $G$ onto the column set $X$.
We show in the following example that $\omega\leq \omega_X+1$, where $\omega_X:=\tw(G^X)$.

\begin{exmp}[$G$ is ``better'' than $G^X$]\label{exmp:GbetterGX}
  Let $(T,\chi)$ be a tree decomposition of $G^{X}$ of width $\omega_X$.
  For each row $a\in A$ we associate to it a unique node $t\in T$ such that $X(a)\subset \chi(t)$.
  This assignment can be made because of Lemma~\ref{thm:treedecompclique}.
  For some $t\in T$, let $a^t_1,a^t_2,\ldots,a^t_k$ be all rows that are assigned to $t$.
  Let's replace node $t$ of $T$ with a path $t_1,t_2\ldots,t_k$, and let 
  $\mu(t_j)= \chi(t)\cup \{a^t_j\}$ for $j=1,\ldots,k$.
  The nodes previously connected to $t$ can be linked to any of the new nodes.
  By doing this for every $t\in T$, we obtain a tree decomposition $(T,\mu)$ of $G$ of width $\omega_X +1$.

  On the other hand, for a fixed $\omega$ the treewidth of $G^X$ is unbounded.
  For instance, consider the matrix $M$ whose only nonzero entries are: $M_{a_i,x_i}=1$ (diagonal) and $M_{a_1,x_i}=1$ (first row) for all $i$.
  In this case $G^X$ is the complete graph ($\omega_X = n-1$) and $G$ is a tree ($\omega = 1$).
\end{exmp}

The reason why we consider the graph $G^X$ is that we can give a very simple algorithm for the permanent based on it.
We present this algorithm in Section~\ref{s:treepermanent}, and then extend it to $G$ in Section~\ref{s:bipartite}.

To conclude this section, let's see how the three graphs $G,G^s,G^X$ can capture the special case of a band matrix.

\begin{exmp}[Band matrix]
  Let $w_1,w_2\in \Z_{>0}$ and let $M$ be such that $M_{a_i,x_j}=0$ if either $i-j > w_1$ or $j-i> w_2$.
  Let $T$ be a path $t_1,t_2,t_3,\ldots$.
  Optimal decompositions $(T,\chi), (T,\iota), (T,\mu)$ for the graphs $G^X,G^s,G$ respectively, are given by 
  \begin{gather*}
  \chi(t_i) := \{x_{i},x_{i+1},\ldots, x_{i+w_1+w_2}\}, \qquad\qquad
  \iota(t_i) := \{i,i+1,\ldots, i+\max\{w_1,w_2\}\}, \\
  \mu(t_{2i-1}):= \{a_{i+w_1},x_{i},x_{i+1},\ldots, x_{i+w_1+w_2-1}\}, \quad
  \mu(t_{2i}):= \{a_{i+w_1},x_{i+1},x_{i+1},\ldots, x_{i+w_1+w_2}\} 
  \end{gather*}
  The widths of these decompositions are $\omega= \omega_X=w_1+w_2$ and $\omega_s = \max\{w_1,w_2\}$.
\end{exmp}



\section{Column decompositions}\label{s:treepermanent}

\begin{notation}
  For sets $Y,Y_1,Y_2$, let $Y = Y_1\sqcup Y_2$ denote a set partition, i.e., $Y = Y_1\cup Y_2$ and $Y_1\cap Y_2 = \emptyset$.
\end{notation}

In this section, we develop an algorithm to compute the permanent based on a tree decomposition of the column graph $G^X$ (see Definition~\ref{defn:graphrow}).
For this section only, we denote the treewidth of $G^X$ by $\omega$ .
We will show that we can compute $\perm(M)$ in $\widetilde{O}(n\,4^{\omega})$.
Recall that the notation $\widetilde{O}$ ignores polynomial factors in $\omega$.

As before, $A$ denotes the row set and $X$ the column set.
We use subindices to index the coordinates of $M$, i.e., $M_{a,x}$ denotes the element in row $a$ and column $x$.
The following example illustrates the methodology we follow.

\begin{exmp}\label{exmp:blockmatrix}
  \begin{figure}[htb]
  \begin{minipage}{.6\textwidth}
    {\footnotesize \begin{align*}
    M =\begin{bmatrix}
M_{a_1,x_1} & 0 & M_{a_1,x_3} & M_{a_1,x_4}& 0\\ 
M_{a_2,x_1} & 0 & M_{a_2,x_3} & M_{a_2,x_4}& 0\\ 
0 & M_{a_3,x_2} & M_{a_3,x_3} & M_{a_3,x_4}& 0\\ 
0 & M_{a_4,x_2} & M_{a_4,x_3} & 0& M_{a_4,x_5}\\
0 & M_{a_5,x_2} & M_{a_5,x_3} & 0& M_{a_5,x_5}
    \end{bmatrix}
  \end{align*}}
  \end{minipage}%
  \begin{minipage}{.4\textwidth}
  \centering
  \includegraphics[scale=0.45]{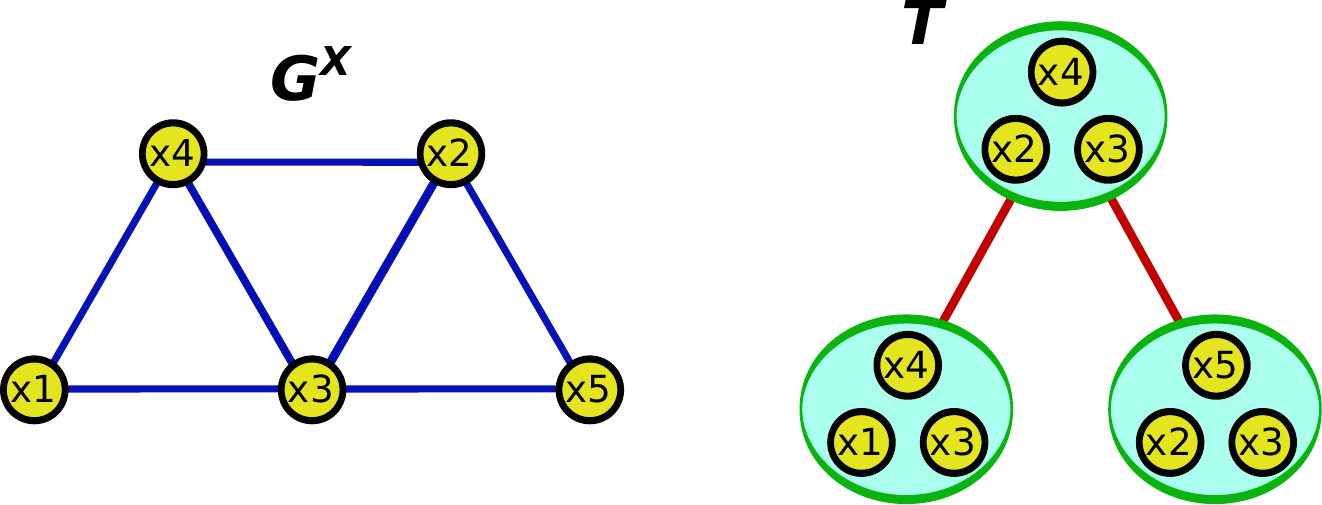}
  \end{minipage}
  \vspace{-10 pt}
  \caption{Matrix $M$, its column graph $G^X$ and a tree decomposition $T$.}
  \label{fig:triangles}
  \end{figure}
  Consider the $5\times 5$ matrix $M$ of Figure~\ref{fig:triangles} and the following partition of its rows:
  \begin{align*}
    A = \{a_1,a_2\} \sqcup \{a_3\} \sqcup \{a_4,a_5\}.
  \end{align*}
  There is a simple expansion of $\perm(M)$ in terms of this partition:
  \begin{equation*}
  \begin{aligned}
    \perm(M) &= 
    \perm\left(
\begin{bmatrix} M_{a_1,x_1} & M_{a_1,x_3}\\ M_{a_2,x_1} & M_{a_2,x_3} \end{bmatrix}
    \right)
    \perm\left(
\begin{bmatrix} M_{a_3,x_4} \end{bmatrix}
    \right)
    \perm\left(
\begin{bmatrix} M_{a_4,x_2} & M_{a_4,x_5}\\ M_{a_5,x_2} & M_{a_5,x_5} \end{bmatrix}
    \right)\\
    &+\perm\left(
\begin{bmatrix} M_{a_1,x_1} & M_{a_1,x_4}\\ M_{a_2,x_1} & M_{a_2,x_4} \end{bmatrix}
    \right)
    \perm\left(
\begin{bmatrix} M_{a_3,x_3} \end{bmatrix}
    \right)
    \perm\left(
\begin{bmatrix} M_{a_4,x_2} & M_{a_4,x_5}\\ M_{a_5,x_2} & M_{a_5,x_5} \end{bmatrix}
    \right)\\
    &+\perm\left(
\begin{bmatrix} M_{a_1,x_1} & M_{a_1,x_4}\\ M_{a_2,x_1} & M_{a_2,x_4} \end{bmatrix}
    \right)
    \perm\left(
\begin{bmatrix} M_{a_3,x_2} \end{bmatrix}
    \right)
    \perm\left(
\begin{bmatrix} M_{a_4,x_3} & M_{a_4,x_5}\\ M_{a_5,x_3} & M_{a_5,x_5} \end{bmatrix}
    \right).
  \end{aligned}
\end{equation*}
  This expansion implies that to compute $\perm(M)$ we just need to evaluate two $2\times 2$ permanents corresponding to $\{a_1,a_2\}$, three $1\times 1$ permanents corresponding to $\{a_3\}$, and two $2\times 2$ permanents corresponding to $\{a_4,a_5\}$.
  This requires only $14$ multiplications, compared to $4\times 5! = 480$ multiplications using the definition.
  The reason why this formula exists is because the column graph $G^X$ of matrix $M$ has a simple tree decomposition, which is shown in Figure~\ref{fig:triangles}.
\end{exmp}

As in the example above, we can always obtain an expansion of $\perm(M)$ using a tree decomposition of graph $G^X$.
By carefully evaluating this formula we will obtain a dynamic programming method to compute $\perm(M)$.

\subsection{Partial permanent}\label{s:partialperm}

In our notation, the permanent of $M$ can be expressed as
\begin{align*}
  \perm(M) = \sum_\pi \prod_{a \in A} M_{a,\pi(a)}
\end{align*}
where the sum is over all bijections $\pi: A \to X$.
For a given row $a$, let $X(a)$ denote the column coordinates where it is nonzero.
We will refer to a bijection $\pi$ as a \emph{matching} if $\pi(a)\in X(a)$, i.e., $M_{a,\pi(a)}\neq 0$, for all $a\in A$.
Then we can restrict the above sum to be over all matchings.

We consider a tree decomposition $(T,\chi)$ of the column graph $G^X$.
Note that by construction of $G^X$ then $X(a)$ is a clique for any $a\in A$.
Thus, Lemma~\ref{thm:treedecompclique} says that we can assign each row $a\in A$ to some node $t$, such that $X(a)\subset \chi(t)$.
From now, we \emph{fix an assignment} of each $a$ to a unique node.
Let $A_{t}$ denote the rows that are assigned to node $t$.
Thus, we have the partition $A = \bigsqcup_{t\in T}A_{t}$.

For a node $t\in T$, we denote the subtree of $T$ rooted in $t$ by $T_t$.
Let $\chi(T_t)$ be the union of $\chi(t')$ over all $t'\in T_t$, and similarly let $A_{T_t}$ be the union of $A_{t'}$ over all $t'\in T_t$.
For instance, for the root node we have $A_{T_{\mathit{root}}} = A$ and $\chi(T_{\mathit{root}}) = X$.

For a fixed matrix $M$ and some sets $D\subset A$ and $Y\subset X$ we denote 
\begin{align}\label{eq:defnpartialperm}
  \myperm(D,Y) := \sum_{\pi} \prod_{a \in D} M_{a,\pi(a)} 
\end{align}
where the sum is over all matchings $\pi: D\to Y$.
Equivalently, it is the permanent of the submatrix of $M$ corresponding to such rows and columns.
Clearly, this only makes sense if $\card{D}=\card{Y}$, and otherwise we can define $\myperm(D,Y)=0$.
We refer to $\myperm(D,Y)$ as a \emph{partial permanent}.

\subsection{Decomposition algorithm for the permanent}

Algorithm~\ref{alg:colsperm} presents our dynamic programming method to compute $\perm(M)$.
We will explain and derive this algorithm in the following sections.
For each node $t$ of the tree, the algorithm computes a table $P_t$ indexed by subsets $\bar{Y}$ of $\chi(t)$.
It starts from the leaves of the tree and recursively computes the tables of all nodes following a topological ordering.
The permanent of $M$ is found in the table corresponding to the root.

\begin{algorithm}
  \caption{Permanent with column decomposition}
  \label{alg:colsperm}
  \begin{algorithmic}[1]
    \Require{Matrix $M$ and tree decomposition $(T,\chi)$ of column graph $G^X(M)$}
    \Ensure{Permanent of $M$}
    \Procedure{ColsPerm}{$M,T,\chi$}
    \State assign each $a\in A$ to some $t$ with $X(a)\subset \chi(t)$
    \State $\mathit{order} := $ topological ordering of $T$ starting from its leaves
    \For {$t$ in $\mathit{order}$}
    \State $Q_t:=$ \Call{SubPerms}{$t,M$}
    \If{$t$ is a leaf}
    \State $P_t := Q_t$
    \Else
    \State ${c_1},\ldots,{c_k}:=$ children of $t$
    \State $P_t := $\Call{EvalRecursion}{$t,Q_t,P_{c_1},\ldots,P_{c_k}$}
    \EndIf
    \EndFor
    \State \Return $P_{\mathit{root}}(\chi(\mathit{root}))$
    \EndProcedure
    \Procedure{SubPerms}{$t,M$}
    \State $A_t:=$ rows assigned to node $t$
    \State $Q_t(\bar{Y}) := \myperm(A_t,\bar{Y})$ \; \algorithmicforall\ $ \bar{Y}\subset\chi(t)$
    \EndProcedure
    \Procedure{EvalRecursion}{$t,Q_t,P_{c_1},\ldots,P_{c_k}$}
    \For {${c_j}$ child of $t$}
    \State $\Delta_j:= \chi(c_j)\setminus \chi(t)$,\quad $\Lambda_j:= \chi(c_j)\cap \chi(t)$
    \State $Q_{c_j}(\bar{Y}) := P_{c_j}(\bar{Y}\cup \Delta_j)$ \; \algorithmicforall\ $ \bar{Y}\subset \Lambda_j$
    \EndFor
    \State $P_t:=$ \Call{SubsetConvolution}{$Q_t,Q_{c_1},\ldots,Q_{c_k}$}
    \EndProcedure
    \Procedure{SubsetConvolution}{$P_0,P_1,\ldots,P_k$}
    \State $P(\bar{Y}) := \displaystyle\sum\limits_{\bar{Y}_0\sqcup\cdots\sqcup \bar{Y}_k= \bar{Y}}P_0(\bar{Y}_0) \,P_1(\bar{Y}_1)\cdots P_k(\bar{Y}_k)$  
    \EndProcedure
  \end{algorithmic}
\end{algorithm}

Algorithm~\ref{alg:colsperm} has two main routines: 
\begin{itemize}
  \item For any node $t$, \Call{SubPerms}{} computes a table $Q_t$ with the permanents of all submatrices corresponding to $t$.
    (See Section~\ref{s:subperms}).
  \item \Call{EvalRecursion}{} computes table $P_t$ of an internal node $t$, by combining table $Q_t$ with the tables $P_{c_1},\ldots,P_{c_k}$ of the node's children.
    (See Section~\ref{s:colsrecursion}).
\end{itemize}

The values $P_t(\bar{Y})$ that we compute correspond to a partial permanent of the matrix, as we explain now.
Consider the collection
  \begin{align*}
    S := \{Y:\chi(T_t) \setminus  \chi(t)\subset Y \subset \chi(T_t)\}.
  \end{align*}
Observe that $Y\in S$ is completely determined by $Y\cap \chi(t)$.
Therefore, if we let $\bar{Y}:=Y\cap \chi(t)$, there is a one to one correspondence between $S$, and the collection $\bar{S} := \{\bar{Y}: \bar{Y}\subset \chi(t)\}$.
Then the partial permanents that we are interested in are
\begin{align*}
  P_t(\bar{Y}) := \myperm(A_{T_t},Y) = \myperm(A_{T_t},\bar{Y}\cup (\chi(T_t)\setminus \chi(t))).
\end{align*}
The reason why we index table $P_t$ with $\bar{Y}$ instead of $Y$, is that in this way it becomes clearer that the recursion formula is actually a subset convolution.

Observe that the permanent of $M$ is indeed computed in Algorithm~\ref{alg:colsperm}, as for the root node we have $P_{\mathit{root}}(\chi(\mathit{root}))=\myperm(A,X)=\perm(M)$.
Also note that for a leaf node we have $P_t(\bar{Y})= \myperm(A_{t},\bar{Y})= Q_t(\bar{Y})$.

\begin{exmp}
  Consider the matrix $M$ and tree decomposition $T$ of Figure~\ref{fig:triangles}.
  Let $t_1,t_2,t_3$ be the nodes of $T$, where the central node $t_2$ is the root.
  We show the tables computed by Algorithm~\ref{alg:colsperm}.
  The tables $Q_t$ with the permanents of all submatrices are:
  \begin{align*}
    Q_{t_1}(\{x,y\}) &= \myperm(\{a_1,a_2\},\{x,y\} ),  &\mbox{for }x,y\in \chi(t_1) = \{x_1,x_3,x_4\}\\
    Q_{t_3}(\{x,y\}) &= \myperm(\{a_4,a_5\},\{x,y\} ),  &\mbox{for }x,y\in \chi(t_3) = \{x_2,x_3,x_5\}\\
    Q_{t_2}(\{x\})   &= \myperm(\{a_3\}    ,\{x\}   ),  &\mbox{for }x  \in \chi(t_2) = \{x_2,x_3,x_4\}
  \end{align*}
  We now show the final tables $P_t$ for each node.
  For the leaves $t_1,t_3$ we have $P_{t_1} = Q_{t_1}$, $P_{t_3}=Q_{t_3}$.
  As for the root $t_2$, the recursion is:
  \begin{multline*}
    P_{t_2}(\{x_2,x_3,x_4\}) = Q_{t_2}(\{x_4\})P_{t_1}(\{x_1,x_3\})P_{t_3}(\{x_2,x_5\}) \,+\\ Q_{t_2}(\{x_3\})P_{t_1}(\{x_1,x_4\})P_{t_3}(\{x_2,x_5\})  + Q_{t_2}(\{x_2\})P_{t_1}(\{x_1,x_4\})P_{t_3}(\{x_3,x_5\}).
  \end{multline*}
  Note that this recursion matches the permanent expansion in Example~\ref{exmp:blockmatrix}.
\end{exmp}

In the following sections we explain the two main routines of Algorithm~\ref{alg:colsperm}, i.e., \Call{SubPerms}{} and \Call{EvalRecursion}{}, obtaining complexity bounds for them.

\subsection{Permanent of all submatrices}\label{s:subperms}

Let $M_0$ be a rectangular matrix with row set $A_0$ and column set $X_0$.
As a part of our algorithm, which can be seen as the base case, we require a good method to compute the permanents of all submatrices of $M_0$.
In other words, we want to obtain the partial permanents $\myperm(D,Y)$ for all pairs $(D,Y)$.
We can do this in a very simple way using an expansion by minors.
The following lemma explains such procedure and gives its running time.

\begin{lem}\label{thm:subperms}
  Let $M_0$ be a matrix of dimensions $n_1\times n_2$.
  Let $A_0$ denote its row set, $X_0$ its column set and let $S=\{(D,Y)\subset A_0\times X_0:\, \card{D}=\card{Y}\}$.
  We can compute $\myperm(D,Y)$ for all $(D,Y)\in S$ in $O(n_{max}^2\,2^{n_1+n_2})$, where $n_{max}=\max\{n_1,n_2\}$.
\end{lem}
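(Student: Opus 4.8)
The plan is to compute every partial permanent by a dynamic program built on the Laplace-type (minor) expansion of the permanent, processing the row subsets $D$ in order of increasing cardinality so that each recursive call refers only to values already in the table.

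First I would record the base case $\myperm(\emptyset,\emptyset)=1$ together with the recursion. Fix once and for all a linear order on $A_0$, and for a nonempty $D$ let $a$ be its least element. Expanding along row $a$ gives
\begin{align*}
  \myperm(D,Y) = \sum_{x\in Y} M_{a,x}\,\myperm(D\setminus\{a\},\,Y\setminus\{x\}),
\end{align*}
valid for every $Y\subset X_0$ with $\card{Y}=\card{D}$. This is the cofactor expansion, but without signs since the permanent carries no $\sgn$ factor; it follows immediately from the definition $\myperm(D,Y)=\sum_\pi\prod_{a'\in D}M_{a',\pi(a')}$ by partitioning the matchings according to the value $\pi(a)=x$. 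Every term on the right has $\card{D\setminus\{a\}}=\card{Y\setminus\{x\}}=\card{D}-1$, so the recursion only queries entries whose row set is strictly smaller; hence a single pass over the subsets $D$ in increasing size fills the table correctly.

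Next I would bound the work. I would represent each $D\subset A_0$ and $Y\subset X_0$ as a bit-vector and store $\myperm(D,Y)$ in a table indexed by $(D,Y)\in S$. The number of such pairs is $\sum_{k}\binom{n_1}{k}\binom{n_2}{k}\le 2^{n_1}2^{n_2}=2^{n_1+n_2}$. For each pair the recursion is a sum over $x\in Y$ with at most $n_{max}$ terms, each contributing one multiplication, one addition, and a constant number of subset operations (clearing one bit and a lookup). Charging $O(n_{max})$ for scanning and indexing the bit-vectors, each entry costs $O(n_{max}^2)$, and multiplying by the number of entries yields the claimed $O(n_{max}^2\,2^{n_1+n_2})$; in the pure arithmetic model the count is in fact the sharper $O(n_{max}\,2^{n_1+n_2})$, so the stated bound holds comfortably.

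The step that requires the most care is the accounting rather than the mathematics: one must ensure each entry is evaluated exactly once and in an order consistent with the partial order on $\card{D}$, and one must confirm that the total number of admissible pairs is genuinely $O(2^{n_1+n_2})$ and not larger. The correctness of the unsigned expansion is routine, so no real obstacle arises there, and the extra $n_{max}$ factor in the stated bound is simply the cost of manipulating subsets, which is harmless under the $\widetilde{O}$ conventions used elsewhere in the paper.
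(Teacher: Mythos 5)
Your proposal is correct and follows essentially the same route as the paper: an unsigned expansion by minors along the first element of $D$, filling the table of pairs $(D,Y)$ in order of increasing cardinality, with the same counting of $O(2^{n_1+n_2})$ entries and $O(n_{max}^2)$ work per entry. No substantive differences.
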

\begin{proof}
  Let
  $S_i=\{(D,Y):\, \card{D}=\card{Y}= i\}$
  for $1\leq i\leq \min\{n_1,n_2\}$.
  We use an expansion by minors to compute $\myperm(D,Y)$ for $(D,Y)\in S_i$, using the values of $S_{i-1}$.
  Let $a_0$ be the first element in $D$, then
  \begin{align*}
    \myperm(D,Y) = \sum_{x\in Y}\, M_{a_0,x}\;\myperm(D\setminus a_0, Y\setminus x).
  \end{align*}
  Thus, for each $(D,Y)$, we loop over at most $n_2$ elements, and for each we need $O(n_{max})$ to find the sets $D\setminus a_0$ and $Y\setminus x$.
  The result follows.
\end{proof}

\subsection{Recursion formula}\label{s:colsrecursion}

The heart of Algorithm~\ref{alg:colsperm} is given by the recursion formula used, i.e., the procedure to obtain table $P_t$ of node $t$ from the tables of its children.
This recursion formula is given in the following lemma.

\begin{lem}\label{thm:recursepermcols}
  Let $M$ be a matrix with associated column graph $G^X$. 
  Let $(T,\chi)$ be a tree decomposition of $G^X$.
  Let $t$ be an internal node of $T$, and let $Y$ be such that 
  \begin{align}
     \chi(T_t) \setminus  \chi(t)\subset Y \subset \chi(T_t), \;\;\;\;\;\;\;
     \card{Y} = \card{A_{T_t}}
    \label{eq:validrangeR}
  \end{align}
  Let ${c_1},\ldots,{c_k}$ be the children of $t$. Then 
  \begin{align}\label{eq:recursepermcols}
    \myperm(A_{T_t}, Y) &= \sum_{\mathcal{Y}} \myperm(A_{t},Y_t) \prod_{j=1}^k \myperm(A_{T_{c_j}}, Y_{c_j} ) 
  \end{align}
  where $\myperm(\cdot,\cdot)$ is as in \eqref{eq:defnpartialperm} and the sum is over all $\mathcal{Y} = (Y_t,Y_{c_1},\ldots,Y_{c_k})$ such that:
  \begin{subequations} \label{eq:partitionrangepi}
  \begin{align}\label{eq:partitionrangepi1}
   Y &=  Y_t \sqcup (Y_{c_1} \sqcup \cdots \sqcup Y_{c_k}) \\
   \chi(T_{c_j}) \setminus  \chi(t) &\subset Y_{c_j} \subset \chi(T_{c_j})  \;\;\;\;\;\;\;\;\;\;\;\;\;
   Y_t\subset \chi(t).
  \end{align}
  \end{subequations}
\end{lem}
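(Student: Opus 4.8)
The plan is to read both sides of \eqref{eq:recursepermcols} as weighted counts of matchings and to exhibit a weight-preserving bijection between the matchings summed on the left and the tuples of block matchings summed on the right. The structural fact I would record first is that the rows of the subtree split as $A_{T_t} = A_t \sqcup A_{T_{c_1}} \sqcup \cdots \sqcup A_{T_{c_k}}$: the subtree $T_t$ is the disjoint union of the node $t$ and the subtrees $T_{c_1},\ldots,T_{c_k}$, and each row is assigned to a single node, so the induced row partition is exactly this one. The identity will then follow by grouping matchings $\pi\colon A_{T_t}\to Y$ according to how they restrict to the blocks of this partition.

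Given a matching $\pi\colon A_{T_t}\to Y$, I would set $Y_t := \pi(A_t)$ and $Y_{c_j} := \pi(A_{T_{c_j}})$. Since $\pi$ is a bijection onto $Y$ with disjoint domains on the blocks, the images partition $Y$, giving $Y = Y_t \sqcup Y_{c_1}\sqcup\cdots\sqcup Y_{c_k}$ as in \eqref{eq:partitionrangepi1}; moreover the weight factors as $\prod_{a\in A_{T_t}} M_{a,\pi(a)} = \bigl(\prod_{a\in A_t} M_{a,\pi_t(a)}\bigr)\prod_{j=1}^{k}\bigl(\prod_{a\in A_{T_{c_j}}} M_{a,\pi_{c_j}(a)}\bigr)$, where $\pi_t,\pi_{c_j}$ are the restrictions of $\pi$. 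Conversely, any tuple $\mathcal{Y}$ satisfying \eqref{eq:partitionrangepi} together with block matchings $\pi_t,\pi_{c_1},\ldots,\pi_{c_k}$ glues to a single bijection $A_{T_t}\to Y$ that respects the support condition $\pi(a)\in X(a)$, hence is a matching; these two operations are mutually inverse. Summing the factored weight over each block independently turns the product into $\myperm(A_t,Y_t)\prod_{j}\myperm(A_{T_{c_j}},Y_{c_j})$, which is the right-hand side, provided the constraints in \eqref{eq:partitionrangepi} are precisely the ones induced by this correspondence.

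Verifying those constraints is the step I expect to require the most care, and it is where the tree-decomposition axioms are used. The two upper containments are immediate: every $a\in A_t$ has $X(a)\subset \chi(t)$, so $\pi(a)\in\chi(t)$ and $Y_t\subset\chi(t)$; and every $a\in A_{T_{c_j}}$ is assigned to a node of $T_{c_j}$, forcing $\pi(a)\in\chi(T_{c_j})$ and $Y_{c_j}\subset\chi(T_{c_j})$. The delicate containment is the lower bound $\chi(T_{c_j})\setminus\chi(t)\subset Y_{c_j}$. For $x\in\chi(T_{c_j})\setminus\chi(t)$, the range hypothesis \eqref{eq:validrangeR} gives $x\in\chi(T_t)\setminus\chi(t)\subset Y$, so some row $a\in A_{T_t}$ satisfies $\pi(a)=x$, and I must show $a\in A_{T_{c_j}}$. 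Letting $t_a$ be the node to which $a$ is assigned, we have $x\in X(a)\subset\chi(t_a)$. By axiom (iii) the nodes whose bags contain $x$ form a connected subtree $T_x$, which meets $T_{c_j}$ (as $x\in\chi(T_{c_j})$) but avoids $t$ (as $x\notin\chi(t)$). Since the edge $(t,c_j)$ is the only link between $T_{c_j}$ and the rest of $T$, connectivity of $T_x$ combined with $t\notin T_x$ forces $T_x\subset T_{c_j}$; hence $t_a\in T_x\subset T_{c_j}$, so $a\in A_{T_{c_j}}$ and $x\in Y_{c_j}$.

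With the constraints pinned down, the bijection is complete: the forward map sends each matching to a valid tuple $\mathcal{Y}$ (with block matchings), the backward map reassembles it, and the weight is preserved. Grouping the sum defining $\myperm(A_{T_t},Y)$ by the induced tuple and applying the factorization then yields \eqref{eq:recursepermcols} term by term; cardinality mismatches are harmless, since any block with $\card{Y_{c_j}}\neq\card{A_{T_{c_j}}}$ contributes a zero partial permanent on both sides. The only genuine obstacle is the subtree argument establishing $T_x\subset T_{c_j}$; the remainder is bookkeeping of the row partition and the multiplicative structure of the matching weights.
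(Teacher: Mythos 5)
Your proposal is correct and follows essentially the same route as the paper's proof: partition the rows as $A_{T_t}=A_t\sqcup A_{T_{c_1}}\sqcup\cdots\sqcup A_{T_{c_k}}$, restrict/glue matchings block by block, and use the subtree-connectivity axiom to establish the key lower containment $\chi(T_{c_j})\setminus\chi(t)\subset Y_{c_j}$. Your justification of that containment via the connected subtree $T_x$ avoiding $t$ is just a slightly more explicit version of the paper's observation that $x\notin\chi(t)$ excludes $x$ from the bags of every other child's subtree.
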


\begin{proof}
  Observe that $A_{T_t}$ can be partitioned as 
  \begin{align*}
    A_{T_t} = A_{t}\sqcup (A_{T_{c_1}} \sqcup \cdots \sqcup A_{T_{c_k}}).
  \end{align*}
  Let $\pi:A_{T_t}\to Y$ be a matching.
  Let $c$ be a child of $t$ and let $\pi_c: A_{T_c} \to Y$ be the restriction of $\pi$ to $A_{T_c}$.
  Let $Y_t := \pi(A_{t}) \subset \chi(t)$ be the range of $\pi$ restricted to $A_{t}$, and let $Y_c$ be the range of $\pi_c$.
  As $\pi$ is injective, then equation~\eqref{eq:partitionrangepi1} holds.
  Observe also that $Y_c=\pi(A_{T_c})\subset \chi(T_c)$.
  Note now that if $x \in \chi(T_c) \setminus  \chi(t)$, then it is in the range of $\pi$.
  However, as $x\notin \chi(t)$ then $x\notin \chi(T_{c'})$ for any other child ${c'}$, and thus $x$ has to be in the range of $\pi_c$.
  Thus, the range of $\pi_c$, i.e., $Y_c$, contains $\chi(T_c)\setminus \chi(t)$.

  Therefore, for any  matching $\pi:A_{T_t}\to Y$ and for any child $c$, $\pi$ induces a matching from $A_{T_c}$ to some $Y_c$ that satisfy equations~\eqref{eq:partitionrangepi}.
  On the other hand, given $Y_t,Y_{c_1},\ldots$ satisfying~\eqref{eq:partitionrangepi}  and matchings $\pi_t,\pi_{c_1},\ldots$ on $A_{t},A_{T_{c_1}},\ldots$ with such ranges, we can merge them into a function on $A_{T_t}$.
  Observe that \eqref{eq:partitionrangepi} ensures that the ranges of these matchings are disjoint and their union is $Y$.
  We conclude that
  \begin{align*}
    \myperm(A_{T_t},Y) 
    &= \sum_{\pi:A_{T_t}\to Y}\prod_{a}M_{a,\pi(a)}\\
    &= \sum_{\mathcal{Y}}\sum_{\substack{\pi_t: A_{t}\to Y_t\\ \pi_{c}:A_{T_c}\to Y_c}}\left(\prod_{a_t}M_{a_t,\pi_t(a_t)}\right)\prod_{c_j}\left(\prod_{a_{c}}M_{a_{c},\pi_{c_j}(a_c)}\right)\\
    &= \sum_{\mathcal{Y}} \myperm(A_{t},Y_t) \prod_{c_j} \myperm(A_{T_{c_j}}, Y_{c_j} ).
  \end{align*}
\end{proof}

At first sight, the recursion of equation~\eqref{eq:recursepermcols} looks difficult to evaluate.
It turns out that this formula is a subset convolution and thus it can be computed efficiently using the algorithm from~\cite{Bjoerklund2007}, as explained in the following lemma.
We follow this approach in method \Call{EvalRecursion}{} of Algorithm~\ref{alg:colsperm}.

\begin{lem}\label{thm:countpartitions}
  Given the values of the partial permanents $\myperm(A_t,Y_t)$ and $\myperm(A_{T_{c_j}},Y_{c_j})$, we can evaluate equation~\eqref{eq:recursepermcols} for all $Y$ satisfying~\eqref{eq:validrangeR} in $\widetilde{O}(k\,2^{\omega})$.
\end{lem}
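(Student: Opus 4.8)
The plan is to recognize the recursion~\eqref{eq:recursepermcols} as a subset convolution of $k+1$ tables over the \emph{single} bag $\chi(t)$, and then invoke the fast subset convolution algorithm of~\cite{Bjoerklund2007}, applied $k$ times. The first task is to re-index each child term so that it depends on a subset of $\chi(t)$ rather than of $\chi(T_{c_j})$; this is exactly what method \Call{EvalRecursion}{} does, and the bulk of the argument is to justify that re-indexing.

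\textbf{Re-indexing.} For each child $c_j$ write $\Delta_j:=\chi(c_j)\setminus\chi(t)$ and $\Lambda_j:=\chi(c_j)\cap\chi(t)$. By the running intersection property (condition~iii of Definition~\ref{defn:treedecomposition}), any vertex lying in both $\chi(T_{c_j})$ and $\chi(t)$ must also lie in the intermediate bag $\chi(c_j)$, so $\chi(T_{c_j})\cap\chi(t)=\Lambda_j\subset\chi(c_j)$. Together with the constraint $\chi(T_{c_j})\setminus\chi(t)\subset Y_{c_j}\subset\chi(T_{c_j})$ from~\eqref{eq:partitionrangepi}, this gives $Y_{c_j}\setminus\chi(t)=\chi(T_{c_j})\setminus\chi(t)$; that is, the part of $Y_{c_j}$ outside $\chi(t)$ is forced, and $Y_{c_j}$ is determined by its free part $\bar Y_{c_j}:=Y_{c_j}\cap\chi(t)\subset\Lambda_j$. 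One then checks $Y_{c_j}\cap\chi(c_j)=\bar Y_{c_j}\cup\Delta_j$, so that $\myperm(A_{T_{c_j}},Y_{c_j})=P_{c_j}(\bar Y_{c_j}\cup\Delta_j)=Q_{c_j}(\bar Y_{c_j})$, matching the table built in \Call{EvalRecursion}{}. Moreover the forced sets $\chi(T_{c_j})\setminus\chi(t)$ are pairwise disjoint (a vertex shared by two distinct child subtrees would lie on a path through $t$, hence in $\chi(t)$) and their union is $Y\setminus\chi(t)$. Consequently, writing $\bar Y:=Y\cap\chi(t)$, the conditions~\eqref{eq:partitionrangepi} are equivalent to the single disjoint-union condition $\bar Y=Y_t\sqcup\bar Y_{c_1}\sqcup\cdots\sqcup\bar Y_{c_k}$ with $Y_t\subset\chi(t)$ and $\bar Y_{c_j}\subset\Lambda_j$.

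\textbf{Convolution and complexity.} Extending each $Q_{c_j}$ by zero on subsets of $\chi(t)$ not contained in $\Lambda_j$, the previous paragraph turns~\eqref{eq:recursepermcols} into exactly the $(k+1)$-fold subset convolution $P_t(\bar Y)=\sum_{\bar Y=Y_t\sqcup\bar Y_{c_1}\sqcup\cdots\sqcup\bar Y_{c_k}}Q_t(Y_t)\prod_j Q_{c_j}(\bar Y_{c_j})$ over the ground set $\chi(t)$, which has $\card{\chi(t)}\leq\omega+1$ elements. I would evaluate it by folding the children in one at a time, performing $k$ successive ordinary (two-argument) subset convolutions; each such convolution over a ground set of size $m$ costs $O(m^2\,2^m)$ ring operations by~\cite{Bjoerklund2007}, i.e.\ $\widetilde{O}(2^{\omega})$ here. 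Summing over the $k$ steps gives the claimed $\widetilde{O}(k\,2^{\omega})$ bound. (The cardinality constraint $\card{Y}=\card{A_{T_t}}$ in~\eqref{eq:validrangeR} needs no special treatment: the convolution simply produces $P_t(\bar Y)$ for every $\bar Y\subset\chi(t)$, and terms of mismatched size vanish because $\myperm(D,Y)=0$ unless $\card{D}=\card{Y}$.)

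\textbf{Main obstacle.} The genuinely delicate point is the re-indexing, not the complexity estimate: one must be certain that the columns forced into each child (those in $\chi(T_{c_j})\setminus\chi(t)$) are pairwise disjoint, are completely determined by $Y$, and are already accounted for in the child tables $P_{c_j}$. All three facts rest on the running intersection property, and once they are in place the reduction to a standard subset convolution — and hence the running time — is immediate.
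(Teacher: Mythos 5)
Your proposal is correct and follows essentially the same route as the paper: re-index each child term by its trace on $\chi(t)$ (the part outside $\chi(t)$ being forced to $\chi(T_{c_j})\setminus\chi(t)$), recognize \eqref{eq:recursepermcols} as a subset convolution over $\chi(t)$, and apply the fast subset convolution algorithm of~\cite{Bjoerklund2007} in $O(k\,w^2\,2^{w})$ time. The only difference is that you spell out the running-intersection justification for the re-indexing and the fold-one-child-at-a-time evaluation, which the paper leaves implicit.
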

\begin{proof}
  Let $\bar{Y}:= Y\cap \chi(t)$, $\bar{Y}_t:=Y_t$, $\bar{Y}_{c_j}:=Y_{c_j}\cap \chi(t)$, and let
  \begin{gather*}
    \Delta_{t}:=\chi(T_{t})\setminus\chi(t) \qquad \Delta_{c_j}:=\chi(T_{c_j})\setminus\chi(t)\\
    P_t(\bar{Y}) := \myperm(A_{T_t},\bar{Y}\cup\Delta_t),\quad 
    Q_t(\bar{Y}_t) := \myperm(A_t,\bar{Y}_t),\quad
    Q_{c_j}(\bar{Y}_{c_j}) := \myperm(A_{T_{c_j}},\bar{Y}_{c_j}\cup \Delta_{c_j})
  \end{gather*}
  Then equation~\eqref{eq:recursepermcols} can be rewritten as
  \begin{align}\label{eq:recurseconv}
    P_t(\bar{Y}) &= \sum_{\bar{Y}_t\sqcup\bar{Y}_{c_1}\sqcup\cdots\sqcup\bar{Y}_{c_k}=\bar{Y}} Q_t(\bar{Y}_t) \prod_{j=1}^k Q_{c_j}(\bar{Y}_{c_j} ) 
  \end{align}
  where $\bar{Y}_t\subset \chi(t)$ and $\bar{Y}_{c_j} \subset \chi(c_j)\cap \chi(t)$.
  Equation~\eqref{eq:recurseconv} is a subset convolution over the subsets of $\chi(t)$.
  Therefore, it can be evaluated in $O(kw^2\,2^{{w}})$, where $w=|\chi(t)|$, using the algorithm from~\cite{Bjoerklund2007}.
\end{proof}

The following theorem gives the running time of Algorithm~\ref{alg:colsperm}, proving that we can efficiently compute the permanent given a tree decomposition of $G^X$ of small width.

\begin{thm}\label{thm:treewidthpermanent}
  Let $M$ be a matrix with associated column graph $G^X$. 
  Let $(T,\chi)$ be a tree decomposition of $G^X$ of width $\omega$.
  Then we can compute $\perm(M)$ in $\widetilde{O}(n\,4^{\omega})$.
\end{thm}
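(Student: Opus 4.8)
The plan is to bound the total running time of Algorithm~\ref{alg:colsperm} by summing the cost over all nodes of the tree decomposition $(T,\chi)$, and then invoking the earlier complexity lemmas for the two subroutines. Since we may assume $T$ has at most $n$ nodes, and since each row $a\in A$ is assigned to exactly one node, the global bound follows once I control the per-node cost in terms of $\omega$.

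First I would analyze the \Call{SubPerms}{} step. For a node $t$, this computes $\myperm(A_t,\bar Y)$ for all $\bar Y\subset\chi(t)$, i.e. all submatrix permanents of the $|A_t|\times|\chi(t)|$ matrix with rows $A_t$ and columns $\chi(t)$. Both dimensions are at most $w:=|\chi(t)|\le\omega+1$, so Lemma~\ref{thm:subperms} (with $n_1=|A_t|$, $n_2=w$) gives cost $O(w^2\,2^{|A_t|+w})=\widetilde O(2^{|A_t|}\,2^{\omega})$. The factor $2^{|A_t|}$ is the only term that is not uniformly bounded by $2^{\omega}$, and here is where the $4^{\omega}$ (rather than $2^{\omega}$) in the statement comes from: summing $2^{|A_t|}2^{\omega}$ over all nodes, using $\sum_t|A_t|=n$ together with the naive bound $|A_t|\le\omega+1$, yields $\sum_t 2^{|A_t|}2^{\omega}\le n\cdot 2^{\omega}\cdot 2^{\omega}=\widetilde O(n\,4^{\omega})$. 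I would present the summation so that this step is the dominant contributor.

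Next I would handle the \Call{EvalRecursion}{} step. Each call performs the restriction of the children's tables (a relabelling, with negligible cost) followed by one \Call{SubsetConvolution}{} over subsets of $\chi(t)$ with $k+1$ factors, where $k$ is the number of children of $t$. By Lemma~\ref{thm:countpartitions} this costs $\widetilde O(k\,2^{\omega})$. Summing over all internal nodes, $\sum_t k_t$ equals the number of edges of $T$, which is at most $n-1$, so the total contribution of all recursion steps is $\widetilde O(n\,2^{\omega})$. This is dominated by the $\widetilde O(n\,4^{\omega})$ bound from \Call{SubPerms}{}, so combining the two gives the claimed $\widetilde O(n\,4^{\omega})$ overall. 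I would close by noting that $P_{\mathit{root}}(\chi(\mathit{root}))=\perm(M)$, which is already argued in the text preceding the theorem, so correctness is inherited and only the timing needs proving here.

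The step I expect to be the main obstacle — or at least the one requiring the most care — is the bookkeeping in the summation over nodes rather than any single estimate. The subtle point is that the $2^{|A_t|}$ factor in \Call{SubPerms}{} must be charged against the global budget $\sum_t|A_t|=n$ correctly: one cannot simply multiply a per-node bound by $n$ if that per-node bound already hides the $|A_t|$ dependence, so I would be explicit that $\sum_t 2^{|A_t|}\le n\,2^{\omega}$ (using $2^{|A_t|}\le 2^{\omega}$ and $\sum_t 1\le n$, or more carefully that each node contributes at most $2^{\omega+1}$ and there are at most $n$ nodes). Everything else is a routine assembly of the two lemmas, so the write-up is short provided the accounting is stated cleanly.
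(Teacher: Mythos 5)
Your proposal is correct and matches the paper's own proof essentially step for step: the $4^{\omega}$ factor arises from computing all submatrix permanents at each node via Lemma~\ref{thm:subperms} (both dimensions bounded by $|\chi(t)|$, hence $2^{2\omega}$), the recursion contributes $\widetilde{O}(k_t\,2^{\omega})$ per node via Lemma~\ref{thm:countpartitions}, and both are summed over the $O(n)$ nodes of $T$. The only assertion you leave unjustified is $|A_t|\le|\chi(t)|$; the paper supplies the one-line reason, namely that otherwise no $Y$ satisfies the cardinality condition~\eqref{eq:validrangeR} and the relevant partial permanents vanish, so you should add that remark.
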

\begin{proof}
  Let $t$ be some node in $T$.
  We compute $\myperm(A_{T_t},Y)$ for every $Y$ satisfying~\eqref{eq:validrangeR}.
  In particular, we will obtain $\perm(A)= \myperm(A_{T_{\mathit{root}}},\chi(T_{\mathit{root}}))$.
  We will show that for each $t$ we can compute $\myperm(A_{T_t},Y)$ for all $Y$ in $\widetilde{O}((k_t+1)4^{\omega})$, where $k_t$ is the number of children of $t$.
  Note that $\sum_t k_t$ is the number of nodes of tree $T$.
  As the tree has $O(n)$ nodes, the total cost is then $\widetilde{O}(n\,4^{\omega})$, as wanted.

  The base case is when $t$ is a leaf of $T$, so that $A_{T_t} = A_{t}$ and $\chi(T_t) = \chi(t)$.
  Let $M_0$ be the submatrix of $M$ with rows $A_{t}$ and columns $\chi(t)$.
  Then all we have to do is to obtain the permanent of some submatrices of $M_0$.
  Observe that $\card{A_{t}}\leq \card{\chi(t)}\leq \omega$, as otherwise there is no $Y$ satisfying~\eqref{eq:validrangeR}.
  Thus, we can do this in $\widetilde{O}(2^{2\omega})$ using Lemma~\ref{thm:subperms}.

  Assume now that $t$ is an internal node of $T$ with $k_t$ children and let $Y$ that satisfies~\eqref{eq:validrangeR}.
  Then equation~\eqref{eq:recursepermcols} tells us how to find $\myperm(A_{T_t},Y)$.
  Lemma~\ref{thm:countpartitions} says that we can evaluate the formula in $\widetilde{O}(k_t\,2^{\omega})$, assuming we know the values of the terms in the recursion.
  Note that we already found $\myperm(A_{T_c},Y_c)$ for all children and that we can find $\myperm(A_{t},Y_t)$ for all possible $Y_t$ in $\widetilde{O}(2^{2\omega})$ in the same way as for the base case.
  Thus, it takes $\widetilde{O}(4^{\omega}+k_t\,2^{\omega})=\widetilde{O}((k_t+1)4^{\omega})$ to compute $\myperm(A_{T_t},Y)$ for all $Y$.
\end{proof}
\begin{rem}
  The factor $\widetilde{O}(4^\omega)$ in the proof came from the base case.
  This bound can be improved, but we omit this as for the approach of Section~\ref{s:bipartite}, based on the bipartite graph, the bound will be $\widetilde{O}(2^\omega)$.
\end{rem}

\subsection{Computing the determinant}\label{s:treedeterminant}

Given the similarity between permanent and determinant, it should be possible to find an analogous algorithm for the determinant.
We will derive such algorithm in this section.
Ironically, this algorithm is slower than the one for the permanent.
The reason is that the approach we follow does not take advantage of linear algebra: we loop over all permutations (carefully) and then compute its sign.
We remark that our algorithm does not use divisions and thus can be applied in any commutative ring.
The ideas from this section will be used in Section~\ref{s:discriminant} to derive a decomposition algorithm for the mixed discriminant.

\begin{exmp}
  Consider again the matrix $M$ of Figure~\ref{fig:triangles}, and observe that a similar expansion holds for the determinant:
  \begin{align*}
    \det(M) &= 
    \det\left(
\begin{bmatrix} M_{a_1,x_1} & M_{a_1,x_3}\\ M_{a_2,x_1} & M_{a_2,x_3} \end{bmatrix}
    \right)
    \det\left(
\begin{bmatrix} M_{a_3,x_4} \end{bmatrix}
    \right)
    \det\left(
\begin{bmatrix} M_{a_4,x_2} & M_{a_4,x_5}\\ M_{a_5,x_2} & M_{a_5,x_5} \end{bmatrix}
    \right)\\
    &-\det\left(
\begin{bmatrix} M_{a_1,x_1} & M_{a_1,x_4}\\ M_{a_2,x_1} & M_{a_2,x_4} \end{bmatrix}
    \right)
    \det\left(
\begin{bmatrix} M_{a_3,x_3} \end{bmatrix}
    \right)
    \det\left(
\begin{bmatrix} M_{a_4,x_2} & M_{a_4,x_5}\\ M_{a_5,x_2} & M_{a_5,x_5} \end{bmatrix}
    \right)\\
    &+\det\left(
\begin{bmatrix} M_{a_1,x_1} & M_{a_1,x_4}\\ M_{a_2,x_1} & M_{a_2,x_4} \end{bmatrix}
    \right)
    \det\left(
\begin{bmatrix} M_{a_3,x_2} \end{bmatrix}
    \right)
    \det\left(
\begin{bmatrix} M_{a_4,x_3} & M_{a_4,x_5}\\ M_{a_5,x_3} & M_{a_5,x_5} \end{bmatrix}
    \right).
  \end{align*}
As suggested in the above formula, the recursion used to compute the permanent can also be used to compute the determinant, by appropriately selecting the signs.
\end{exmp}

We recall now the definition of the parity function, and we extend it to ordered partitions.

\begin{defn}\label{defn:parityfunction}
Let $D$, $Y$ be ordered sets of the same size.
For a bijection $\pi:D\to Y$ we define its \emph{sign} or \emph{parity} as $\sgn(\pi):=(-1)^{N(\pi)}$, where $N(\pi)$ is its number of inversions:
\begin{align*}
  N(\pi) := \card{\{(a,a')\in D^2:\, a<a',\; \pi(a)>\pi(a')\}}.
\end{align*}
Let $\mathcal{Y}=(Y_1,\ldots,Y_k)$ be an \emph{ordered partition} of $Y$, i.e., $Y = Y_1\sqcup\cdots\sqcup Y_k$.
We define its \emph{sign} to be $\sgn(\mathcal{Y}):=(-1)^{N(\mathcal{Y})}$, where $N(\mathcal{Y})$ is:
\begin{align*}
  N(\mathcal{Y}) := \card{\{(y_i,y_j):\,y_i\in Y_i,\;y_j\in Y_j,\; i<j,\; y_i>y_j\}}.
\end{align*}
Equivalently, we can associate to $\mathcal{Y}$ a permutation $\pi^\mathcal{Y}: \{1,2,\ldots,\card{Y}\}\to Y$ that consists of blocks: we put first $Y_1$ (sorted), then $Y_2$ (sorted), and so on.
Then $\sgn(\mathcal{Y})=\sgn(\pi^{\mathcal{Y}})$.
\end{defn}

From the definition above it is clear that we can obtain the sign of a permutation in $O(n^2)$ by counting the number of inversions.
However, it is well known that we can find it in $O(n)$ by counting its cycles.

For a matrix $M$, there is a natural order for its row set $A$ and column set $X$, namely from top to bottom and from left to right. 
We recall the definition of the determinant:
\begin{align*}
  \det(M) = \sum_\pi \sgn(\pi) \prod_{a \in A} M_{a,\pi(a)}
\end{align*}
where the sum is over all bijections $\pi:A\to X$.
Similarly, for a fixed matrix $M$ and for some $D\subset A$ and $Y\subset X$ we define the partial determinants:
\begin{align}\label{eq:defnpartialdet}
  \mydet(D,Y) := \sum_\pi \sgn(\pi)\prod_{a \in D} M_{a,\pi(a)} 
\end{align}
where the sum is over all bijections $\pi:D\to Y$.
Note that $\det(M)=\mydet(A,X)$.

We now provide a recursion formula similar to the one in Lemma~\ref{thm:recursepermcols}.
We need one lemma before.

\begin{lem}\label{thm:recursionsign}
  Let $D$, $Y$ be ordered sets, and let $\pi: D\to Y$ be a bijection, which we view as a subset of $D\times Y$.
  Let $\mathcal{D}=(D_1,\ldots,D_k)$ and $\mathcal{Y}=(Y_1,\ldots,Y_k)$ be partitions of $D$ and $Y$.
  Let $\pi= \pi_1\sqcup \cdots\sqcup \pi_k$ be a decomposition with $\pi_j\subset D_j\times Y_j$.
  Then 
  \begin{align*}
    \sgn(\pi) = \sgn(\mathcal{D})\sgn(\mathcal{Y}) \prod_{j=1}^k\sgn(\pi_j).
  \end{align*}
\end{lem}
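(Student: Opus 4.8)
The plan is to reduce the sign of the global bijection $\pi$ to the signs of its pieces by counting inversions and carefully accounting for how inversions are distributed among the blocks. I would work directly with the inversion count $N(\pi) = \card{\{(a,a')\in D^2:\, a<a',\; \pi(a)>\pi(a')\}}$, and partition the set of inverted pairs according to which blocks $D_i, D_j$ the two elements $a, a'$ fall into.

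Concretely, for an inverted pair $(a,a')$ with $a<a'$ and $\pi(a)>\pi(a')$, say $a\in D_i$ and $a'\in D_j$. There are two cases. \emph{Same block} $i=j$: these pairs are exactly the inversions internal to $\pi_j$, and summing over all blocks contributes $\sum_{j=1}^k N(\pi_j)$ to the parity, which accounts for the $\prod_j \sgn(\pi_j)$ factor. \emph{Different blocks} $i\neq j$: here I would argue that a cross-block pair $(a,a')$ is inverted precisely when the relative order of $a,a'$ (captured by $\mathcal{D}$) disagrees with the relative order of $\pi(a),\pi(a')$ (captured by $\mathcal{Y}$). The key bookkeeping identity is that for each unordered cross-block pair, exactly one of the following contributes: either it is counted as a $\mathcal{D}$-inversion (the domain elements are out of block order), or as a $\mathcal{Y}$-inversion (the range elements are out of block order), and the pair is a $\pi$-inversion if and only if exactly one of these two holds. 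This is the parity statement $N(\pi) \equiv N(\mathcal{D}) + N(\mathcal{Y}) + \sum_j N(\pi_j) \pmod 2$, which immediately yields the claimed product formula.

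The cleanest route to the cross-block case is probably to use the block permutation $\pi^{\mathcal{Y}}$ from Definition~\ref{defn:parityfunction} and the analogous $\pi^{\mathcal{D}}$. I would factor $\pi$ as a composition: first apply a within-block sorting to bring $\mathcal{D}$ into the canonical block order, then the block-respecting map induced by the $\pi_j$, then the reordering from canonical block order to $\mathcal{Y}$. Since $\sgn$ is multiplicative under composition of permutations, the sign decomposes as a product, and identifying each factor with $\sgn(\mathcal{D})$, $\sgn(\mathcal{Y})$, and $\prod_j \sgn(\pi_j)$ gives the result. This multiplicativity-of-composition viewpoint avoids doing the inversion arithmetic by hand.

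The main obstacle I anticipate is handling the cross-block inversions rigorously, specifically showing that an inverted pair lying in distinct blocks is double-counted consistently between the $N(\mathcal{D})$ and $N(\mathcal{Y})$ terms so that the mod-$2$ arithmetic closes up. The subtlety is that a cross-block pair $(a,a')$ with $a\in D_i$, $a'\in D_j$, $i<j$ need not itself be a domain inversion, yet its image pair may be a range inversion; one must check all four combinations of (domain in/out of order) $\times$ (range in/out of order) against whether $(a,a')$ is a $\pi$-inversion, and verify the parity matches in every case. Once that case analysis is laid out, the proof is a short parity computation, and I would present the composition-of-permutations argument as the streamlined version.
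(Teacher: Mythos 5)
Your proposal is correct and lands on essentially the same argument as the paper, whose entire proof is the one-line remark that the claim ``follows from the multiplicativity of the sign function'' --- precisely your streamlined factorization $\pi = \pi^{\mathcal{Y}}\circ(\text{block-diagonal part})\circ(\pi^{\mathcal{D}})^{-1}$. Your explicit cross-block inversion count (checking that each unordered pair in distinct blocks contributes $0$ or $2$ to $N(\mathcal{D})+N(\mathcal{Y})$ exactly when it is not a $\pi$-inversion) is a correct elementary verification of the same fact, and in fact supplies the detail the paper omits.
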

\begin{proof}
  It follows from the multiplicativity of the sign function.
\end{proof}

\begin{lem}\label{thm:recursedetcols}
  Under the same conditions of Lemma~\ref{thm:recursepermcols}, then
  \begin{align}\label{eq:recursedetcols}
    \mydet(A_{T_t}, Y) &= \sgn(\mathcal{D}) \sum_{\mathcal{Y}} \sgn(\mathcal{Y}) \mydet(A_{t},Y_t) \prod_{j=1}^k \mydet(A_{T_{c_j}}, Y_{c_j} ) 
  \end{align}
  where $\mydet(\cdot,\cdot)$ is as in \eqref{eq:defnpartialdet} the sum is over all $\mathcal{Y}=(Y_{s},Y_{c_1},\ldots,Y_{c_k})$ satisfying~\eqref{eq:partitionrangepi}, $\mathcal{D} = (A_{t},A_{T_{c_1}},\ldots,A_{T_{c_k}})$ and $\sgn(\cdot)$ is as in Definition~\ref{defn:parityfunction}.
\end{lem}
\begin{proof}
  The proof is basically the same as the one of Lemma~\ref{thm:recursepermcols}. 
  The only difference is that we have the additional factor $\sgn(\pi)$, but it factors because of Lemma~\ref{thm:recursionsign}.
\end{proof}

Despite the resemblance between equations~\eqref{eq:recursepermcols} and~\eqref{eq:recursedetcols}, the latter is not a subset convolution because of the sign factors.
Therefore, we cannot use the algorithm from~\cite{Bjoerklund2007} in this case.
We now show to the complexity analysis.

\begin{lem}\label{thm:evaluaterecursiondet}
  Given the values of the partial determinants $\mydet(A_t,Y_t)$ and $\mydet(A_{T_{c_j}},Y_{c_j})$, we can evaluate equation~\eqref{eq:recursedetcols} for all $Y$ satisfying~\eqref{eq:validrangeR} in $\widetilde{O}(k(n+ 3^{\omega}))$.
\end{lem}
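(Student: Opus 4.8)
The plan is to mimic the proof of Lemma~\ref{thm:countpartitions}, but to account for the sign factors that prevent us from using the subset convolution directly. The key obstruction is that equation~\eqref{eq:recursedetcols} is \emph{not} a subset convolution: the factor $\sgn(\mathcal{Y})$ depends on the way the ground set $Y$ is split among the children, not just on the individual blocks. So I cannot invoke \cite{Bjoerklund2007} as a black box. Instead, I would try to absorb the sign into the summands and then recover a convolution-like structure, at the cost of an extra factor. First, note that $\sgn(\mathcal{D})$ is a single global factor depending only on $t$ and its children, so it can be pulled out of the sum and computed once in $O(n)$ (or $O(n^2)$ naively), contributing only an additive cost. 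The real work is evaluating $\sum_{\mathcal{Y}}\sgn(\mathcal{Y})\,\mydet(A_t,Y_t)\prod_j\mydet(A_{T_{c_j}},Y_{c_j})$.

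The main idea I would pursue to kill the sign dependence is to fold $\sgn(\mathcal{Y})$ into the per-block data by processing the children one at a time, left to right, and tracking enough information to reconstruct the inversions \emph{across} blocks. Concretely, when I merge a new block $Y_{c_j}$ into the accumulated set $\bar{Y}^{(j-1)}=\bar{Y}_t\sqcup\bar{Y}_{c_1}\sqcup\cdots\sqcup\bar{Y}_{c_{j-1}}$, the change in sign is $(-1)$ raised to the number of pairs $(y,y')$ with $y\in\bar{Y}^{(j-1)}$, $y'\in\bar{Y}_{c_j}$, and $y>y'$. Since the accumulated set $\bar{Y}^{(j-1)}$ is itself a subset of $\chi(t)$, this cross-inversion count is completely determined by the two subsets being merged. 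I would therefore define a \emph{signed convolution}
\begin{align*}
  (F \star G)(\bar{Z}) := \sum_{\bar{U}\sqcup\bar{V}=\bar{Z}} (-1)^{c(\bar{U},\bar{V})}\,F(\bar{U})\,G(\bar{V}),
\end{align*}
where $c(\bar{U},\bar{V})$ counts pairs $u\in\bar{U}, v\in\bar{V}$ with $u>v$, and build $P_t$ by folding $Q_t\star Q_{c_1}\star\cdots\star Q_{c_k}$ from left to right. The associativity of this signed product follows from the cocycle identity for the sign, namely that the total cross-inversion count telescopes, which is exactly the content of the multiplicativity of $\sgn$ used in Lemma~\ref{thm:recursionsign}.

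The remaining question is the cost of one signed convolution. Unlike the ordinary subset convolution, the sign $(-1)^{c(\bar{U},\bar{V})}$ is not a function of $\bar{Z}=\bar{U}\sqcup\bar{V}$ alone, so the Mobius/ranked-zeta trick of \cite{Bjoerklund2007} does not apply verbatim. The honest fallback is to evaluate each signed convolution by brute force over ordered partitions: for a fixed $\bar{Z}\subset\chi(t)$ we sum over all ways to write $\bar{Z}=\bar{U}\sqcup\bar{V}$, which costs $\sum_{\bar{Z}}2^{|\bar{Z}|}=\sum_{i}\binom{w}{i}2^{i}=3^{w}$ with $w=|\chi(t)|\le\omega+1$. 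Folding in the $k$ children gives $\widetilde{O}(k\,3^{\omega})$ for the convolutions, and the one-time sign computation $\sgn(\mathcal{D})$ together with ordering bookkeeping contributes the additive $\widetilde{O}(kn)$; combining yields the claimed $\widetilde{O}(k(n+3^{\omega}))$. This is exactly the extra factor (from $2^{\omega}$ up to $3^{\omega}$) that explains the remark that the determinant algorithm is slower than the permanent one. The step I expect to be the main obstacle is verifying that the left-to-right signed product is well-defined and reproduces $\sgn(\mathcal{Y})$ exactly --- i.e.\ that the per-merge sign increments sum to the global $N(\mathcal{Y})$ --- which requires carefully checking that cross-inversions between non-adjacent blocks are counted once and only once; this is a bookkeeping argument that ultimately reduces to Lemma~\ref{thm:recursionsign}.
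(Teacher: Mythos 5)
Your overall strategy is the same as the paper's: fold the children in left to right, telescoping $\sgn(\mathcal{Y})$ into per-merge cross-inversion signs, pull out the global factor $\sgn(\mathcal{D})$, and evaluate each of the $k$ resulting signed (hence non-Bj\"orklund) convolutions by brute force over ordered splits $\bar{U}\sqcup\bar{V}=\bar{Z}$ at a cost of $\sum_i\binom{w}{i}2^i=3^{w}$ per merge. That structure is right, and the telescoping identity is indeed just Lemma~\ref{thm:recursionsign}.

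The one place your argument would fail as stated is the claim that the per-merge sign increment is the number of inversions between $\bar{Y}^{(j-1)}$ and $\bar{Y}_{c_j}$, both subsets of $\chi(t)$, and is therefore ``completely determined by the two subsets being merged.'' The blocks actually being merged are $Y_{c_j}=\bar{Y}_{c_j}\cup\Delta_{c_j}$ with $\Delta_{c_j}=\chi(T_{c_j})\setminus\chi(t)$, and $\sgn(\mathcal{Y})$ counts inversions involving those hidden $\Delta$-elements as well; since the $\Delta$-sets can have up to $n$ elements, the increment is not a function of the two subsets of $\chi(t)$ alone, and computing it naively inside the $3^{\omega}$ loop would blow up the bound. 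The paper resolves this by factoring the increment as $\sgn(\bar{Y}_0^l,\bar{Y}_{l+1})\,\sgn(\bar{Y}_0^l,\Delta_{l+1})\,\sgn(\Delta_0^l,\bar{Y}_{l+1})\,\sgn(\Delta_0^l,\Delta_{l+1})$ and precomputing, for each $x\in\chi(t)$, the count $N_x^{\Delta_{l+1}}=\card{\{y\in\Delta_{l+1}:x>y\}}$ in $O(\omega n)$ per child, after which each factor is available in $\widetilde{O}(1)$. That precomputation is precisely where the additive $n$ in $\widetilde{O}(k(n+3^{\omega}))$ comes from; in your version the $\widetilde{O}(kn)$ term is attributed to $\sgn(\mathcal{D})$ and unspecified bookkeeping, leaving the genuinely $n$-dependent work unaccounted for. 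Once the increment is corrected to include the $\Delta$ contributions and this precomputation is added, your proof coincides with the paper's.
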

\begin{proof}
  We will first express equation~\eqref{eq:recursedetcols} in a similar format as formula~\eqref{eq:recurseconv} of Lemma~\ref{thm:countpartitions}.
  To simplify the notation, let $\mathcal{Y} =: (Y_0,Y_1,\ldots, Y_{k})$.
  For each $j$ let 
  $Y_0^{j} = Y_0\cup Y_{1}\cup\cdots\cup Y_{j}$, and observe that
  $\sgn(\mathcal{Y}) = \prod_j \sgn(Y_0^{j-1},Y_j)$.
  Then equation~\eqref{eq:recursedetcols} can be rewritten as:
  \begin{align*}
    D(\bar{Y}) &= \sgn(\mathcal{D})\sum_{\bar{Y}_0\sqcup \cdots\sqcup\bar{Y}_k=\bar{Y}}\prod_{j=0}^k S_j(\bar{Y}_0^{j-1},\bar{Y}_j)\,D_{j}(\bar{Y}_{j} ) 
  \end{align*}
  where $\bar{Y}_0,\ldots,\bar{Y}_k\subset \chi(t)$ and
  \begin{gather*}
    \Delta:=\chi(T_{t})\setminus\chi(t) \qquad \Delta_0:=\emptyset \qquad \Delta_{j}:=\chi(T_{c_j})\setminus\chi(t)\\
    D(\bar{Y}) := \mydet(A_{T_t},\bar{Y}\cup\Delta),\quad 
    D_0(\bar{Y}_0) := \mydet(A_t,\bar{Y}_0\cup \Delta_0),\quad
    D_{j}(\bar{Y}_{j}) := \mydet(A_{T_{c_j}},\bar{Y}_j\cup \Delta_{j})\\
    \bar{Y}_0^j = \bar{Y}_0\cup\cdots\cup\bar{Y}_j \qquad \Delta_0^j:=\Delta_0\cup\cdots\cup \Delta_j\\
    S_j(\bar{Y}_0^{j-1},\bar{Y}_j):= \sgn(\bar{Y}_0^{j-1}\cup \Delta_0^{j-1},\, \bar{Y}_{j}\cup \Delta_{j})
  \end{gather*}

  For each $0\leq l \leq k$, and for each $\bar{Y}_0^l\subset \chi(t)$, let
  \begin{align*}
    D_0^l(\bar{Y}_0^l) = \sgn(\mathcal{D})\sum_{\bar{Y}_0\sqcup\cdots\sqcup \bar{Y}_l= \bar{Y}_0^l }\; \prod_{j=0}^l S_j(\bar{Y}_0^{j-1},\bar{Y}_j)\,D_{j}(\bar{Y}_{j} ) 
  \end{align*}
  Note that $D_0^k(\bar{Y}) = D(\bar{Y})$, and thus it is enough to compute $D_0^l$ for all $l$.
  We can do this recursively, observing that $D_0^0(\bar{Y}) = \sgn(\mathcal{D})D_0(\bar{Y})$ and
  \begin{align}\label{eq:recursiondet}
    D_0^{l+1}(\bar{Y}_0^{l+1}) &= \sum_{\bar{Y}_0^{l}\sqcup \bar{Y}_{l+1} = \bar{Y}_0^{l+1}} S_{l+1}(\bar{Y}_0^l,\bar{Y}_{l+1})\, D_0^{l}(\bar{Y}_0^{l})\,D_{l+1}(\bar{Y}_{l+1}) 
  \end{align}

  We reduced the problem to evaluating the above formula, and we will show that for each $l$ we can do this in $\widetilde{O}(n+3^\omega)$.
  Assume for now that the signs $S_{l+1}(\bar{Y}_0^l,\bar{Y}_{l+1})$ are known.
  Then for each $\bar{Y}_0^{l+1}$ of cardinality $i$, we can evaluate~\eqref{eq:recursiondet} in $O(2^{i})$.
  Thus, for all $\bar{Y}_0^{l+1}$ we require
    $O(\sum_{i} {w \choose i} 2^i) = O(3^{w})$,
  where $w=\card{\chi(t)}$.
  We will see that after a precomputation that takes $\widetilde{O}(n)$, we can obtain $S_{l+1}(\bar{Y}_0^l,\bar{Y}_{l+1})$ in $\widetilde{O}(1)$, which will complete the proof.
  
  Observe that
  \begin{align*}
    S_{l+1}(\bar{Y}_0^l,\bar{Y}_{l+1}) = 
    \sgn(\bar{Y}_0^l,\bar{Y}_{l+1})\,
    \sgn(\bar{Y}_0^l,\Delta_{l+1})\,
    \sgn(\Delta_0^l,\bar{Y}_{l+1})\,
    \sgn(\Delta_0^l,\Delta_{l+1}).
  \end{align*}
  Note that the last factor does not depend on the partition and it can be precomputed in $O(n)$.
  Also note that the first factor can be computed in $O(\omega)=\widetilde{O}(1)$, so we can ignore it.
  We are left with the second and third factor.

  For each $x\in \chi(t)$, let
  \begin{align*}
    N_x^{\Delta_{l+1}} &= \card{\{y\in \Delta_{l+1}:\, x>y\}}.
  \end{align*}
  We can precompute $N_x^{\Delta_{l+1}}$ for all $x$ in $O(\omega n)=\widetilde{O}(n)$.
  Note that $\sgn(\bar{Y}_0^l,\Delta_{l+1})= (-1)^N$ where $N = \sum_{x\in \bar{Y}_0^l} N_x^{\Delta_{l+1}}$.
  Thus, after the precomputation, we can obtain this factor in $O(\omega)=\widetilde{O}(1)$. 
  A similar procedure can be done for $\sgn(\Delta_0^l,\bar{Y}_{l+1})$.
\end{proof}

\begin{thm}\label{thm:treewidthdet}
  Let $M$ be a matrix with associated column graph $G^X$. 
  Let $(T,\chi)$ be a tree decomposition of $G^X$ of width $\omega$.
  Then we can compute $\det(M)$ in $\widetilde{O}(n^2+n\,4^{\omega})$.
\end{thm}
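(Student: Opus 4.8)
The plan is to mirror the proof of Theorem~\ref{thm:treewidthpermanent}, replacing the permanent recursion with the determinant recursion of Lemma~\ref{thm:recursedetcols} and accounting for the slower, non-convolutional evaluation cost established in Lemma~\ref{thm:evaluaterecursiondet}. As in the permanent case, for each node $t\in T$ I would compute $\mydet(A_{T_t},Y)$ for all $Y$ satisfying~\eqref{eq:validrangeR}, processing the tree from the leaves upward along the topological ordering, so that at the root we recover $\det(M)=\mydet(A_{T_{\mathit{root}}},\chi(T_{\mathit{root}}))$. The key bookkeeping fact, exactly as before, is that $\sum_t k_t$ equals the number of nodes of $T$, which is $O(n)$; hence if each node $t$ with $k_t$ children can be handled in time $\widetilde{O}(n+k_t(n+3^{\omega}))$, summing over all nodes telescopes to $\widetilde{O}(n^2 + n\,3^{\omega})$, matching the claimed bound.

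First I would dispose of the base case. When $t$ is a leaf, $A_{T_t}=A_t$ and $\chi(T_t)=\chi(t)$, so the required values are partial determinants of submatrices of $M_0$, the submatrix with rows $A_t$ and columns $\chi(t)$. Since $\card{A_t}\leq\card{\chi(t)}\leq\omega+1$, an expansion-by-minors computation wholly analogous to Lemma~\ref{thm:subperms} computes all the needed $\mydet(A_t,Y_t)$ in $\widetilde{O}(2^{2\omega})$ time; the only modification to that lemma is to carry the sign along each minor expansion, which costs at most an $\widetilde{O}(1)$ factor per term. I would simply cite Lemma~\ref{thm:subperms} with the sign decoration, noting that no divisions are introduced so the method remains valid over any commutative ring.

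Next, for an internal node $t$ with $k_t$ children, the recursion~\eqref{eq:recursedetcols} expresses $\mydet(A_{T_t},Y)$ in terms of the already-computed child tables $\mydet(A_{T_{c_j}},Y_{c_j})$ and the local tables $\mydet(A_t,Y_t)$. The latter are obtained exactly as in the base case in $\widetilde{O}(2^{2\omega})$ time. By Lemma~\ref{thm:evaluaterecursiondet}, evaluating~\eqref{eq:recursedetcols} for all admissible $Y$ then costs $\widetilde{O}(k_t(n+3^{\omega}))$. Adding these contributions, node $t$ requires $\widetilde{O}(2^{2\omega}+k_t(n+3^{\omega}))=\widetilde{O}(n+k_t(n+3^{\omega}))$, since $2^{2\omega}=4^{\omega}$ is dominated once we also carry an additive $n$ per node. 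Summing over the $O(n)$ nodes and using $\sum_t k_t=O(n)$ gives the total $\widetilde{O}(n^2+n\,3^{\omega})$, where the $n^2$ term arises from the $\widetilde{O}(n)$ precomputation performed at each of the $O(n)$ nodes.

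The step I expect to require the most care is the accounting of the $\widetilde{O}(n)$ precomputation in Lemma~\ref{thm:evaluaterecursiondet}. That lemma already establishes the per-node cost, but I must be careful that its $\widetilde{O}(k(n+3^{\omega}))$ bound, when summed against $\sum_t k_t=O(n)$, correctly yields the $n^2$ term rather than something larger: the sign precomputations $N_x^{\Delta_{l+1}}$ and $\sgn(\Delta_0^l,\Delta_{l+1})$ are performed once per child edge, so their total cost over the whole tree is $\widetilde{O}(n\cdot n)=\widetilde{O}(n^2)$, while the $3^{\omega}$ convolution-like work totals $\widetilde{O}(n\,3^{\omega})$. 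I would state this telescoping explicitly to make clear that the $n^2$ and $n\,3^{\omega}$ terms come from distinct sources and neither is absorbed into an overcounting. Everything else is a direct transcription of the permanent argument with signs inserted.
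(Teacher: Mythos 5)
Your overall structure is exactly the paper's: reuse the permanent argument of Theorem~\ref{thm:treewidthpermanent}, handle the base case by a signed expansion by minors as in Lemma~\ref{thm:subperms}, and replace the subset-convolution step by Lemma~\ref{thm:evaluaterecursiondet} for the recursion. However, there is a concrete error in your final accounting. You write that the per-node cost $\widetilde{O}(2^{2\omega}+k_t(n+3^{\omega}))$ equals $\widetilde{O}(n+k_t(n+3^{\omega}))$ ``since $2^{2\omega}=4^{\omega}$ is dominated once we also carry an additive $n$ per node.'' There is no such domination: nothing in the hypotheses relates $\omega$ to $\log n$, so $4^{\omega}$ need not be $O(n\cdot\mathrm{poly}(\omega))$. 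The $\widetilde{O}(4^{\omega})$ term comes from computing \emph{all} partial determinants $\mydet(D,Y_t)$ of the local submatrix with rows $A_t$ and columns $\chi(t)$, and since $\card{A_t}$ and $\card{\chi(t)}$ can each be as large as $\omega+1$, the expansion-by-minors cost is genuinely $\widetilde{O}(2^{2\omega})$ per node and cannot be absorbed into the additive $n$. (The paper's remark after Theorem~\ref{thm:treewidthpermanent} notes this base-case factor can be improved, but only by a different method, not by the inequality you invoke.)

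The consequence is that you end up claiming $\widetilde{O}(n^2+n\,3^{\omega})$, which is \emph{stronger} than the stated bound $\widetilde{O}(n^2+n\,4^{\omega})$ and is not justified by your argument. Carrying the $4^{\omega}$ term correctly, the per-node cost is $\widetilde{O}(4^{\omega}+k_t(n+3^{\omega}))$; summing over the $O(n)$ nodes and using $\sum_t k_t=O(n)$ gives $\widetilde{O}(n\,4^{\omega}+n^2+n\,3^{\omega})=\widetilde{O}(n^2+n\,4^{\omega})$, which is precisely the theorem. Your telescoping discussion of the sign precomputations contributing the $n^2$ term and the $3^{\omega}$ work contributing $n\,3^{\omega}$ is fine; the only repair needed is to stop discarding the base-case term and state the correct total.
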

\begin{proof}
  There are two changes with respect to the proof of Theorem~\ref{thm:treewidthpermanent}. 
  First, in the base case we need to compute the determinant of all submatrices of $M_0$.
  Using an expansion by minors as in the proof of Lemma~\ref{thm:subperms}, we can do this in $\widetilde{O}(4^\omega)$, i.e., the same as for the permanent.
  Second, for the recursion formula we use Lemma~\ref{thm:evaluaterecursiondet}.
  This increases the time per node from $\widetilde{O}(k_t\,2^{\omega})$ to $\widetilde{O}(k_t(n+\,3^{\omega}))$.
  Therefore, the overall cost is $\widetilde{O}(n^2 + n\,4^{\omega})$.
\end{proof}

We conclude this section by presenting an open question.
Given the resemblance in the definition of the permanent and the determinant it is not surprising that they can be computed using very similar tree decomposition methods.
The \emph{immanant} of a matrix is another closely related notion:
\begin{align*}
  \mathrm{Imm}_\lambda(M) := \sum_\pi \chi_\lambda(\pi) \prod_{a \in A} M_{a,\pi(a)}
\end{align*}
where the sum is over all bijections $\pi:A\to X$ and $\chi_\lambda$ is an irreducible character of the symmetric group.
The immanant reduces to the permanent when $\chi_\lambda$ is the trivial character, and it reduces to the determinant when $\chi_\lambda$ is the sign character.
The computational complexity of immanants has been analyzed in e.g.,~\cite{Buergisser2000}.
A natural question that arises is whether a tree decomposition method can be used to compute them.
We remark that the recursion in~\eqref{eq:recursedetcols} does not hold for the immanant as $\chi_\lambda$ is not necessarily multiplicative.

\begin{ques}
  Given a matrix $M$ of bounded treewidth, can we compute $\mathrm{Imm}_\lambda(M)$ in polynomial time?
  In particular, can this be done if either the height or the width of the Young diagram is bounded?
\end{ques}



\section{Bipartite decompositions}\label{s:bipartite}

In the previous section we showed a decomposition method based on the column graph $G^X$.
We showed that we can compute the permanent in $\widetilde{O}(n\,4^{\omega_X})$, where $\omega_X$ is the treewidth of $G^X$.
In this section we will extend this decomposition method to work with the bipartite graph $G$ (see Definition~\ref{defn:graphbipartite}).
We will show that we can compute the permanent in $\widetilde{O}(n\,2^{\omega})$, where $\omega$ is the treewidth of $G$.
A Matlab implementation of our algorithm is available in \url{www.mit.edu/~diegcif}.

Let $G$ be the bipartite graph of $M$.
As in the previous sections, we index the rows with a set $A$ and the columns with $X$.
We now rephrase the definition of a tree decomposition of $G$.
A \emph{bipartite decomposition} of $G$ is a tuple $(T,\alpha,\chi)$, where $T$ is a rooted tree and $\alpha: T \to \{0,1\}^A$, $\chi:T\to \{0,1\}^X$ assign some $\alpha(t)\subset A$ and $\chi(t)\subset X$ to each node $t$ of $T$, that satisfies the following conditions.
\begin{enumerate}
  \item[i-1.] The union of $\{\alpha(t)\}_{t\in T}$ is the whole row set $A$.
  \item[i-2.] The union of $\{\chi(t)\}_{t\in T}$ is the whole column set $X$.
  \item[ii.] For every edge $(a, x)$ of $G$ there exists a node $t$ of $T$ with $a\in \alpha(t),x\in \chi(t)$.
  \item[iii-1.] For every $a\in A$ the set $\{t: a\in \alpha(t)\}$ forms a subtree of $T$. 
  \item[iii-2.] For every $x\in X$ the set $\{t: x\in \chi(t)\}$ forms a subtree of $T$. 
\end{enumerate}
The width $\omega$ of the decomposition is the largest of $\card{\alpha(t)}+\card{\chi(t)}$ among all nodes $t$.
Note that the above literals are consistent with the ones in Definition~\ref{defn:treedecomposition}.

As before, we now present an example to illustrate the use of the bipartite graph for computing the permanent.

\begin{exmp}
  \begin{figure}[htb]
  \centering
  \includegraphics[scale=0.42]{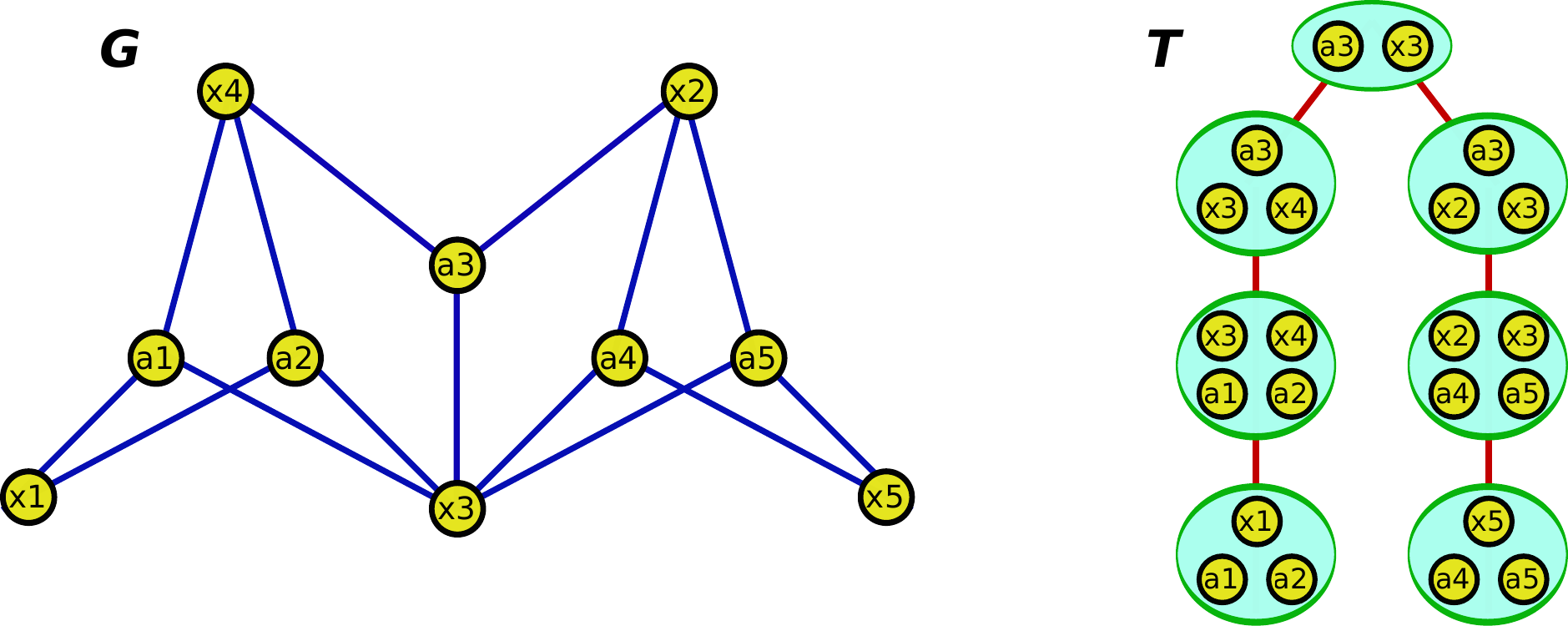}
  \vspace{-5 pt}
  \caption{Bipartite graph $G$ of the matrix of Figure~\ref{fig:triangles} and a tree decomposition $T$.}
  \label{fig:trianglesbipart}
  \end{figure}
  Consider again the matrix $M$ of Figure~\ref{fig:triangles}.
  Note that $\perm(M)$ can also be expressed in the following form:
  {\small
  \begin{align*}
    \perm(M) &= \myperm(\{a_1,a_2\},\{x_1,x_4\})\,\myperm(\{a_3,a_4,a_5\},\{x_2,x_3,x_5\})\\
    &+ \myperm(\{a_1,a_2,a_3\},\{x_1,x_3,x_4\})\,\myperm(\{a_4,a_5\},\{x_2,x_5\})\\
    &- M_{a_3,x_3}\,\myperm(\{a_1,a_2\},\{x_1,x_4\})\,\myperm(\{a_4,a_5\},\{x_2,x_5\})\\
    \myperm(\{a_1,a_2,a_3\}&,\{x_1,x_3,x_4\}) = M_{a_3,x_3}\,\myperm(\{a_1,a_2\},\{x_1,x_4\}) + M_{a_3,x_4}\,\myperm(\{a_1,a_2\},\{x_1,x_3\})\\
    \myperm(\{a_3,a_4,a_5\}&,\{x_2,x_3,x_5\}) = M_{a_3,x_3}\,\myperm(\{a_4,a_5\},\{x_2,x_5\}) + M_{a_3,x_2}\,\myperm(\{a_4,a_5\},\{x_3,x_5\})
  \end{align*}
}
  To evaluate the above formula we need to compute four $2\times 2$ permanents, and we need in total $16$ multiplications.
  It turns out that this formula arises by considering the tree decomposition of the bipartite graph shown in Figure~\ref{fig:trianglesbipart}.
\end{exmp}

\subsection{Bipartite decomposition algorithm}

Algorithm~\ref{alg:bipartiteperm} presents our dynamic programming method to compute $\perm(M)$ using a bipartite decomposition.
As for Algorithm~\ref{alg:colsperm}, for each node $t$ we compute a table $P_t$, following a topological ordering of the tree.
The permanent of $M$ is in the table corresponding to the root.
There are two main routines: \Call{SubPerm}{} computes the permanents of all submatrices, and \Call{EvalRecursion}{} evaluates a recursion formula, which is slightly more complex than the one of Algorithm~\ref{alg:colsperm}.

\begin{algorithm}
  \caption{Permanent with bipartite decomposition}
  \label{alg:bipartiteperm}
  \begin{algorithmic}[1]
    \Require{Matrix $M$ and tree decomposition $(T,\alpha,\chi)$ of bipartite graph $G(M)$}
    \Ensure{Permanent of $M$}
    \Procedure{BipartPerm}{$M,T,\alpha,\chi$}
    \State $\mathit{order} := $ topological ordering of $T$ starting from its leaves
    \For {$t$ in $\mathit{order}$}
    \State $Q_t:=$ \Call{SubPerms}{$t,M$}
    \If{$t$ is a leaf}
    \State $P_t := Q_t$
    \Else
    \State ${c_1},\ldots,{c_k}:=$ children of $t$
    \State $P_t := $\Call{EvalRecursion}{$t,Q_t,P_{c_1},\ldots,P_{c_k}$}
    \EndIf
    \EndFor
    \State \Return $P_{\mathit{root}}(\alpha(\mathit{root}),\chi(\mathit{root}))$
    \EndProcedure
    \Procedure{SubPerms}{$t,M$}
    \State $Q_t(\bar{D},\bar{Y}) := \myperm(\bar{D},\bar{Y})$ \; \algorithmicforall\ $ \bar{D}\subset \alpha(t), \bar{Y}\subset\chi(t)$
    \EndProcedure
    \Procedure{EvalRecursion}{$t,Q_t,P_{c_1},\ldots,P_{c_k}$}
    \For {$c_j$ child of $t$}
    \State $\Delta^{\alpha}_j:= \alpha(c_j)\setminus \alpha(t)$,\quad $\Lambda^{\alpha}_j:= \alpha(c_j)\cap \alpha(t)$
    \State $\Delta^{\chi}_j:= \chi(c_j)\setminus \chi(t)$,\quad $\Lambda^{\chi}_j:= \chi(c_j)\cap \chi(t)$
    \State $Q_{tc_j}(\bar{D},\bar{Y}) := (-1)^{\card{\bar{D}}}Q_{t}(\bar{D}, \bar{Y})$ \; \algorithmicforall\ $ \bar{D}\subset \Lambda^{\alpha}_j,\bar{Y}\subset\Lambda^{\chi}_j$
    \State $Q_{cc_j}(\bar{D},\bar{Y}) := P_{c_j}(\bar{D}\cup\Delta^\alpha_j, \bar{Y}\cup \Delta^\chi_j)$ \; \algorithmicforall\ $ \bar{D}\subset \Lambda^{\alpha}_j,\bar{Y}\subset\Lambda^{\chi}_j$
    \EndFor
    \State $P_t:=$ \Call{SubsetConvolution}{$Q_t,Q_{tc_1},Q_{cc_1},\ldots,Q_{tc_k},Q_{cc_k}$}
    \EndProcedure
    \Procedure{SubsetConvolution}{$P_0,P_1,\ldots,P_{2k}$}
    \State $P(\bar{D},\bar{Y}) := \displaystyle\sum\limits_
    {\substack{\bar{D}_0\sqcup\cdots\sqcup \bar{D}_{2k}= \bar{D}\\
    \bar{Y}_0\sqcup\cdots\sqcup \bar{Y}_{2k}= \bar{Y}}}
    P_0(\bar{D}_0,\bar{Y}_0) \,P_1(\bar{D}_1,\bar{Y}_1)\cdots P_{2k}(\bar{D}_{2k},\bar{Y}_{2k})$  
    \EndProcedure
  \end{algorithmic}
\end{algorithm}

As before, the values $P_t(\bar{D},\bar{Y})$ computed correspond to a partial permanent of the matrix.
Consider the collection
  \begin{align*}
    S = \{(D,Y):\alpha(T_t) \setminus  \alpha(t)\subset D \subset \alpha(T_t),\,
    \chi(T_t) \setminus  \chi(t)\subset Y \subset \chi(T_t)\}.
  \end{align*}
  Observe that $(D,Y)\in S$ is completely determined by $(D\cap\alpha(t),Y\cap\chi(t))$.
  Therefore, if we let $\bar{D}=D\cap \alpha(t),$ $\bar{Y}:=Y\cap \chi(t)$, there is a one to one correspondence between $S$, and the collection $\bar{S} := \{(\bar{D},\bar{Y}):\bar{D}\subset \alpha(t),\bar{Y}\subset \chi(t)\}$.
The partial permanents that we are interested in are
\begin{align*}
  P_t(\bar{D},\bar{Y}) := \myperm(D,Y) = \myperm(\bar{D}\cup (\alpha(T_t)\setminus\alpha(t)),\,\bar{Y}\cup (\chi(T_t)\setminus \chi(t))).
\end{align*}
For the root node $P_{\mathit{root}}(\alpha(\mathit{root}),\chi(\mathit{root}))=\myperm(A,X)=\perm(M)$.

\subsection{Recursion formula}

The recursion formula that method $\Call{EvalRecursion}{}$ of Algorithm~\ref{alg:bipartiteperm} evaluates is given in the following lemma.\\

\begin{lem}\label{thm:recursepermbipart}
  Let $M$ be a matrix with associated bipartite graph $G$. 
  Let $(T,\alpha,\chi)$ be a bipartite decomposition of $G$.
  Let $t$ be an internal node of $T$, let $T_t\subset T$ denote the subtree rooted in $t$, and  let $D,Y$ be such that
  \begin{align}
  \alpha(T_t)\setminus \alpha(t)\subset D\subset \alpha(T_t),\;\;\;\;\;\;\;
  \chi(T_t)\setminus \chi(t)\subset Y\subset \chi(T_t),\;\;\;\;\;\;\;
  \card{D} &= \card{Y}
    \label{eq:validpairDR}
  \end{align}
  Let $t_{c_1},\ldots,t_{c_k}$ be the children of $t$. Then 
  \begin{align}\label{eq:recursepermDRfull}
    \myperm(D,Y) &= \sum_{\mathcal{D},\mathcal{Y}} \myperm(D_t,Y_t) \prod_{j=1}^k (-1)^{\card{D_{tc_j}}}  \myperm(D_{tc_j},Y_{tc_j})\myperm(D_{cc_j},Y_{cc_j})
  \end{align}
  where $\myperm(\cdot,\cdot)$ is as in \eqref{eq:defnpartialperm} and the sum is over all $\mathcal{D}=(D_t,D_{tc_1},\ldots)$, $\mathcal{Y}=(Y_t,Y_{tc_1},\ldots)$ satisfying:
  \begin{align*}
    D &=  D_t \sqcup (D_{tc_1}\sqcup D_{cc_1} \sqcup \cdots \sqcup D_{tc_k}\sqcup D_{cc_k}) \\
    Y &=  Y_t \sqcup (Y_{tc_1}\sqcup Y_{cc_1} \sqcup \cdots \sqcup Y_{tc_k}\sqcup Y_{cc_k}) \\
    \alpha(T_{c_j})\setminus  \alpha(t) &\subset D_{cc_j} \subset \alpha(T_{c_j}) \;\;\;\;\;\;
   D_{t} \subset \alpha(t)  \;\;\;\;\;\;
   D_{tc_j} \subset \alpha(t)\cap\alpha(t_{c_j})\\
   \chi(T_{c_j})\setminus  \chi(t) &\subset Y_{cc_j} \subset \chi(T_{c_j}) \;\;\;\;\;\;
   Y_{t} \subset \chi(t)  \;\;\;\;\;\;
   Y_{tc_j} \subset \chi(t)\cap\chi(t_{c_j}).
  \end{align*}
\end{lem}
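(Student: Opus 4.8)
The plan is to mirror the proof of Lemma~\ref{thm:recursepermcols}, expanding both sides as sums over matchings, but now accounting for the fact that in a bipartite decomposition a \emph{row} (like a column) may belong to several bags. First I would record the structural consequence of the subtree conditions (iii-1), (iii-2): for distinct children $c_j\neq c_{j'}$ one has $\alpha(T_{c_j})\cap\alpha(T_{c_{j'}})\subset\alpha(t)$ and $\alpha(T_{c_j})\cap\alpha(t)=\alpha(c_j)\cap\alpha(t)$, and likewise for $\chi$. This is what makes the ``deep'' part of each subtree disjoint across children and makes it meet the current bag only along the advertised boundaries.

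Next I would unfold the right-hand side. Each partial permanent in \eqref{eq:recursepermDRfull} is itself a sum over matchings of its block, and because the index sets $(D_t,\{D_{tc_j}\},\{D_{cc_j}\})$ partition $D$ and $(Y_t,\{Y_{tc_j}\},\{Y_{cc_j}\})$ partition $Y$, the block-matchings $\sigma\colon D_t\to Y_t$, $\tau_j\colon D_{tc_j}\to Y_{tc_j}$, $\rho_j\colon D_{cc_j}\to Y_{cc_j}$ glue to a single matching $\pi\colon D\to Y$ of weight $\prod_a M_{a,\pi(a)}$. Conversely, every such decomposition of a fixed $\pi$ arises this way, so after interchanging the order of summation the claim \eqref{eq:recursepermDRfull} reduces to the purely combinatorial identity
\begin{align*}
  \sum_{\text{valid decompositions of } \pi} (-1)^{\sum_j \card{D_{tc_j}}} = 1
\end{align*}
for every matching $\pi\colon D\to Y$, where a decomposition assigns each matched pair $(a,\pi(a))$ to exactly one of the groups $\sigma$, $\{\tau_j\}$, $\{\rho_j\}$ subject to the membership constraints of the lemma.

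I would then classify the matched pairs. Using the structural facts together with the range constraints \eqref{eq:validpairDR}, a pair $(a,x)$ with $x=\pi(a)$ is \emph{forced} into a unique $\rho_j$ whenever $a\in\alpha(T_{c_j})\setminus\alpha(t)$ or $x\in\chi(T_{c_j})\setminus\chi(t)$: since $(a,x)$ is an edge of $G$, a short running-intersection argument shows the partner endpoint then also lies in $T_{c_j}$ and that $j$ is unique, so these forced placements never conflict and, moreover, all deep rows and columns get absorbed, making the lower bounds $\alpha(T_{c_j})\setminus\alpha(t)\subset D_{cc_j}$ and $\chi(T_{c_j})\setminus\chi(t)\subset Y_{cc_j}$ automatic. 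The remaining \emph{local} pairs are exactly those with $a\in\alpha(t)$ and $x\in\chi(t)$; such a pair may be placed in $\sigma$, or, for any child $j$ with $a\in\alpha(c_j)$ and $x\in\chi(c_j)$, in either $\tau_j$ or $\rho_j$. As none of these choices interacts across distinct local pairs, the signed count factorizes over local pairs, and each local pair contributes $1+\sum_j(-1+1)=1$, the $-1$ from $\tau_j$ cancelling the $+1$ from $\rho_j$ for every eligible child. Hence the product is $1$, which is the identity needed.

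The crux --- and the only place the sign $(-1)^{\card{D_{tc_j}}}$ earns its keep --- is this last cancellation: a boundary pair shared between node $t$ and a child $c_j$ is genuinely counted twice, once inside the subtree table $\myperm(D_{cc_j},\,\cdot\,)$ and once as a local match at $t$, and the correction factor $Q_{tc_j}$ with its alternating sign is precisely the inclusion--exclusion that removes the double count. I expect the main obstacle in writing the argument to be verifying, via the running-intersection property, that the placement options of each local pair are exactly $\{\sigma,\tau_j,\rho_j\}$ and that the forced placements of deep pairs impose no residual global constraint, so that the weighted count truly factorizes.
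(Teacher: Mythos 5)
Your argument is correct, and it reaches the identity by a different organization than the paper's. The paper splits the proof in two: Lemma~\ref{thm:recursepermbipart2} first establishes a recursion in terms of the auxiliary quantities $\myperm^*(D_c,Y_c)$ (sums over matchings of the child subtree that avoid $\alpha(t)\times\chi(t)$), obtained from the \emph{unique} decomposition $\pi=\pi_t\sqcup\pi_{c_1}\sqcup\cdots\sqcup\pi_{c_k}$ in which every pair lying in $\alpha(t)\times\chi(t)$ is assigned to $\pi_t$; it then converts each $\myperm^*$ into ordinary partial permanents by M\"obius inversion over the set of boundary pairs, which is where the factor $(-1)^{\card{D_{tc_j}}}$ enters. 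You instead expand the right-hand side of \eqref{eq:recursepermDRfull} directly and show that the signed number of admissible decompositions of any fixed matching equals $1$, with the cancellation $1+\sum_j(-1+1)=1$ happening locally at each boundary pair (using that, for $a\in\alpha(t)$, membership in $\alpha(T_{c_j})$ is equivalent to membership in $\alpha(c_j)$, so eligibility for $\tau_j$ and for $\rho_j$ coincide). The two proofs rest on exactly the same structural facts --- the running-intersection consequences $\alpha(T_{c_j})\cap\alpha(T_{c_{j'}})\subset\alpha(t)$ and $\alpha(T_{c_j})\cap\alpha(t)=\alpha(c_j)\cap\alpha(t)$, and the forcing of deep pairs into a unique child --- and the same inclusion--exclusion idea; yours trades the intermediate object $\myperm^*$ for a per-pair telescoping that makes the coefficient computation completely explicit, at the cost of having to verify that the signed count genuinely factorizes over pairs (it does, since all constraints are per-pair memberships and the lower bounds on $D_{cc_j},Y_{cc_j}$ are automatic from the forcing), which you correctly flag as the point needing care.
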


To prove this lemma we need some additional notation.
We view a bijection $\pi:D\to Y$ as a subset of $D\times Y$, by identifying it with the set $\{(a,\pi(a)): a\in D\}$.
For a given node $t$ and for some $D,Y$ satisfying~\eqref{eq:validpairDR}, we denote
\begin{align*}
  \myperm^*(D,Y) := \sum_{\pi} \prod_{a \in D} M_{a,\pi(a)} 
\end{align*}
where the sum is over all bijections $\pi:D\to Y$ such that
  \begin{align}\label{eq:overcountpi}
    \pi\cap (\alpha(t)\times\chi(t)) = \emptyset.
  \end{align}
We now show a different recursion formula, which is closer to the one in Lemma~\ref{thm:recursepermcols}.

\begin{lem}\label{thm:recursepermbipart2}
  Following the same notation as above, the following equation holds:
  \begin{align}\label{eq:recursivepermDR}
    \myperm(D,Y) &= \sum_{Y_t,Y_{c_j},D_t,D_{c_j}} \myperm(D_t,Y_t) \prod_{j=1}^k \myperm^*(D_{c_j},Y_{c_j}) 
  \end{align}
  where the sum is over all $Y_t,Y_{c_j},D_t,D_{c_j}$ such that
  \begin{subequations}\label{eq:partitiondomainrangepi}
  \begin{align}
   D &=  D_t \sqcup (D_{c_1} \sqcup \cdots \sqcup D_{c_k}) \\
   Y &=  Y_t \sqcup (Y_{c_1} \sqcup \cdots \sqcup Y_{c_k}) \\
   \alpha(T_{c_j})\setminus  \alpha(t) &\subset D_{c_j} \subset \alpha(T_{c_j}) \;\;\;\;\;\;
   D_t \subset \alpha(t)\\
   \chi(T_{c_j})\setminus  \chi(t) &\subset Y_{c_j} \subset \chi(T_{c_j}) \;\;\;\;\;\;
   Y_t \subset \chi(t).
  \end{align}
  \end{subequations}
\end{lem}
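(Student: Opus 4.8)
The plan is to follow the template of Lemma~\ref{thm:recursepermcols}: I would set up a bijection between matchings $\pi:D\to Y$ and tuples of submatchings distributed among $t$ and its children, and then factor the product $\prod_{a\in D}M_{a,\pi(a)}$ accordingly. The new feature relative to the column case is that a \emph{row and a column may both be shared} between $t$ and a child, and the star condition~\eqref{eq:overcountpi} defining $\myperm^*$ is exactly what prevents the ensuing double counting. As a preliminary I would record the consequences of the subtree conditions iii-1 and iii-2: a row $a\notin\alpha(t)$ lies in at most one child subtree (distinct children meet only at $t$), so the deep sets $\alpha(T_{c_j})\setminus\alpha(t)$ are pairwise disjoint and $\alpha(T_{c_i})\cap\alpha(T_{c_j})\subset\alpha(t)$ for $i\neq j$, with the analogous statements for columns.

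The heart is the decomposition rule for a matching $\pi:D\to Y$. I would call an edge $(a,\pi(a))$ \emph{local} when $a\in\alpha(t)$ and $\pi(a)\in\chi(t)$, collecting these into a set $D_t$ with $Y_t:=\pi(D_t)\subset\chi(t)$. Every remaining edge has a \emph{deep} endpoint, namely a row outside $\alpha(t)$ or a column outside $\chi(t)$; invoking the edge condition ii together with the subtree conditions, I would argue that this deep endpoint determines a unique child $c_j$ and that the other endpoint also lies in $T_{c_j}$, so the edge is unambiguously assigned to $c_j$. This defines $D_{c_j},Y_{c_j}$ and $\pi_j:=\pi|_{D_{c_j}}$. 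I would then verify the constraints~\eqref{eq:partitiondomainrangepi}, in particular $\alpha(T_{c_j})\setminus\alpha(t)\subset D_{c_j}\subset\alpha(T_{c_j})$ and its column analogue, and check that each $\pi_j$ satisfies $\pi_j\cap(\alpha(t)\times\chi(t))=\emptyset$: any edge with both endpoints in $\alpha(t)\times\chi(t)$ was declared local and routed to $t$, so none can survive inside a child.

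For the converse, given data satisfying~\eqref{eq:partitiondomainrangepi} together with an arbitrary $\pi_t:D_t\to Y_t$ and matchings $\pi_j:D_{c_j}\to Y_{c_j}$ obeying the star condition, disjointness of the $Y$-parts lets me glue them into one bijection $\pi:D\to Y$, and I would check that this gluing inverts the decomposition rule: $\pi_t$-edges lie in $\alpha(t)\times\chi(t)$ and hence reappear as local, while each $\pi_j$-edge is non-local by the star condition and, by disjointness of the deep sets, is reassigned to exactly $c_j$. With the bijection in hand I factor $\prod_{a\in D}M_{a,\pi(a)}=\bigl(\prod_{a\in D_t}M_{a,\pi_t(a)}\bigr)\prod_{j}\bigl(\prod_{a\in D_{c_j}}M_{a,\pi_j(a)}\bigr)$ and sum, which collapses to $\sum\myperm(D_t,Y_t)\prod_j\myperm^*(D_{c_j},Y_{c_j})$ over all admissible tuples, i.e.\ equation~\eqref{eq:recursivepermDR}.

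The step I expect to be the main obstacle is the well-definedness and bijectivity of the edge-assignment when a single edge has its row in $\alpha(t)\cap\alpha(T_{c_j})$ and its column in $\chi(t)\cap\chi(T_{c_j})$: one must show that the only ambiguity is whether to count such an edge at $t$ or at the child, and that the star condition removes precisely this ambiguity so that no matching is counted twice. Pinning down this routing, rather than any calculation, is the crux of the argument.
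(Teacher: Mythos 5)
Your proposal is correct and follows essentially the same route as the paper's proof: both decompose a matching into the local submatching $\pi_t=\pi\cap(\alpha(t)\times\chi(t))$ plus child submatchings determined by routing each remaining edge, via the edge and subtree conditions of the decomposition, to the unique child subtree containing its deep endpoint, and both use the star condition~\eqref{eq:overcountpi} to make this decomposition unique so the converse gluing does not overcount. The crux you flag at the end is exactly the point the paper's argument turns on, so no further comparison is needed.
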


\begin{proof}
  Let $\pi:D\to Y$ be a matching, which we view as a subset of $D\times Y$.
  Note that
  \begin{align*}
    D &= (D\cap \alpha(t))\sqcup (D\cap \alpha(T_{c_1})\setminus \alpha(t)) \sqcup \cdots \sqcup (D\cap \alpha(T_{c_k})\setminus \alpha(t))\\
    Y &= (Y\cap \chi(t))\sqcup (Y\cap \chi(T_{c_1})\setminus \chi(t)) \sqcup \cdots \sqcup (Y\cap \chi(T_{c_k})\setminus \chi(t))
  \end{align*}
  Let's decompose $\pi$ in a similar way as above.
  Let $\pi_t$ be the submatching of $\pi$ with domain contained in $D\cap \alpha(t)$ and range contained in $Y\cap\chi(t)$.
  Equivalently, $\pi_t = \pi\cap (D\cap \alpha(t))\times (Y\cap\chi(t))$.
  Observe that if some $a\in D\cap \alpha(t)$ is not in the domain of $\pi_t$, then $a \in \alpha(T_c),\pi(a)\in \chi(T_c)$ for some child $c$.
  The reason is that by definition of tree decomposition, there must be some node $t_p$ with $\pi(a)\in \chi(t_p),a\in \alpha(t_p)$.
  However, the assumption on $a$ says that $\pi(a)\in \chi(T_t)\setminus \chi(t)$ and thus  we must have $t_p \in T_{c}$ for some $c$.
  Similarly, if some $x \in Y\cap \chi(t)$ is not in the range of $\pi_t$, then $x\in \chi(T_c),\pi^{-1}(x)\in \alpha(T_c)$ for some child $c$.
  In the same way, we have the following
  \begin{align*}
  \pi(D\cap\alpha(T_c)\setminus \alpha(t)) &\subset D\cap\chi(T_c) \\ 
  \pi^{-1}(Y\cap\chi(T_c)\setminus \chi(t))&\subset Y\cap\alpha(T_c)
  \end{align*}

  The above paragraph implies that we can decompose 
  \begin{align}\label{eq:partitionfullpi}
  \pi = \pi_t \sqcup \pi_{c_1}\sqcup \cdots\sqcup \pi_{c_k}
  \end{align}
  in such a way that $\pi_t\subset (D\cap\alpha(t))\times(Y\cap\chi(t))$ and for each child $c$ we have that $\pi_{c} \subset (D\cap\alpha(T_c))\times(Y\cap\chi(T_c))$ and $\pi_c$ satisfies~\eqref{eq:overcountpi}.
  Moreover, it is easy to see that such decomposition is unique.

  Let $Y_t \subset Y\cap \chi(t)$ be the range of $\pi_t$ and let $Y_c\subset Y\cap\chi(T_c)$ be the range of $\pi_c$.
  Analogously, define $D_t\subset D\cap\alpha(t)$ and $D_c\subset D\cap\alpha(T_c)$ as the domains of $\pi_t,\pi_c$. 
  Observe that $\chi(T_c)\setminus  \chi(t) \subset Y_c$, as if $x \in \chi(T_c)\setminus \chi(t)$ then $(\pi^{-1}(x),x)\in \pi_c$ by construction, and thus  $x\in Y_c$.
  Similarly, $\chi(T_c)\setminus  \chi(t) \subset Y_c$.
  Therefore, the equations~\eqref{eq:partitiondomainrangepi} are satisfied.

  Thus, for any matching $\pi\subset D\times Y$ there is a unique partition as in \eqref{eq:partitionfullpi} such that the ranges and domains $Y_t,Y_c,D_t,D_c$ satisfy \eqref{eq:partitiondomainrangepi}.
  On the other hand, assume that $Y_t,Y_c,D_t,D_c$  satisfy \eqref{eq:partitiondomainrangepi}, and we are given some matchings $\pi_t,\pi_c$ with these ranges and domains and such that \eqref{eq:overcountpi} holds.
  Then equations \eqref{eq:partitiondomainrangepi} tell us that we can merge them into a matching $\pi$ with domain $D$ and range $Y$.
  Condition \eqref{eq:overcountpi} ensures we are not overcounting, as it implies that decomposition~\eqref{eq:partitionfullpi} is unique.
  These remarks imply equation~\eqref{eq:recursivepermDR}.
\end{proof}
\begin{rem}
  Let $(T^X,\chi)$ be a tree decomposition of the column graph $G^X$, and let $(T,\alpha,\chi)$ be the corresponding decomposition of the bipartite graph $G$ given in Example~\ref{exmp:GbetterGX}.
  In such case, the above lemma reduces to Lemma~\ref{thm:recursepermcols}.
\end{rem}

We now derive the recursion formula in Lemma~\ref{thm:recursepermbipart}, which follows from the above lemma by using inclusion-exclusion.

\begin{proof}[Proof of Lemma~\ref{thm:recursepermbipart}]
  For a child $c$ of $t$, we will show that 
  \begin{align}\label{eq:count_inc_exc}
    \myperm^*(D_c,Y_c) &=  \sum_{
      \substack{
        D_{tc}\sqcup D_{cc} = D_c \\
        Y_{tc}\sqcup Y_{cc} = Y_c\\
        D_{tc}\subset \alpha(t), Y_{tc}\subset \chi(t)}
    }(-1)^{\card{D_{tc}}}  \myperm(D_{tc},Y_{tc})\myperm(D_{cc},Y_{cc}).
  \end{align}
  Combining equations~\eqref{eq:recursivepermDR} and~\eqref{eq:count_inc_exc} we obtain equation~\eqref{eq:recursepermDRfull}, concluding the proof.

  Given a matching $\pi_c:D_c\to Y_c$, let $I(\pi_c):=\pi_c\cap (\alpha(t)\times\chi(t))$ and let $I_\alpha(\pi_c)\subset \alpha(t)$, $I_\chi(\pi_c)\subset \chi(t)$ be the domain and range of $I(\pi_c)$.
  For some $D_{tc}\subset D_c\cap \alpha(t)$, $Y_{tc}\subset Y_c\cap \chi(t)$ with $\card{D_{tc}}=\card{Y_{tc}}$, let 
  \begin{align*}
    \myperm^*(D_c,Y_c;D_{tc},Y_{tc}) := 
    \sum_{\substack{\
      \pi_c:D_c\to Y_c\\ I_\alpha(\pi_c)=D_{tc}\\ I_\chi(\pi_c)=Y_{tc}}}
      \prod_{a}M_{a,\pi_c(a)}.
  \end{align*}
  Note that $\myperm^*(D_c,Y_c) = \myperm^*(D_c,Y_c;\emptyset,\emptyset)$.
  Observe now that given matchings $\pi_{tc}:D_{tc}\to Y_{tc}$ and $\pi_{cc}:D_c\setminus D_{tc}\to Y_c\setminus Y_{tc}$, we can merge them into a matching $\pi_c^*: D_c\to Y_c$ that satisfies $ I_\alpha(\pi_c^*)\supset D_{tc}$, $I_\chi(\pi_c^*)\supset Y_{tc}$.
  Therefore, we have the following equation
  \begin{align*}
    \myperm(D_{tc},Y_{tc})\,\myperm(D_{c}\setminus D_{tc},Y_{c}\setminus Y_{tc}) = 
    \sum_{\substack{D_{tc}^*\supset D_{tc}\\Y_{tc}^*\supset Y_{tc}}}
    \myperm^*(D_c,Y_c;D_{tc}^*,Y_{tc}^*).
  \end{align*}

  Based on the above formula, we can now find $\myperm^*(D_c,Y_c)$ using inclusion-exclusion (or M\"obius inversion):
  \begin{align*}
    \myperm^*(D_c,Y_c;\emptyset,\emptyset) &= \sum_{i} \sum_{
      \substack{
        D_{tc}\subset D_c\cap \alpha(t)\\
        Y_{tc}\subset Y_c\cap \chi(t)\\
        \card{D_{tc}}=\card{Y_{tc}}=i}
      }(-1)^i \myperm(D_{tc},Y_{tc})\,\myperm(D_c\setminus D_{tc},Y_c\setminus Y_{tc}).
  \end{align*}
  Rewriting the above equation leads to~\eqref{eq:count_inc_exc}, as wanted.
\end{proof}

\subsection{Complexity analysis}

We just derived the recursion formula~\eqref{eq:recursepermDRfull} which is used in Algorithm~\ref{alg:bipartiteperm}.
As in the proof of Lemma~\ref{thm:countpartitions}, this formula is a subset convolution and thus it can be evaluated efficiently using the algorithm from~\cite{Bjoerklund2007}.
The overall running time of Algorithm~\ref{alg:bipartiteperm} is stated now.

\begin{thm}\label{thm:bipartitepermanent}
  Let $M$ be a matrix with associated bipartite graph $G$. 
  Let $(T,\alpha,\chi)$ be a bipartite decomposition of $G$ of width $\omega$.
  Then we can compute $\perm(M)$ in $\widetilde{O}(n\, 2^{\omega})$.
\end{thm}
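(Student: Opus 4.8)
The plan is to mirror the structure of the proof of Theorem~\ref{thm:treewidthpermanent}, which established the $\widetilde{O}(n\,4^{\omega_X})$ bound for the column-graph algorithm. The overall strategy is a node-by-node cost accounting over the tree $T$: I will show that processing each node $t$ with $k_t$ children costs $\widetilde{O}((k_t+1)\,2^{\omega})$, and since $\sum_t k_t$ equals the number of tree edges and the tree has $O(n)$ nodes, summing over all nodes yields the claimed $\widetilde{O}(n\,2^{\omega})$.

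First I would handle the base case. For a leaf $t$, the table $Q_t(\bar D,\bar Y)=\myperm(\bar D,\bar Y)$ must be computed for all $\bar D\subset\alpha(t)$ and $\bar Y\subset\chi(t)$. The crucial observation is that $\card{\alpha(t)}+\card{\chi(t)}\leq \omega$ by the definition of the width of a bipartite decomposition. Applying Lemma~\ref{thm:subperms} with $n_1=\card{\alpha(t)}$ and $n_2=\card{\chi(t)}$, the cost of computing all these partial permanents is $O(n_{max}^2\,2^{n_1+n_2})=\widetilde{O}(2^{\omega})$. This is the key improvement over the column-graph case: because rows and columns share a single budget $\omega$, the base case is $\widetilde{O}(2^{\omega})$ rather than $\widetilde{O}(4^{\omega})$. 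The same $Q_t$ table is also what \Call{SubPerms}{} computes at an internal node, at the same cost.

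Next I would treat the recursion at an internal node. Lemma~\ref{thm:recursepermbipart} gives the formula~\eqref{eq:recursepermDRfull} expressing $\myperm(D,Y)$ in terms of the children's tables. As noted just before the theorem, after the sign-twist and shift operations performed in \Call{EvalRecursion}{} (defining $Q_{tc_j}$ and $Q_{cc_j}$), this formula becomes a subset convolution over the subsets of $\alpha(t)\cup\chi(t)$ with $2k_t+1$ factors. The ground set has size $\card{\alpha(t)}+\card{\chi(t)}\leq\omega$, so by the algorithm of~\cite{Bjoerklund2007} the convolution of $2k_t+1$ functions over this ground set costs $\widetilde{O}(k_t\,2^{\omega})$, exactly as in Lemma~\ref{thm:countpartitions}. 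Here I expect the main subtlety to be confirming that the convolution is genuinely over a single ground set of size at most $\omega$ — one must check that the partition constraints on the $D$-part and the $Y$-part can be handled jointly (a convolution over $\{0,1\}^{\alpha(t)}\times\{0,1\}^{\chi(t)}$, which is itself a subset convolution over the disjoint union $\alpha(t)\sqcup\chi(t)$) so that the exponent stays $\omega$ and does not become $2\omega$.

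Combining the two parts, each node costs $\widetilde{O}(2^{\omega})$ for \Call{SubPerms}{} plus $\widetilde{O}(k_t\,2^{\omega})$ for \Call{EvalRecursion}{}, giving $\widetilde{O}((k_t+1)\,2^{\omega})$ per node. Finally I would note that $\perm(M)=\myperm(A,X)=P_{\mathit{root}}(\alpha(\mathit{root}),\chi(\mathit{root}))$, so the value is correctly read off at the root, and the total cost $\sum_t \widetilde{O}((k_t+1)\,2^{\omega})=\widetilde{O}(n\,2^{\omega})$ follows since the tree has $O(n)$ nodes. The main obstacle, as flagged above, is ensuring that treating the row-budget and column-budget together does not inflate the exponent; the definition of width as $\card{\alpha(t)}+\card{\chi(t)}$ is precisely what keeps the combined subset convolution at $2^{\omega}$ rather than $2^{\card{\alpha(t)}}\cdot 2^{\card{\chi(t)}}$ worst-casing to $4^{\omega}$.
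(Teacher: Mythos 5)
Your proposal is correct and follows essentially the same route as the paper's proof: the base case via Lemma~\ref{thm:subperms} with the combined budget $\card{\alpha(t)}+\card{\chi(t)}\leq\omega$ giving $\widetilde{O}(2^{\omega})$, the recursion~\eqref{eq:recursepermDRfull} evaluated as a subset convolution over $\alpha(t)\sqcup\chi(t)$ as in Lemma~\ref{thm:countpartitions}, and the per-node accounting $\widetilde{O}((k_t+1)2^{\omega})$ summed over $O(n)$ nodes. The subtlety you flag---that the convolution is over the single ground set $\alpha(t)\sqcup\chi(t)$ of size at most $\omega$, so the exponent does not double---is exactly the point the paper relies on (implicitly, by citing Lemma~\ref{thm:countpartitions}), and your treatment of it is sound.
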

\begin{proof}
  The proof is very similar to the one of Theorem~\ref{thm:treewidthpermanent}.
  For each node $t\in T$, we will compute $\myperm(D,Y)$ for every pair $D,Y$ that satisfies equation~\eqref{eq:validpairDR}.
  We will show that for each $t$ we can compute $\myperm(D,Y)$ for all such $D,Y$ in $\widetilde{O}((k_t+1)2^{\omega})$, where $k_t$ is the number of children of $t$.
  Observe that for the root node $t_r$ we will compute $\perm(M)= \myperm(\alpha(T_r),\chi(T_r))$.
  This will conclude the proof.

  The base case is when $t$ is a leaf of $T$, so that $T_t = \{t\}$.
  Let $M_0$ be the submatrix of $M$ with rows $\alpha(t)$ and columns $\chi(t)$.
  We need to obtain the permanent of all submatrices of $M_0$.
  As $\card{\alpha(t)}+\card{\chi(t)}\leq \omega$, we can do this in $\widetilde{O}(2^{\omega})$ using Lemma~\ref{thm:subperms}.

  Assume now that $t$ is an internal node of $T$ with $k_t$ children and let $D,Y$ that satisfy~\eqref{eq:validpairDR}.
  Then equation~\eqref{eq:recursepermDRfull} tells us how to find $\myperm(D,Y)$.
  Similarly as in Lemma~\ref{thm:countpartitions}, we can evaluate this formula in $\widetilde{O}(k_t\,2^{\omega})$, assuming we know the values of the terms in the recursion.
  Note that we already found $\myperm(D_{cc},Y_{cc})$ for all children in the recursion.
  We can find $\myperm(D_t,Y_t)$ for all $D_t,Y_t$ in $\widetilde{O}(2^{\omega})$ in the same way as for the base case, and this includes the values $\myperm(D_{tc},Y_{tc})$.
  Then, it takes $\widetilde{O}(2^{\omega}+k_t\,2^{\omega})=\widetilde{O}((k_t+1)2^{\omega})$ to compute $\myperm(D,Y)$ for a all $D,Y$.
\end{proof}

Similarly as in Theorem~\ref{thm:treewidthdet}, we can find an analogous algorithm for the determinant.

\begin{thm}\label{thm:bipartitedet}
  Let $M$ be a matrix with associated bipartite graph $G$. 
  Let $(T,\alpha,\chi)$ be a bipartite decomposition of $G$ of width $\omega$.
  Then we can compute $\det(M)$ in $\widetilde{O}(n^2+n\,3^{\omega})$.
\end{thm}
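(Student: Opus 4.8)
The plan is to combine the two ingredients already in hand: the inclusion--exclusion bipartite recursion of Lemma~\ref{thm:recursepermbipart}, and the sign bookkeeping developed for the column-graph determinant in Section~\ref{s:treedeterminant}. First I would introduce the partial determinants $\mydet(D,Y)$ as in~\eqref{eq:defnpartialdet}, together with a starred version $\mydet^*(D,Y)$ that carries the factor $\sgn(\pi)$ and restricts the sum to bijections $\pi$ with $\pi\cap(\alpha(t)\times\chi(t))=\emptyset$, exactly mirroring the definition of $\myperm^*$. Then I would reprove Lemma~\ref{thm:recursepermbipart2} with signs: the decomposition of a matching $\pi:D\to Y$ as $\pi=\pi_t\sqcup\pi_{c_1}\sqcup\cdots\sqcup\pi_{c_k}$ is unchanged, but now the multiplicativity of the sign (Lemma~\ref{thm:recursionsign}) lets me factor $\sgn(\pi)=\sgn(\mathcal{D})\,\sgn(\mathcal{Y})\prod_j\sgn(\pi_j)$, where $\mathcal{D}=(D_t,D_{c_1},\ldots)$ and $\mathcal{Y}=(Y_t,Y_{c_1},\ldots)$ are the ordered partitions induced by the tree decomposition. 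This yields the bipartite analogue of~\eqref{eq:recursedetcols},
\[
  \mydet(D,Y)=\sgn(\mathcal{D})\sum_{\mathcal{Y}}\sgn(\mathcal{Y})\,\mydet(D_t,Y_t)\prod_{j=1}^k\mydet^*(D_{c_j},Y_{c_j}).
\]

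Next I would redo the inclusion--exclusion step~\eqref{eq:count_inc_exc} with signs. Splitting each $\pi_c=\pi_{tc}\sqcup\pi_{cc}$ according to whether an edge lies in $\alpha(t)\times\chi(t)$, Lemma~\ref{thm:recursionsign} again factors the sign across the two-block partitions of $D_c$ and $Y_c$, so M\"obius inversion over this split gives a signed version of~\eqref{eq:count_inc_exc}: it expresses $\mydet^*(D_c,Y_c)$ as a signed sum of products $\mydet(D_{tc},Y_{tc})\,\mydet(D_{cc},Y_{cc})$ carrying the factor $(-1)^{\card{D_{tc}}}$ together with the relevant partition signs. Substituting this into the previous display produces the full determinant recursion, the exact analogue of~\eqref{eq:recursepermDRfull} but decorated with $\sgn(\cdot)$ factors; as in Section~\ref{s:treedeterminant}, these factors break the subset-convolution structure, so the algorithm from~\cite{Bjoerklund2007} is no longer applicable.

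For the complexity I would follow the proof of Theorem~\ref{thm:bipartitepermanent}, replacing the subset-convolution evaluation by the sequential scheme of Lemma~\ref{thm:evaluaterecursiondet}. The base case --- computing the determinants of all submatrices of the block $M_0$ with rows $\alpha(t)$ and columns $\chi(t)$ --- costs $\widetilde{O}(2^{\omega})$ by an expansion by minors as in Lemma~\ref{thm:subperms}, using $\card{\alpha(t)}+\card{\chi(t)}\le\omega$, just as for the permanent. For each internal node $t$ I would process the $2k_t$ child-tables one at a time over the ground set $\alpha(t)\sqcup\chi(t)$ of size at most $\omega$; each accumulation is a sum over pairs of disjoint subsets and hence costs $O(3^{\omega})$, while the sign factors are handled by the same $\widetilde{O}(n)$ precomputation of inversion counts $N_x^{\Delta}$ as in Lemma~\ref{thm:evaluaterecursiondet}, after which each is read off in $\widetilde{O}(1)$. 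Thus each node costs $\widetilde{O}(k_t(n+3^{\omega}))$, and since $\sum_t k_t=O(n)$ the total is $\widetilde{O}(n\,2^{\omega}+n^2+n\,3^{\omega})=\widetilde{O}(n^2+n\,3^{\omega})$, as claimed.

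The main obstacle I anticipate is bookkeeping the signs consistently through both splittings --- the tree-decomposition split into $\pi_t,\pi_{c_j}$ and the inclusion--exclusion split into $\pi_{tc},\pi_{cc}$ --- so that every $\sgn(\mathcal{D})$ and $\sgn(\mathcal{Y})$ factor can be regrouped into the one-at-a-time accumulation of~\eqref{eq:recursiondet} without recomputing inversions from scratch. This is precisely what forces the $3^{\omega}$ term (rather than the $2^{\omega}$ of Theorem~\ref{thm:bipartitepermanent}) together with the additional $\widetilde{O}(n)$ sign precomputation, exactly as in the column-graph determinant algorithm of Theorem~\ref{thm:treewidthdet}.
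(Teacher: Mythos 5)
Your proposal is correct and follows essentially the same route as the paper: the paper's proof of Theorem~\ref{thm:bipartitedet} likewise states the signed bipartite recursion with $\mathcal{D}=(D_t,D_{tc},D_{cc})$, $\mathcal{Y}=(Y_t,Y_{tc},Y_{cc})$ and defers to Section~\ref{s:treedeterminant}, Lemma~\ref{thm:evaluaterecursiondet} and Theorem~\ref{thm:treewidthdet} for the sign bookkeeping and the $\widetilde{O}(2^\omega)$ base case plus $\widetilde{O}(k_t(n+3^\omega))$ per-node cost. You simply spell out the derivation (signed analogues of Lemmas~\ref{thm:recursepermbipart2} and~\ref{thm:recursepermbipart}) in more detail than the paper does.
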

\begin{proof}
  We just need to follow the steps of Section~\ref{s:treedeterminant}.
  For instance, the recursion is
  \begin{align*}
    \mydet(D,Y) &= \sum_{\mathcal{D},\mathcal{Y}}\sgn(\mathcal{D})\sgn(\mathcal{Y}) \mydet(D_t,Y_t) \prod_{j=1}^k (-1)^{\card{D_{tc_j}}}  \mydet(D_{tc_j},Y_{tc_j})\mydet(D_{cc_j},Y_{cc_j})
  \end{align*}
  where $\mathcal{Y} = (Y_t,Y_{tc},Y_{cc}),\mathcal{D} =(D_t,D_{tc},D_{cc})$.
  The complexity analysis is basically the same as in the proof of Theorem~\ref{thm:treewidthdet}.
  The base case can be done in $\widetilde{O}(2^\omega)$ using an expansion by minors.
  The recursion can be evaluated in $\widetilde{O}(k_t(n+3^\omega))$ in a similar way as in Lemma~\ref{thm:evaluaterecursiondet}.
\end{proof}



\section{Mixed discriminant and higher dimensions}\label{s:discriminant}

The mixed discriminant of $n$ matrices is a common generalization of the permanent and the determinant.
As such, it is also hard to compute in the general case.
We show now that the techniques presented earlier generalize to compute mixed discriminants.
Even more, we show that this method extends to compute similar functions in higher dimensional tensors.

\subsection{Mixed discriminant}

Let $M$ be a list of $n$ matrices of size $n\times n$.
Equivalently, we can think of $M$ as a $n\times n\times n$ array.
We index the first coordinate with a set $A$, and the second and third coordinates with sets $X^1,X^2$.
The mixed discriminant of $M$ is given by
\begin{align*}
  \mdisc(M) := \sum_{\pi^1,\pi^2}\sgn(\pi^1)\sgn(\pi^2) \prod_{a\in A} M_{a,\pi^1(a),\pi^2(a)}
\end{align*}
where the sum is over all bijections $\pi^1: A\to X^1$ and $\pi^2: A\to X^2$, and $\sgn$ is the parity function.
For $a\in A$, let $M_a$ denote the $n\times n$ matrix obtained by fixing the first coordinate.
Observe that if $M_a = m$ for some matrix $m$ and for all $a\in A$, then $\mdisc(M) = n!\,\det(m)$.
In the case that $M_a$ is diagonal for all $a\in A$, then $\mdisc(M)= \perm(D)$ where $D$ is the matrix obtained by concatenating these diagonals.
Some of the properties of mixed discriminant are discussed in~\cite{Bapat1989}.

In the case of a $n\times n$ matrix, a bipartite graph was the natural structure to represent its sparsity.
Similarly, if we are given a sparse $n\times n\times n$ array $M$, a natural structure is a \emph{tripartite graph} $G$, as follows.
Let $G$ be the graph on $A\cup X^1\cup X^2$, where for each nonzero entry $M_{a,x_1,x_2}$ we put a triangle $\{a,x_1,x_2\}$.

We rephrase the definition of a tree decomposition of a tripartite graph $G$.
A \emph{tripartite decomposition} of $G$ is a tuple $(T,\alpha,\chi^1,\chi^2)$, where $T$ is a rooted tree, $\alpha: T\to \{0,1\}^A$, $\chi^1: T \to \{0,1\}^{X^1}$ and $\chi^2:T\to \{0,1\}^{X^2}$, that satisfies the following conditions.
\begin{enumerate}[ i.]
  \item The union of $\{\alpha(t)\}_{t\in T}$ (resp. $\chi^1$, $\chi^2$) is the whole $A$ (resp. $X^1$, $X^2$).
  \item For every triangle $(a, x_1,x_2)$ in $G$ there is a $t$ with $(a,x_1,x_2)\in (\alpha\times \chi^1\times\chi^2)(t)$.
  \item For every $a\in A$ (resp. $X^1$, $X^2$) the set $\{t: a\in \alpha(t)\}$ is a subtree of $T$. 
\end{enumerate}
The width of the decomposition is the largest of $\card{\alpha(t)}+\card{\chi^1(t)}+ \card{\chi^2(t)}$ among all nodes $t$.
Note that the above literals are consistent with the ones in Definition~\ref{defn:treedecomposition}.
In particular, observe that the second condition does not impose additional constraints due to Lemma~\ref{thm:treedecompclique}.

We proceed to extend the previous results to the mixed discriminant.
For some sets $D\subset A$, $Y^1\subset X^1$ and $Y^2\subset X^2$ we denote 
\begin{align}\label{eq:defnpartialdisc}
  \mydisc(D,Y^1,Y^2) := \sum_{\pi^1,\pi^2}\sgn(\pi^1)\sgn(\pi^2) \prod_{a \in D} M_{\pi^1(a)\pi^2(a)} 
\end{align}
where the sum is over all bijections $\pi^1:D\to Y^1$ and $\pi^2: D\to Y^2$.
This only makes sense if $\card{D}=\card{Y^1}=\card{Y^2}$, and otherwise we can define $\mydisc(D,Y^1,Y^2)=0$.

As for the case of the permanent, the dynamic program to compute $\mdisc(M)$ has two main steps: computing the mixed discriminant of all subarrays of $M$, and evaluating some recursion formula.
For the first step, it is easy to see that the approach from Lemma~\ref{thm:subperms} extends, as we show now.

\begin{lem}\label{thm:subdisc}
  Let $M_0$ be a $n_1\times n_2\times n_3$ array.
  Let $A_0,X_0^1,X_0^2$ be its set of coordinates, and let $S=\{(D,Y^1,Y^2)\subset A_0\times X_0^1\times X_0^2:\, \card{D}=\card{Y^1}=\card{Y^2}\}$.
  We can compute $\mydisc(D,Y^1,Y^2)$ for all triples in $S$ in $O(n_{max}^3\,2^{n_1+n_2+n_3})$, where $n_{max}=\max\{n_1,n_2,n_3\}$.
\end{lem}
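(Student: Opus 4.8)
The plan is to mimic the proof of Lemma~\ref{thm:subperms} exactly, replacing the expansion by minors for a single permanent with a \emph{double} expansion that handles the two independent bijections $\pi^1$ and $\pi^2$ simultaneously. First I would stratify the target set $S$ by cardinality: for $1\le i\le \min\{n_1,n_2,n_3\}$, set $S_i=\{(D,Y^1,Y^2):\,\card{D}=\card{Y^1}=\card{Y^2}=i\}$. The base case $i=0$ (or $i=1$) is trivial. The dynamic program computes all values on $S_i$ from the values on $S_{i-1}$, and the total cost is the number of triples processed times the per-triple work.

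The key step is the recursion on $\mydisc$. Fixing the first element $a_0$ of $D$, I would peel off its images under both bijections. Since $\pi^1(a_0)$ ranges over $Y^1$ and $\pi^2(a_0)$ ranges over $Y^2$ independently, and the two sign functions multiply, expanding along row $a_0$ gives
\begin{align*}
  \mydisc(D,Y^1,Y^2) = \sum_{x_1\in Y^1}\sum_{x_2\in Y^2} \varepsilon(x_1,x_2)\, M_{a_0,x_1,x_2}\;\mydisc(D\setminus a_0,\,Y^1\setminus x_1,\,Y^2\setminus x_2),
\end{align*}
where $\varepsilon(x_1,x_2)=\pm1$ is the product of the two cofactor signs, i.e. $(-1)$ raised to the position of $x_1$ in $Y^1$ plus the position of $x_2$ in $Y^2$. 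This is simply the Laplace expansion applied to each of the two determinantal sums. Each of the two image choices removes one element from the corresponding set, so the subproblem lies in $S_{i-1}$ and has already been computed.

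For the complexity bookkeeping, the number of triples in $S$ is at most $2^{n_1}2^{n_2}2^{n_3}=2^{n_1+n_2+n_3}$, since each of $D,Y^1,Y^2$ is a subset of its respective coordinate set. For a single triple, the double sum ranges over at most $n_2 n_3\le n_{max}^2$ pairs $(x_1,x_2)$, and for each pair I must locate the reduced sets $D\setminus a_0$, $Y^1\setminus x_1$, $Y^2\setminus x_2$ and the sign $\varepsilon$, which costs $O(n_{max})$ with a suitable encoding of subsets as bitmasks (computing a position/rank inside a sorted set is linear in the set size). This gives $O(n_{max}^3)$ work per triple and the claimed bound $O(n_{max}^3\,2^{n_1+n_2+n_3})$ overall.

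The only genuinely delicate point—the place I would be most careful—is the sign factor $\varepsilon(x_1,x_2)$. I must verify that extracting $a_0$ contributes independent cofactor signs for the two bijections, so that $\sgn(\pi^1)\sgn(\pi^2)$ factors cleanly as the product of the parity of the reduced permutations and the two positional signs; this is exactly the multiplicativity used in Lemma~\ref{thm:recursionsign}, now applied in each coordinate separately. Everything else is the routine adaptation of Lemma~\ref{thm:subperms}, the extra factor of $n_{max}$ and the extra factor of $2^{n_3}$ (compared to the permanent case) arising precisely from the second index set $X^2$.
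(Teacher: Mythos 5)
Your proposal is correct and follows essentially the same route as the paper: stratify $S$ by cardinality, expand along the first element $a_0$ of $D$ with a double sum over $(x_1,x_2)\in Y^1\times Y^2$ and cofactor sign $\varepsilon(j_1,j_2)=(-1)^{j_1+j_2}$, and charge $O(n_{max}^3)$ per triple. Your extra care with the factorization of $\sgn(\pi^1)\sgn(\pi^2)$ is exactly the point the paper dismisses with ``it is easy to see,'' so nothing is missing.
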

\begin{proof}
  For $i=1,2,\ldots,\min\{n_1,n_2,n_3\}$ let
  \begin{align*}
    S_i := \{(D,Y^1,Y^2)\subset A_0\times X_0^1\times X_0^2:\, \card{D}=\card{Y^1}=\card{Y^2} = i\}.
  \end{align*}
  We can find $\mydisc(D,Y^1,Y^2)$ for all $(D,Y^1,Y^2)\in S_i$ using the values of $S_{i-1}$ as follows.
  Let $a_0$ be the first element in $D$, it is easy to see that
  \begin{align*}
    \mydisc(D,Y^1,Y^2) = \sum_{x_1\in Y^1,x_2\in Y^2}\epsilon(x_1,x_2) M_{a_0,x_1,x_2}\,\mydisc(D\setminus a_0, Y^1\setminus x_1, Y^2\setminus x_2)
  \end{align*}
  where $\epsilon(x_1,x_2)$ is either $+1$ or $-1$.
  To be concrete, if we identify $Y^1,Y^2$ with the set $\{1,\ldots,i\}$, then $\epsilon(j_1,j_2) = (-1)^{j_1+j_2}$.
  Thus, for each triple $(D,Y^1,Y^2)$ we just need to loop over $n_2n_3$ terms, and for each we need $O(n_{max})$ to find $D\setminus a_0, Y^1\setminus x_1, Y^2\setminus x_2$.
\end{proof}

The recursion formula we need to evaluate is given in the following lemma.

\begin{lem}\label{thm:recursedisc}
  Let $M$ be a list of $n$ matrices of size $n\times n$, with associated tripartite graph $G$. 
  Let $(T,\alpha,\chi^1,\chi^2)$ be a tripartite decomposition of $G$.
  Let $t$ be an internal node of $T$, let $T_t\subset T$ denote the subtree rooted in $t$, and  let $D,Y^1,Y^2$ be such that
  \begin{subequations} \label{eq:validtripleDYR}
  \begin{align}
  \alpha(T_t)\setminus \alpha(t)&\subset D\subset \alpha(T_t),\;\;\;\;\;\;\;\;\;\;
  \card{D} = \card{Y^1} = \card{Y^2}\\
  \chi^1(T_t)\setminus \chi^1(t)&\subset Y^1\subset \chi^1(T_t),\;\;\;
  \chi^2(T_t)\setminus \chi^2(t)\subset Y^2\subset \chi^2(T_t)
  \end{align}
  \end{subequations}
  Let $c_1,\ldots,c_k$ be the children of $t$. Then 
  \begin{align}\label{eq:recursediscDYRfull}
    \begin{split}
    \mydisc(D,Y^1,Y^2) = \sum_{\mathcal{D},\mathcal{Y}^1,\mathcal{Y}^2}&\sgn(\mathcal{Y}^1)\sgn(\mathcal{Y}^2) \mydisc(D_t,Y^1_t,Y^2_t) \\
    &\prod_{j=1}^k (-1)^{\card{D_{tc_j}}}  \mydisc(D_{tc_j},Y^1_{tc_j},Y^2_{tc_j})\mydisc(D_{cc_j},Y^1_{cc_j},Y^2_{cc_j})
    \end{split}
  \end{align}
  where $\mydisc(\cdot,\cdot,\cdot)$ is as in \eqref{eq:defnpartialdisc}, $\sgn(\cdot,\cdot)$ as in Definition~\ref{defn:parityfunction}, and the sum is over all $\mathcal{D}=(D_t,D_{tc_1},\ldots)$, $\mathcal{Y}^1=(Y^1_t,Y^1_{tc_1},\ldots)$, $\mathcal{Y}^2=(Y^2_t,Y^2_{tc_1},\ldots)$ satisfying:
  \begin{align*}
    Z &=  Z_t \sqcup (Z_{tc_1}\sqcup Z_{cc_1} \sqcup \cdots \sqcup Z_{tc_k}\sqcup Z_{cc_k}) &\mbox{where }Z\in\{D,Y^1,Y^2\}\\
    \zeta(T_{c_j})\setminus  \zeta(t) &\subset Z_{cc_j} \subset \zeta(T_{c_j}) \;\;\;\;
   Z_{t} \subset \zeta(t)  \;\;\;\;
   Z_{tc_j} \subset \zeta(t)\cap\zeta(t_{c_j}) &\mbox{where }\zeta\in\{\alpha,\chi^1,\chi^2\}.
  \end{align*}
\end{lem}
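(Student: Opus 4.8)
The plan is to follow the two-step structure used for the bipartite permanent (Lemma~\ref{thm:recursepermbipart}), while tracking the parity factors exactly as in the determinant recursion (Lemma~\ref{thm:recursedetcols}). The essential new observation is that, although the mixed discriminant involves a \emph{pair} of bijections $(\pi^1,\pi^2)$ sharing the common domain $D$, each $a\in D$ contributes a single \emph{triangle} $(a,\pi^1(a),\pi^2(a))$ that, by the triangle condition~(ii) of the tripartite decomposition together with Lemma~\ref{thm:treedecompclique}, lies in a single bag. Thus the pair $(\pi^1,\pi^2)$ decomposes along the tree in lockstep, exactly as a single matching does in the bipartite case, and the combinatorial heart of the argument is identical to the proof of Lemma~\ref{thm:recursepermbipart2}.

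First I would introduce a starred partial discriminant $\mydisc^*(D_c,Y^1_c,Y^2_c)$, summing $\sgn(\pi^1_c)\sgn(\pi^2_c)\prod_a M_{a,\pi^1_c(a),\pi^2_c(a)}$ over those pairs of bijections such that no $a$ has its triangle entirely inside the central bag, i.e.\ there is no $a$ with $a\in\alpha(t)$, $\pi^1_c(a)\in\chi^1(t)$ and $\pi^2_c(a)\in\chi^2(t)$ simultaneously. Mimicking the proof of Lemma~\ref{thm:recursepermbipart2}, every pair $(\pi^1,\pi^2)$ counted by $\mydisc(D,Y^1,Y^2)$ then decomposes uniquely into a central part (those $a$ whose triangle lies in bag $t$) and one starred part per child $c_j$ (those $a$ whose triangle lies in $T_{c_j}$), giving an intermediate recursion of the form $\mydisc(D,Y^1,Y^2)=\sum (\mathrm{sign})\,\mydisc(D_t,Y^1_t,Y^2_t)\prod_j \mydisc^*(D_{c_j},Y^1_{c_j},Y^2_{c_j})$. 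The uniqueness of the decomposition uses the same subtree argument as in the bipartite case: if an image of $a$ lies outside a central bag but inside $\chi^i(T_{c_j})$, the subtree property forces the whole triangle of $a$ into $T_{c_j}$.

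The parity factors are produced by applying Lemma~\ref{thm:recursionsign} separately to $\pi^1$ and to $\pi^2$ for the induced ordered partition $\mathcal{D}$ of the common domain and the partitions $\mathcal{Y}^1,\mathcal{Y}^2$ of the two ranges. This yields $\sgn(\pi^1)\sgn(\pi^2)=\sgn(\mathcal{D})^2\,\sgn(\mathcal{Y}^1)\sgn(\mathcal{Y}^2)\prod_j\sgn(\pi^1_j)\sgn(\pi^2_j)$; the key simplification, which is what distinguishes this case from the determinant and explains the absence of a domain sign in~\eqref{eq:recursediscDYRfull}, is that $\mathcal{D}$ is a single partition shared by both bijections, so $\sgn(\mathcal{D})^2=1$ and only the two range signs survive.

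Finally I would convert each starred factor $\mydisc^*(D_{c_j},Y^1_{c_j},Y^2_{c_j})$ into genuine discriminants by inclusion--exclusion, exactly as in~\eqref{eq:count_inc_exc}: the M\"obius inversion over the subset of elements whose triangle \emph{could} lie in the central bag contributes the factor $(-1)^{\card{D_{tc_j}}}$ and splits $(D_{c_j},Y^1_{c_j},Y^2_{c_j})$ into the blocks $(D_{tc_j},Y^1_{tc_j},Y^2_{tc_j})$ and $(D_{cc_j},Y^1_{cc_j},Y^2_{cc_j})$. I expect the main obstacle to be the sign bookkeeping in this last step: one must check that the range-split parities generated inside the inclusion--exclusion, combined with those from the main decomposition, reassemble into the single full-partition signs $\sgn(\mathcal{Y}^1)\sgn(\mathcal{Y}^2)$ of~\eqref{eq:recursediscDYRfull} (using the chaining property of partition signs already exploited in Lemma~\ref{thm:evaluaterecursiondet}), while the domain-split parities again cancel in pairs and the purely combinatorial $(-1)^{\card{D_{tc_j}}}$ is left untouched.
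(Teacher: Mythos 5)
Your proposal follows essentially the same route as the paper's proof: decompose the pair $(\pi^1,\pi^2)$ as a subset of $D\times Y^1\times Y^2$ along the tree, introduce a starred discriminant excluding triangles inside the central bag, apply Lemma~\ref{thm:recursionsign} to each of $\pi^1,\pi^2$ so that $\sgn(\mathcal{D})^2=1$ leaves only the range signs, and recover genuine discriminants by inclusion--exclusion with the factor $(-1)^{\card{D_{tc_j}}}$. The sign-bookkeeping point you flag at the end is real but resolves exactly as you anticipate via the refinement property of partition signs; the paper simply asserts this when it declares the sign factors can be ignored after the initial factorization.
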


\begin{proof}
  Let $\pi^1:D\to Y^1$ and $\pi^2:D\to Y^2$ be matchings.
  Observe that Lemma~\ref{thm:recursionsign} says that  $\sgn(\pi^1)$ will factor in the tree decomposition, leading to the term $\sgn(\mathcal{D})\sgn(\mathcal{Y}^1)$.
  Similarly, $\sgn(\pi^2)$ leads to the term $\sgn(\mathcal{D})\sgn(\mathcal{Y}^2)$ (note that $\sgn(\mathcal{D})$ cancels).
  Therefore, for the rest of the proof we can ignore all sign factors.
  We can think of the pair $(\pi^1,\pi^2)$ as a subset of $D\times Y^1\times Y^2$.
  In a similar way as we did in the proof of Lemma~\ref{thm:recursepermbipart2}, there is a unique decomposition of $(\pi^1,\pi^2)$ of the form
  \begin{align}\label{eq:partitionfullrhopi}
    (\pi^1,\pi^2) = (\pi^1_t,\pi^2_t) \sqcup (\pi^1_{c_1},\pi^2_{c_1})\sqcup \cdots\sqcup (\pi^1_{c_k},\pi^2_{c_k})
  \end{align}
  where $(\pi^1_t,\pi^2_t)\subset (\alpha\times\chi^1\times \chi^2)(t)$, and for each child $c$ we have that $(\pi^1_c,\pi^2_{c}) \subset (\alpha\times\chi^1\times \chi^2)(T_c)$ and
  \begin{align}\label{eq:overcountrhopi}
    (\pi^1_c,\pi^2_c)\cap (\alpha\times\chi^1\times \chi^2)(t) = \emptyset.
  \end{align}

  Note that by construction $\pi^1_t,\pi^2_t$ have the same domain.
  Let $D_t$ denote this domain, and let $Y^1_t,Y^2_t$ denote the respective ranges.
  Similarly, let $D_c,Y^1_c,Y^2_c$ be the domain and ranges of $\pi^1_c,\pi^2_c$.
  Then we have
  \begin{subequations}\label{eq:partitiondomainrangerhopi}
  \begin{align}
    Z &=  Z_t \sqcup (Z_{c_1} \sqcup \cdots \sqcup Z_{c_k}) &\mbox{where }Z\in\{D,Y^1,Y^2\}\\
   \zeta(T_c)\setminus  \zeta(t) &\subset Z_{c} \subset \zeta(T_c) \;\;\;\;\;\;
   Z_t \subset \zeta(t) &\mbox{where }\zeta\in\{\alpha,\chi^1,\chi^2\}
  \end{align}
  \end{subequations}

  Thus, for matchings $(\pi^1,\pi^2)\in D\times Y^1\times Y^2$ there is a unique partition as in \eqref{eq:partitionfullrhopi} and the corresponding domains and ranges satisfy \eqref{eq:partitiondomainrangerhopi}.
  On the other hand, assume that $D_t,D_c,Y^1_t,Y^1_c,Y^2_t,Y^2_c$  satisfy \eqref{eq:partitiondomainrangepi}, and we are given some matchings $\pi^1_t,\pi^1_c,\pi^2_t,\pi^2_c$ with these domains and ranges and such that  $(\pi^1_c,\pi^2_c)$ satisfy \eqref{eq:overcountrhopi}.
  Then equations \eqref{eq:partitiondomainrangerhopi} tell us that we can merge them into matchings $\pi^1,\pi^2$ with domain $D$ and ranges $Y^1,Y^2$.
  Condition  \eqref{eq:overcountrhopi} ensures we are not overcounting.
  Then we have
  \begin{align}\label{eq:recursivediscDY^1R}
    \mydisc(D,Y^1,Y^2) &= \sum_{D_t,D_c,Y^1_t,Y^1_c,Y^2_t,Y^2_{c}} \mydisc(D_t,Y^1_t,Y^2_t) \prod_{j=1}^k \mydisc^*(D_{c_j},Y^1_{c_j},Y^2_{c_j}) 
  \end{align}
  where the sum is over all triples as in~\eqref{eq:partitiondomainrangerhopi}, and where $\mydisc^*(D_c,Y^1_c,Y^2_c)$ is similar to $\mydisc(D_c,Y^1_c,Y^2_c)$, except that it only uses matchings $(\pi^1_c,\pi^2_c)$ satisfying~\eqref{eq:overcountrhopi}.

  Finally, we can obtain $\mydisc^*(D_c,Y^1_c,Y^2_c)$ using inclusion-exclusion in a similar way as in the proof of Lemma~\ref{thm:recursepermbipart}:
  \begin{align} \label{eq:count_inc_exc_DY^1R}
    \mydisc^*(D_c,Y^1_c,Y^2_c)  
    & = \sum_{
      \substack{
        D_{tc}\sqcup D_{cc} = D_c,\;\;  D_{tc}\subset \alpha(t)\\
        Y^1_{tc}\sqcup Y^1_{cc} = Y^1_c,\;\;  Y^1_{tc}\subset \chi^1(t)\\ 
        Y^2_{tc}\sqcup Y^2_{cc} = Y^2_c,\;\;  Y^2_{tc}\subset \chi^2(t)
      }
    }(-1)^{\card{D_{tc}}}  \mydisc(D_{tc},Y^1_{tc},Y^2_{tc})\mydisc(D_{cc},Y^1_{cc},Y^2_{cc}).
  \end{align}
  Combining equations~\eqref{eq:recursivediscDY^1R} and~\eqref{eq:count_inc_exc_DY^1R}, we obtain equation~\eqref{eq:recursediscDYRfull}.
\end{proof}

We proceed to the complexity analysis.

\begin{thm}\label{thm:discriminant}
  Let $M$ be a list of matrices with associated tripartite graph $G$. 
  Let $(T,\alpha,\chi^1,\chi^2)$ be a tripartite decomposition of $G$ of width $\omega$.
  Then we can compute $\mdisc(M)$ in $\widetilde{O}(n^2 + n\, 3^{\omega})$.
\end{thm}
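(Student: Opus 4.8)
The plan is to follow the same bottom-up dynamic-programming template used in the proofs of Theorem~\ref{thm:bipartitepermanent} and Theorem~\ref{thm:treewidthdet}, now working with triples $(D,Y^1,Y^2)$ in place of pairs. Processing the nodes of $T$ in a topological order starting from the leaves, for each node $t$ I would compute $\mydisc(D,Y^1,Y^2)$ for every triple satisfying the range conditions~\eqref{eq:validtripleDYR}; at the root this yields $\mdisc(M)=\mydisc(\alpha(T_t),\chi^1(T_t),\chi^2(T_t))$. The argument splits into a base case for leaves and a recursive step for internal nodes, and the target bound $\widetilde{O}(n^2+n\,3^\omega)$ should emerge by summing the per-node costs and using $\sum_t k_t=O(n)$, where $k_t$ is the number of children of $t$.

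For the base case, when $t$ is a leaf we have $T_t=\{t\}$, so the relevant triples all come from the subarray $M_0$ with coordinate sets $\alpha(t),\chi^1(t),\chi^2(t)$. Since the width bound gives $\card{\alpha(t)}+\card{\chi^1(t)}+\card{\chi^2(t)}\leq\omega$, Lemma~\ref{thm:subdisc} computes all partial discriminants of $M_0$ in $\widetilde{O}(2^\omega)$. Over the $O(n)$ leaves this contributes $\widetilde{O}(n\,2^\omega)$, which is absorbed into $\widetilde{O}(n\,3^\omega)$.

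For an internal node I would evaluate the recursion~\eqref{eq:recursediscDYRfull} of Lemma~\ref{thm:recursedisc}. The key structural point is that, after the two copies of $\sgn(\mathcal{D})$ cancel, only the factors $\sgn(\mathcal{Y}^1)\sgn(\mathcal{Y}^2)$ survive, and as in the determinant case these signs rule out a direct use of the fast subset convolution of~\cite{Bjoerklund2007}. I would therefore adapt the merge strategy of Lemma~\ref{thm:evaluaterecursiondet}: process the children one at a time, maintaining a running table indexed by triples of subsets of $\alpha(t),\chi^1(t),\chi^2(t)$, and at each merge split $\sgn(\mathcal{Y}^1)$ and $\sgn(\mathcal{Y}^2)$ into pairwise block-sign factors. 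Each such factor is obtained by precomputing, separately for the two column universes, counts of the form $N_x^{\Delta}$; this costs $\widetilde{O}(n)$ per child (twice the determinant cost, hence still $\widetilde{O}(n)$), after which every sign is read off in $\widetilde{O}(1)$. The remaining signed convolution is carried out naively, and it factorizes over the three coordinate types: the total work is $\bigl(\sum_{\bar{D}\subset\alpha(t)}2^{\card{\bar{D}}}\bigr)\bigl(\sum_{\bar{Y}^1\subset\chi^1(t)}2^{\card{\bar{Y}^1}}\bigr)\bigl(\sum_{\bar{Y}^2\subset\chi^2(t)}2^{\card{\bar{Y}^2}}\bigr)=3^{\card{\alpha(t)}+\card{\chi^1(t)}+\card{\chi^2(t)}}\leq 3^\omega$. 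Thus each internal node costs $\widetilde{O}(k_t(n+3^\omega))$, and summing gives $\widetilde{O}\bigl((n+3^\omega)\sum_t k_t\bigr)=\widetilde{O}(n^2+n\,3^\omega)$, which together with the base case yields the claim.

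The step I expect to be the main obstacle is the bookkeeping for the two simultaneous sign factors. I must verify that the block-sign decomposition of Lemma~\ref{thm:evaluaterecursiondet} applies independently to $\mathcal{Y}^1$ and $\mathcal{Y}^2$, that the constraint $\card{D}=\card{Y^1}=\card{Y^2}$ is preserved throughout the merge so that no spurious triples are generated, and that precomputing sign data for both column coordinate sets keeps the per-child overhead at $\widetilde{O}(n)$ rather than $\widetilde{O}(n^2)$. The factorization displayed above is what guarantees that handling three coordinate axes simultaneously still costs only $3^\omega$ and not more.
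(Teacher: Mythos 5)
Your proposal matches the paper's proof essentially step for step: the same bottom-up dynamic program over triples satisfying~\eqref{eq:validtripleDYR}, the base case via Lemma~\ref{thm:subdisc} in $\widetilde{O}(2^\omega)$, and the recursion~\eqref{eq:recursediscDYRfull} evaluated child-by-child as in Lemma~\ref{thm:evaluaterecursiondet} at cost $\widetilde{O}(k_t(n+3^\omega))$ per internal node. Your explicit factorization showing the $3^\omega$ bound across the three coordinate axes, and the observation that the $\sgn(\mathcal{D})$ factors cancel leaving only $\sgn(\mathcal{Y}^1)\sgn(\mathcal{Y}^2)$, are details the paper leaves implicit but are correct.
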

\begin{proof}
  The proof is very similar to the one of Theorem~\ref{thm:bipartitepermanent}.
  For each node $t\in T$, we compute $\mydisc(D,Y^1,Y^2)$ for every triple $D,Y^1,Y^2$ that satisfies equation~\eqref{eq:validtripleDYR}.
  We will show that for each $t$ we can get $\mydisc(D,Y^1,Y^2)$ for all such $D,Y^1,Y^2$ in $\widetilde{O}((k_t+1)(n+3^{\omega}))$, where $k_t$ is the number of children of $t$. 

  The base case is when $t$ is a leaf of $T$.
  Let $M_0$ be the subarray of $M$ given by indices $(\alpha(t),\chi^1(t),\chi^2(t))$.
  Then we need to find the mixed discriminant of all subarrays of $M_0$.
  We can do this in $\widetilde{O}(2^{\omega})$ using of Lemma~\ref{thm:subdisc}.

  Assume now that $t$ is an internal node of $T$ with $k_t$ children and let $D,Y^1,Y^2$ that satisfy~\eqref{eq:validtripleDYR}.
  Then equation~\eqref{eq:recursediscDYRfull} tells us how to find $\mydisc(D,Y^1,Y^2)$.
  Similarly as in Lemma~\ref{thm:evaluaterecursiondet}, we can evaluate this formula in $\widetilde{O}(k_t(n+3^{\omega}))$, assuming we know all terms in the recursion.
  We already found $\mydisc(D_{cc},Y^1_{cc},Y^2_{cc})$ for all children in the recursion, and  we can find $\mydisc(D_t,Y^1_t,Y^2_t)$ for all $D_t,Y^1_t,Y^2_t$ in $\widetilde{O}(2^{\omega})$ in the same way as for the base case.
  This leads to a bound of $\widetilde{O}((k_t+1)(n+3^{\omega}))$ to compute $\mydisc(D,Y^1,Y^2)$ for all $D,Y^1,Y^2$.
\end{proof}



\subsection{Higher dimensions}

It is easy to see that our methods extend to compute generalizations of the permanent and determinant in higher dimensions.
We consider a square $(d+1)$-dimensional array (or tensor) $M$ of length $n$, i.e., of size $n\times \cdots\times n$ ($d+1$ times).
Here we assume $d$ to be constant.
Let's index the first coordinate of $M$ with a set $A$, and the following coordinates with sets $X^1,\ldots,X^d$.
Consider a function $F$ of the form
\begin{align}\label{eq:defnhigherdimensionperm}
  F(M) = \sum_{\pi_1,\ldots,\pi_{d}}\, \prod_{a\in A} \epsilon_1(\pi_1)\cdots\epsilon_{d}(\pi_{d}) M_{a, \pi_1(a),\ldots,\pi_{d}(a)}
\end{align}
where the sum is over all bijections $\pi_l: A\to X^l$, and where $\epsilon_l(\pi_l)$ is either $1$ or $\sgn(\pi_l)$.

Let's consider two special cases of equation~\eqref{eq:defnhigherdimensionperm}.
The simplest case is when $\epsilon_l(\pi_l)=1$ for all $l$.
We refer to such function as the $(d+1)$-dimensional permanent and we denote it as $\perm(M)$ \cite{Dow1987}.
Some applications of this permanent are shown in \cite{Avgustinovich2010,Tichy2015}.

Consider now the case when $d+1$ is even and $\epsilon_l(\pi_l)=\sgn(\pi_l)$ for all $l$.
This is perhaps the simplest generalization of the determinant, and it is usually referred to as the first Cayley hyperdeterminant \cite{Cayley1846}.
Some applications of the hyperdeterminant are shown in \cite{Luque2003,Barvinok1995}.
As opposed to the $2$-dimensional case, computing the hyperdeterminant is $\sharpp$-hard \cite{Hillar2013}.

We now proceed to extend our decomposition methods to this setting.
We associate a $(d+1)$-partite graph $G$ where for each nonzero entry of $M$ we put a $(d+1)$-clique in the respective coordinates.
A tree decomposition of $G$ can be seen as a tuple $(T,\alpha,\chi^1,\ldots,\chi^d)$.
The width $\omega$ of the decomposition is the largest of $\card{\alpha(t)}+\card{\chi^1(t)}+\cdots+\card{\chi^d(t)}$ among all nodes $t$.

As before, for some sets $D\subset A, Y^l\subset X^l$, we consider the function
\begin{align}\label{eq:defnpartialF}
  f(D,Y^1,\ldots,Y^d) := \sum_{\pi_1,\ldots,\pi_{d}} \,\prod_{a\in A} \epsilon_1(\pi_1)\cdots\epsilon_{d}(\pi_{d}) M_{a, \pi_1(a),\ldots,\pi_{d}(a)}
\end{align}
where the sum is over all bijections $\pi_l: D\to Y^l$.

There are two steps in order to generalize our results to this setting: evaluate $f$ in all subarrays, and evaluate the recursion formula.
For the former, the approach from Lemma~\ref{thm:subperms} (and Lemma~\ref{thm:subdisc}) has a simple generalization.
Indeed, Barvinok shows this for the case of the hyperdeterminant \cite{Barvinok1995}.
The proof is the same for an arbitrary function $F$ as in~\eqref{eq:defnhigherdimensionperm}.
Thus, we have the following.

\begin{prop}[\cite{Barvinok1995}]\label{thm:evaluatehigherperm}
  Let $M_0$ be a $(d+1)$-dimensional array of size $n_0\times\cdots\times n_d$, and let $f$ be as in~\eqref{eq:defnpartialF}. 
  Let $A_0,X_0^1,\ldots,X_0^d$ be its set of coordinates, and let 
  \begin{align*}
    S := \{(D,Y^1,\ldots,Y^d)\subset A_0\times X_0^1\times\cdots\times X_0^d:\, \card{D}=\card{Y^1}=\cdots=\card{Y^d}\}
  \end{align*}
  We can compute $f(D,Y^1,\ldots,Y^d)$ for all tuples in $S$ in $O(n_{max}^{d+1}\,2^{n_0+\cdots+n_d})$, where $n_{max}=\max\{n_0,\ldots,n_d\}$.
\end{prop}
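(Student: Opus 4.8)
The plan is to mimic exactly the inductive expansion-by-minors strategy used in Lemma~\ref{thm:subperms} and Lemma~\ref{thm:subdisc}, now carried out in $d+1$ dimensions. I would stratify the set $S$ by cardinality: for each $i$ with $1\leq i \leq \min\{n_0,\ldots,n_d\}$, let
\begin{align*}
  S_i := \{(D,Y^1,\ldots,Y^d)\subset A_0\times X_0^1\times\cdots\times X_0^d:\, \card{D}=\card{Y^1}=\cdots=\card{Y^d}=i\}
\end{align*}
and compute the values $f$ on $S_i$ using those already known on $S_{i-1}$. The base case $S_1$ is immediate, since each $f$ on a size-one tuple is just a single array entry $M_{a,x_1,\ldots,x_d}$.

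The key step is the recursion itself. Fixing a tuple $(D,Y^1,\ldots,Y^d)\in S_i$, I would let $a_0$ be the first element of $D$ and expand $f$ along the $A$-coordinate, summing over the possible images of $a_0$ under each of the $d$ bijections:
\begin{align*}
  f(D,Y^1,\ldots,Y^d) = \sum_{x_1\in Y^1,\ldots,x_d\in Y^d}\epsilon(x_1,\ldots,x_d)\, M_{a_0,x_1,\ldots,x_d}\, f(D\setminus a_0, Y^1\setminus x_1,\ldots,Y^d\setminus x_d)
\end{align*}
where $\epsilon(x_1,\ldots,x_d)\in\{+1,-1\}$ is the product of the relevant sign contributions $\epsilon_l$. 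As in Lemma~\ref{thm:subdisc}, each $\epsilon_l$ factor is $+1$ when $\epsilon_l(\pi_l)=1$ and a simple position-dependent sign when $\epsilon_l(\pi_l)=\sgn(\pi_l)$; identifying each $Y^l$ with $\{1,\ldots,i\}$ makes $\epsilon(x_1,\ldots,x_d)$ an explicit product of $(-1)$ powers depending on the ranks of $x_1,\ldots,x_d$, so it costs $O(1)$ per term (up to the polynomial factors we are ignoring). The correctness of this recursion is the heart of the argument: it follows because deleting $a_0$ from the domain and its images $x_l$ from each range puts the residual bijections in natural correspondence with the smaller-instance bijections, and the sign function is multiplicative under this restriction (exactly the factorization property already invoked via Lemma~\ref{thm:recursionsign}).

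For the complexity bookkeeping, I would argue that each tuple in $S_i$ requires a sum over $n_1 n_2\cdots n_d \leq n_{max}^d$ choices of $(x_1,\ldots,x_d)$, and for each choice we spend $O(n_{max})$ to form the smaller sets $D\setminus a_0$ and $Y^l\setminus x_l$. Summing $\card{S_i}$ over all $i$ is bounded by the total number of tuples, which is at most $2^{n_0}2^{n_1}\cdots 2^{n_d}=2^{n_0+\cdots+n_d}$. Multiplying the per-tuple cost $O(n_{max}^{d})\cdot O(n_{max}) = O(n_{max}^{d+1})$ by the number of tuples yields the claimed $O(n_{max}^{d+1}\,2^{n_0+\cdots+n_d})$. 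I expect the main obstacle to be purely notational rather than conceptual: verifying that the sign prefactor $\epsilon(x_1,\ldots,x_d)$ factors correctly and independently across the $d$ coordinates, so that each $\epsilon_l$ depends only on the rank of $x_l$ within $Y^l$ and not on the other deletions. Since the signs for distinct coordinates $l$ are governed by independent bijections $\pi_l$, this decoupling should hold by the same multiplicativity argument used in the $d=2$ case, and the remainder is a routine induction identical in structure to Lemma~\ref{thm:subdisc}.
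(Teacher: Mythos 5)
Your proposal is correct and matches the paper's intent exactly: the paper does not write out a proof for this proposition, instead stating that the expansion-by-minors arguments of Lemma~\ref{thm:subperms} and Lemma~\ref{thm:subdisc} ``have a simple generalization'' and citing Barvinok for the hyperdeterminant case. Your cardinality-stratified recursion, the cofactor sign $\epsilon(x_1,\ldots,x_d)$ factoring coordinatewise by rank, and the $O(n_{max}^{d+1}\,2^{n_0+\cdots+n_d})$ count are precisely that generalization.
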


Repeating the same analysis as in the proof of Lemma~\ref{thm:recursedisc}, the recursion formula is:
  \begin{align}\label{eq:recursetensor}
    \begin{split}
      f(D,Y^1,\ldots,Y^{d}) = &\sum_{\mathcal{D},\mathcal{Y}^1,\ldots,\mathcal{Y}^{d}}
      \delta_1(\mathcal{D},\mathcal{Y}^1)\cdots\delta_d(\mathcal{D},\mathcal{Y}^{d}) f(D_t,Y_t^1\ldots,Y_t^{d}) \\
      &\qquad\prod_{j=1}^k (-1)^{\card{D_{tc_j}}}  f(D_{tc_j},Y_{tc_j}^1,\ldots,Y_{tc_j}^{d})f(D_{cc_j},Y_{cc_j}^1,\ldots,Y_{cc_j}^{d})
    \end{split}
  \end{align}
where $\delta_l(\mathcal{D},\mathcal{Y}^l)$ is either $1$ or $\sgn(\mathcal{D})\sgn(\mathcal{Y}^l)$ depending on $\epsilon_l$, and the sum is over all tuples $\mathcal{D},\mathcal{Y}^1,\ldots,\mathcal{Y}^d$ such that
\begin{align*}
  Z &=  Z_t \sqcup (Z_{tc_1}\sqcup Z_{cc_1} \sqcup \cdots \sqcup Z_{tc_k}\sqcup Z_{cc_k}) &\mbox{where }Z\in\{D,Y^1,\ldots,Y^d\}\\
  \zeta(T_{c_j})\setminus  \zeta(t) &\subset Z_{cc_j} \subset \zeta(T_{c_j}) \;\;\;
 Z_{t} \subset \zeta(t)  \;\;\;
 Z_{tc_j} \subset \zeta(t)\cap\zeta(t_{c_j}) &\mbox{where }\zeta\in\{\alpha,\chi^1,\ldots,\chi^d\}
\end{align*}
  
The complexity of the decomposition algorithm is as follows.

\begin{thm}\label{thm:higherdimensions}
  Let $M$ be a square $(d+1)$-dimensional array of length $n$, with $(d+1)$-partite graph $G$. 
  Let $F$ be a generalized determinant/permanent as in~\eqref{eq:defnhigherdimensionperm}. 
  Let $(T,\alpha,\chi^1,\ldots,\chi^d)$ be a tree decomposition of $G$ of width $\omega$.
  Then we can compute $F(M)$ in $\widetilde{O}(n^2 + n\, 3^{\omega})$.
\end{thm}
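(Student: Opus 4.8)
The plan is to mirror the dynamic program of Theorem~\ref{thm:discriminant}, which is precisely the case $d=2$, processing the tree decomposition from the leaves up to the root. For each node $t$ I would compute the table of values $f(D,Y^1,\ldots,Y^d)$ ranging over all tuples satisfying the natural generalization of~\eqref{eq:validtripleDYR}, namely $\zeta(T_t)\setminus\zeta(t)\subset Z\subset\zeta(T_t)$ for each coordinate $\zeta\in\{\alpha,\chi^1,\ldots,\chi^d\}$ together with the cardinality constraint $\card{D}=\card{Y^1}=\cdots=\card{Y^d}$. As in the lower-dimensional cases, such a tuple is determined by its intersections with the bags $\alpha(t),\chi^1(t),\ldots,\chi^d(t)$, so the table has $O(2^\omega)$ entries, and the root entry is $f(A,X^1,\ldots,X^d)=F(M)$. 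I would then show that the table at node $t$ can be filled in $\widetilde{O}((k_t+1)(n+3^\omega))$, where $k_t$ is its number of children; since $\sum_t k_t=O(n)$, this yields the claimed $\widetilde{O}(n^2+n\,3^\omega)$.

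For the base case, where $t$ is a leaf, I would invoke Proposition~\ref{thm:evaluatehigherperm} on the subarray $M_0$ indexed by $(\alpha(t),\chi^1(t),\ldots,\chi^d(t))$. Since the sizes of these bags sum to at most $\omega$ and $d$ is constant, the factor $n_{max}^{d+1}$ is polynomial in $\omega$, so the base case costs $\widetilde{O}(2^\omega)$.

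The heart of the argument is evaluating the recursion~\eqref{eq:recursetensor} at an internal node, following the strategy of Lemma~\ref{thm:evaluaterecursiondet} and distinguishing two kinds of sign. The factor $(-1)^{\card{D_{tc_j}}}$ depends only on the single block $D_{tc_j}$, so it can be absorbed into the corresponding function, exactly as in the subset-convolution evaluation behind Theorem~\ref{thm:bipartitepermanent}. The factors $\delta_l(\mathcal{D},\mathcal{Y}^l)=\sgn(\mathcal{D})\sgn(\mathcal{Y}^l)$, which appear precisely when $\epsilon_l=\sgn$, are not separable across blocks, so the subset-convolution shortcut is unavailable; instead I would process the $2k_t+1$ functions one at a time. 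Writing the partial aggregate over the first $l$ blocks as in~\eqref{eq:recursiondet}, each incremental step splits a combined subset of the disjoint union $\alpha(t)\sqcup\chi^1(t)\sqcup\cdots\sqcup\chi^d(t)$ into two parts; summing $\binom{\omega}{i}2^i$ over all target sizes $i$ gives the $3^\omega$ bound per step, hence $\widetilde{O}(k_t\,3^\omega)$ for the whole node.

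It remains to handle the non-separable signs efficiently. For each coordinate $l$ with $\epsilon_l=\sgn$ I would precompute, exactly as in Lemma~\ref{thm:evaluaterecursiondet}, the interleaving counts $N_x^{\Delta_{l+1}}$ against the fixed ``deeper'' blocks for every element $x$ of the relevant bag; this takes $\widetilde{O}(n)$ and lets each required sign be recovered in $\widetilde{O}(1)$. Since there are only $d$ such coordinates and $d$ is constant, the precomputation stays $\widetilde{O}(n)$ and contributes the additive $n$ inside $\widetilde{O}(k_t(n+3^\omega))$. The main obstacle, and the only point requiring care beyond the $d=2$ case, is verifying that this $3^\omega$ accounting is genuinely governed by the total width $\omega=\sum_\zeta\card{\zeta(t)}$ rather than by the individual bag sizes, and that the several independent sign factors continue to factor blockwise through Lemma~\ref{thm:recursionsign} without interfering with one another; both follow because the signs act on disjoint coordinate sets and the convolution ranges over subsets of their disjoint union.
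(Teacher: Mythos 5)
Your proposal follows exactly the same route as the paper's proof: leaves are handled by Proposition~\ref{thm:evaluatehigherperm} in $\widetilde{O}(2^\omega)$, and internal nodes evaluate the recursion~\eqref{eq:recursetensor} by the incremental, one-block-at-a-time scheme of Lemma~\ref{thm:evaluaterecursiondet}, giving $\widetilde{O}((k_t+1)(n+3^\omega))$ per node. Your write-up is in fact more detailed than the paper's (which simply defers to the earlier lemmas), and your extra care about the $d$ non-separable sign factors acting on disjoint coordinate sets is a correct filling-in of a point the paper leaves implicit.
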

\begin{proof}
  The proof is very similar to past ones.
  For each node $t$, we compute $f(D,Y^1,\ldots,Y^d)$ for all valid tuples.
  We show that for each $t$ we can do this in $\widetilde{O}((k_t+1)(n+3^{\omega}))$, where $k_t$ is the number of children of $t$. 

  The base case, i.e., leaf nodes, reduces to Proposition~\ref{thm:evaluatehigherperm}, leading to a bound of $\widetilde{O}(2^{\omega})$.

  For an internal node $t$, equation~\eqref{eq:recursetensor} tells us how to find $f(D,Y^1,\ldots,Y^d)$.
  Similarly as in Lemma~\ref{thm:evaluaterecursiondet}, we can evaluate this formula in $O(k_t(n+ 3^{\omega}))$.
\end{proof}

For the special case of the permanent, we can give a better bound.

\begin{thm}
  Let $M$ be a square $(d+1)$-dimensional array of length $n$, with $(d+1)$-partite graph $G$. 
  Let $(T,\alpha,\chi^1,\ldots,\chi^d)$ be a tree decomposition of $G$ of width $\omega$.
  Then we can compute $\perm(M)$ in $\widetilde{O}(n\, 2^{\omega})$.
\end{thm}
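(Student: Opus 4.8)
The plan is to follow the dynamic program of Theorem~\ref{thm:bipartitepermanent} essentially verbatim, the one new ingredient being the observation that for the permanent the recursion~\eqref{eq:recursetensor} carries \emph{no} genuine sign factors and is therefore a (multidimensional) subset convolution. As in the previous proofs, for each node $t$ I would compute $f(D,Y^1,\ldots,Y^d)$ for every valid tuple, so that the root yields $\perm(M)=f(A,X^1,\ldots,X^d)$; the goal is to show that each node costs $\widetilde{O}((k_t+1)\,2^{\omega})$, where $k_t$ is its number of children. Since the tree has $O(n)$ nodes and $\sum_t(k_t+1)=O(n)$, summing over $t$ then gives the claimed $\widetilde{O}(n\,2^{\omega})$.

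For the base case (a leaf $t$) I would invoke Proposition~\ref{thm:evaluatehigherperm} on the subarray $M_0$ indexed by $(\alpha(t),\chi^1(t),\ldots,\chi^d(t))$. Because $\card{\alpha(t)}+\card{\chi^1(t)}+\cdots+\card{\chi^d(t)}\le\omega$, the bound $O(n_{max}^{d+1}\,2^{n_0+\cdots+n_d})$ of that proposition specializes to $\widetilde{O}(2^{\omega})$ (recall $d$ is constant), exactly as the leaf bound in Theorem~\ref{thm:bipartitepermanent}.

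The heart of the argument is the recursion step. Specializing~\eqref{eq:recursetensor} to the permanent sets every $\epsilon_l\equiv 1$, hence every $\delta_l(\mathcal{D},\mathcal{Y}^l)\equiv 1$, so the only surviving signs are the local factors $(-1)^{\card{D_{tc_j}}}$ coming from inclusion--exclusion. Just as in Algorithm~\ref{alg:bipartiteperm}, each such factor depends only on the cardinality of one block and can be absorbed into the corresponding child table (together with the usual $\Delta$-reindexing). After this absorption, \eqref{eq:recursetensor} reads as a sum over all ways of partitioning the tuple $(D,Y^1,\ldots,Y^d)$ into the $2k_t+1$ sub-tuples attached to $t$ and its children, i.e.\ a subset convolution over the disjoint universe $\alpha(t)\sqcup\chi^1(t)\sqcup\cdots\sqcup\chi^d(t)$, whose size is at most $\omega$. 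I would therefore evaluate it by $2k_t$ iterated pairwise fast subset convolutions~\cite{Bjoerklund2007}, each costing $\widetilde{O}(2^{\omega})$, for a total of $\widetilde{O}(k_t\,2^{\omega})$ per node. Combined with the leaf bound this gives $\widetilde{O}((k_t+1)\,2^{\omega})$, as desired.

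The point to get right---and the reason this improves on Theorem~\ref{thm:higherdimensions}---is precisely the vanishing of the sign factors $\sgn(\mathcal{Y}^l)$. For the hyperdeterminant those signs survive and, because they depend on the global order of the merged blocks rather than on a single block, they cannot be absorbed locally; this is exactly what forced the slower $3^{\omega}$ evaluation of Lemma~\ref{thm:evaluaterecursiondet} and its accompanying $\widetilde{O}(n)$ sign precomputation. For the permanent both disappear, so the $3^{\omega}$ collapses to $2^{\omega}$ and the additive $n^2$ term is no longer needed. The main thing I expect to have to verify carefully is thus that the cardinality constraints $\card{D}=\card{Y^1}=\cdots=\card{Y^d}$ and the domain/range conditions genuinely turn the specialized recursion into an honest subset convolution over the combined universe, so that the $\widetilde{O}(2^{\omega})$ routine of~\cite{Bjoerklund2007} applies unchanged.
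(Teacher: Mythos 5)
Your proposal is correct and follows essentially the same route as the paper: the paper's proof likewise observes that with no sign factors the recursion can be evaluated as in Lemma~\ref{thm:countpartitions} (fast subset convolution), giving $\widetilde{O}((k_t+1)\,2^{\omega})$ per node. Your write-up simply spells out the details (leaf case via Proposition~\ref{thm:evaluatehigherperm}, absorption of the $(-1)^{\card{D_{tc_j}}}$ factors into the child tables) that the paper leaves implicit.
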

\begin{proof}
  If there are no sign factors we can follow the procedure of Lemma~\ref{thm:countpartitions} for the recursion, leading to a bound of $\widetilde{O}((k_t+1)\,2^\omega)$ per node.
\end{proof}



\section{Mixed volume of zonotopes}\label{s:zonotope}

The mixed volume $\mvol$ of $n$ convex bodies $K_1,\ldots,K_n$ in $\R^n$ is the unique real function that satisfies the following properties.
\begin{itemize}
  \item $\mvol$ is multilinear and symmetric in its arguments.
  \item $\mvol(K,\ldots,K)=n!\,\vol(K)$, where $\vol$ denotes the volume.
\end{itemize}
Alternatively, it can be shown that the function $f(\lambda):=\vol(\sum_i \lambda_i K_i)$ for $\lambda_i\geq 0$, is a homogeneous polynomial, and $\mvol$ is the coefficient of $\lambda_1\cdots \lambda_n$.
For more information about mixed volumes, see e.g., \cite{Schneider2013}.
We focus here in the case that all bodies $K_i$ are zonotopes, which are a special class of polytopes.

\subsection{Mixed volumes and permanents}

\begin{defn}\label{defn:zonotope}
  A \emph{zonotope} $z$ is a polytope that is a Minkowski sum of line segments, i.e., it has the form 
  \begin{align*}
    z &= [0,1]z_1 + [0,1]z_2 + \cdots + [0,1]z_m  = \{r_1 z_1 + \cdots + r_m z_m: 0\leq r_i \leq 1\}
  \end{align*}
  where $z_i\in \R^n$ are vectors.
  In case $z_1,\ldots,z_m$ are linearly independent, we say that $z$ is a \emph{parallelotope}.
\end{defn}

The mixed volume of zonotopes has a simple description as follows.

\begin{prop}
  Let $z^i = \sum_{j\in J_i} [0,1]z_j^i$ be a zonotope, for $i=1,\ldots,n$.
  Then
  \begin{align}\label{eq:mvolzonotopes}
    \mvol(z^1,\ldots,z^n) = \sum_{j_1\in J_1,\ldots,j_n\in J_n}|\det(z_{j_1}^1,z_{j_2}^2,\ldots,z_{j_n}^n)|
  \end{align}
\end{prop}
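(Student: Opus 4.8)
The plan is to prove the mixed-volume formula~\eqref{eq:mvolzonotopes} by exploiting the defining properties of $\mvol$: multilinearity and symmetry, together with the normalization $\mvol(K,\ldots,K)=n!\,\vol(K)$. The key observation is that each zonotope $z^i = \sum_{j\in J_i}[0,1]z^i_j$ is itself a Minkowski sum of segments. I would first recall that the mixed volume is additive in each argument with respect to Minkowski sum, i.e. $\mvol(\ldots,K'+K'',\ldots) = \mvol(\ldots,K',\ldots) + \mvol(\ldots,K'',\ldots)$. This is the multilinearity property, and it is the engine of the whole argument.

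First I would expand each slot $z^i$ using multilinearity over its Minkowski decomposition into segments. Since $\mvol$ is linear in each argument under Minkowski addition, expanding all $n$ slots simultaneously yields
\begin{align*}
  \mvol(z^1,\ldots,z^n) = \sum_{j_1\in J_1,\ldots,j_n\in J_n} \mvol\bigl([0,1]z^1_{j_1},\,[0,1]z^2_{j_2},\,\ldots,\,[0,1]z^n_{j_n}\bigr).
\end{align*}
This reduces the problem to computing the mixed volume of $n$ individual segments, one drawn from each zonotope. The second step is therefore to evaluate $\mvol$ of a collection of $n$ line segments in $\R^n$, and to show it equals the absolute value of the determinant of the corresponding vectors.

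For this core computation I would argue that $n$ segments $[0,1]v_1,\ldots,[0,1]v_n$ have mixed volume $|\det(v_1,\ldots,v_n)|$. The cleanest route uses the normalization property on a parallelotope: if $v_1,\ldots,v_n$ are linearly independent, their Minkowski sum $P = \sum_i [0,1]v_i$ is a parallelotope with $\vol(P) = |\det(v_1,\ldots,v_n)|$. Expanding $\mvol(P,\ldots,P) = n!\,\vol(P)$ by multilinearity produces a sum of $n^n$ mixed-volume terms; by symmetry and the fact that repeated-segment terms vanish (the mixed volume of a list containing a lower-dimensional repeated segment is zero, since parallel segments span a degenerate region), only the $n!$ terms using a genuine permutation of the $v_i$ survive, each equal by symmetry. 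This gives $n!\,\mvol([0,1]v_1,\ldots,[0,1]v_n) = n!\,|\det(v_1,\ldots,v_n)|$, hence the single segment-mixed-volume equals $|\det(v_1,\ldots,v_n)|$. For linearly \emph{dependent} vectors both sides vanish, so the identity holds in general.

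\emph{The main obstacle} I anticipate is justifying rigorously that the mixed-volume terms with repeated or linearly dependent segments vanish, and handling the absolute value and orientation bookkeeping correctly, since $\mvol$ is unsigned whereas $\det$ is signed. Concretely, when I expand $\mvol(P,\ldots,P)$ the surviving terms correspond to permutations, but the determinant picks up signs under permutation while the mixed volume does not; I must verify that $\mvol$ of a permuted segment list is symmetric (giving the same value regardless of ordering), which matches $|\det|$ rather than $\det$. The degeneracy claim---that any term repeating a segment, or more generally using a linearly dependent subcollection, contributes zero---follows because mixed volume is monotone and vanishes whenever the arguments fail to jointly span $\R^n$, but making this precise is the delicate point and is where I would spend the most care.
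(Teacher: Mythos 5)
Your proposal is correct, and its overall skeleton is the same as the paper's: expand $\mvol(z^1,\ldots,z^n)$ by multilinearity over the Minkowski decompositions into segments, then evaluate the mixed volume of $n$ individual segments as $|\det(v_1,\ldots,v_n)|$. Where you diverge is in how you establish that segment identity. The paper reads it off directly from the polynomial characterization of the mixed volume: since $\vol\bigl(\sum_i \lambda_i [0,1]v_i\bigr) = \lambda_1\cdots\lambda_n\,|\det(v_1,\ldots,v_n)|$ (the volume of a parallelepiped with sides $\lambda_i v_i$), the coefficient of $\lambda_1\cdots\lambda_n$ is $|\det|$ and no further work is needed. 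You instead go through the normalization $\mvol(P,\ldots,P)=n!\,\vol(P)$ for $P=\sum_i[0,1]v_i$, expand the left side into $n^n$ terms, and argue that only the $n!$ permutation terms survive. This is valid, but it forces you to prove the auxiliary vanishing fact (that the mixed volume of segments with linearly dependent directions is zero), which you correctly flag as the delicate point; it follows because such a Minkowski sum lies in a proper affine subspace and all mixed volumes in the nonnegative expansion of $n!\,\vol(\sum_i K_i)$ are nonnegative, hence each degenerate term is zero. The paper's direct route buys brevity and avoids that lemma entirely; your route is more self-contained relative to the axiomatic definition, using only multilinearity, symmetry, nonnegativity, and the normalization. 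Your concern about sign bookkeeping is a non-issue in either approach, since the determinant only ever enters through its absolute value as the volume of a parallelotope.
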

\begin{proof}
  The multilinearity of the mixed volume implies
  \begin{align*}
    \mvol(z^1,\ldots,z^n)  
    = \sum_{j_1\in J_1,\ldots,j_n\in J_n}\mvol([0,1]z_{j_1}^1, \ldots,[0,1]z_{j_n}^n).
  \end{align*}
  Thus, we just need to argue that 
  \begin{align*}
    \mvol([0,1]z_{j_1}^1, \ldots,[0,1]z_{j_n}^n) = |\det(z_{j_1}^1,\ldots,z_{j_n}^n)|
  \end{align*}
  which follows by noting that $\sum_i [0,1]\lambda_i z^i_{j_i}$ is a parallelepiped with sides $\lambda_i z^i_{j_i}$, and thus its volume is given by the absolute volume of the determinant.
\end{proof}

The mixed volume of $n$~parallelotopes reduces to a permanent when their main axes are aligned, as shown now.

\begin{cor}\label{thm:mvolparallelotopes}
  Let $u_1,\ldots,u_n\in \R^n$ and let $M\in \R^{n\times n}_{\geq 0}$ be a nonnegative matrix.
  Let
  $z^i = \sum_j [0,1]M_{i,j} u_j $
be a zonotope.
Then 
\begin{align*}
  \mvol(z^1,\ldots,z^n) = |\det(u_1,\ldots,u_n)|\;\perm(M).
\end{align*}
\end{cor}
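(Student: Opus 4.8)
The plan is to apply the preceding proposition (Equation~\eqref{eq:mvolzonotopes}) and then exploit the nonnegativity of $M$ together with the multilinearity and antisymmetry of the determinant. Here each zonotope $z^i$ has generators $z^i_j = M_{i,j}\,u_j$ for $j=1,\ldots,n$, so Equation~\eqref{eq:mvolzonotopes} expresses $\mvol(z^1,\ldots,z^n)$ as a sum over all tuples $(j_1,\ldots,j_n)$ of the absolute determinants $\card{\det(M_{1,j_1}u_{j_1},\ldots,M_{n,j_n}u_{j_n})}$.

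First I would factor the scalars out of each column using multilinearity, turning the summand into $\bigl(\prod_{i} M_{i,j_i}\bigr)\,\card{\det(u_{j_1},\ldots,u_{j_n})}$. The essential use of the hypothesis $M\in\R^{n\times n}_{\geq 0}$ is that the scalar $\prod_i M_{i,j_i}$ is nonnegative and hence can be pulled outside the absolute value without introducing a sign.

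Next I would eliminate the degenerate terms: if the tuple $(j_1,\ldots,j_n)$ repeats an index, then $(u_{j_1},\ldots,u_{j_n})$ has two equal columns and its determinant vanishes. Thus only the terms in which $(j_1,\ldots,j_n)$ is a permutation $\sigma$ of $\{1,\ldots,n\}$ contribute. For such a term, antisymmetry gives $\det(u_{\sigma(1)},\ldots,u_{\sigma(n)}) = \sgn(\sigma)\,\det(u_1,\ldots,u_n)$, so that $\card{\det(u_{\sigma(1)},\ldots,u_{\sigma(n)})} = \card{\det(u_1,\ldots,u_n)}$ is the same constant for every $\sigma$. Factoring this constant out leaves $\sum_\sigma \prod_i M_{i,\sigma(i)} = \perm(M)$, yielding the claimed identity.

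The only step that genuinely relies on the hypotheses, and the one I expect to be the crux, is moving $\prod_i M_{i,j_i}$ outside the absolute value; this is exactly where nonnegativity of $M$ is indispensable. If $M$ had negative entries, the signs of the scalars could not be absorbed cleanly and the expression would not collapse to a permanent. Everything else is routine bookkeeping of which tuples survive, together with the antisymmetry that makes $\card{\det}$ invariant under permutations of the $u_j$.
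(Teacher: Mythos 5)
Your proposal is correct and follows essentially the same route as the paper: apply equation~\eqref{eq:mvolzonotopes}, discard the tuples with a repeated index since the determinant vanishes, and factor out $|\det(u_1,\ldots,u_n)|$ from the surviving permutation terms to leave $\perm(M)$. You simply make explicit two steps the paper leaves implicit, namely that nonnegativity of $M$ lets the scalars $\prod_i M_{i,j_i}$ pass outside the absolute value, and that antisymmetry makes $|\det(u_{\sigma(1)},\ldots,u_{\sigma(n)})|$ independent of $\sigma$.
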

\begin{proof}
  We just need to use equation~\eqref{eq:mvolzonotopes}, and cancel out all terms that contain a repeated vector $u_j$, as the determinant is zero.
  The remaining terms have $|\det(u_1,\ldots,u_n)|$ as a factor and we get the desired formula.
\end{proof}

\subsection{Graph representation}

To use a decomposition method for mixed volumes we need to have a graph description of the zonotopes.
We consider now two different graphs that can be associated to a set of zonotopes, and more generally to polytopes.
The first one is a bipartite graph that can be thought of as the analogue of the bipartite graph of a matrix.
The second one has to do with the sparsity in the standard basis representation and it is a more intuitive notion for general polytopes.

\begin{defn}\label{defn:graphedgepolytopes}
  Let $Q$ be a set of $n$ polytopes in $\R^n$.
  Let $U$ denote the set of all vectors (up to scaling) that are parallel to some edge in $Q$.
  We refer to $U$ as the \emph{edge directions} of $Q$.
  The \emph{edge graph} $G(Q)$ is a bipartite graph with vertices $Q\cup U$ and edges $(q,u)$ if $q$ contains an edge parallel to $u$.
\end{defn}

\begin{defn}\label{defn:graphcoordinatespolytope}
  Let $Q$ be a set of $n$ polytopes in $\R^n$.
  Let $X = \{x_1,\ldots,x_n\}$ denote the coordinates.
  The \emph{coordinates graph} $G^X(Q)$ has $X$ as vertex set, and for each polytope $q\in Q$ we form a clique in all its non constant components.
  Note that if $q=\sum_j [0,1]z_j$ is a zonotope, we form a clique in the nonzero coordinates of $\sum_j z_j$.
\end{defn}

\begin{rem}
  Note that the edge graph $G(Q)$ is invariant under affine transformations of $Q$, whereas the coordinates graph $G^X(Q)$ is not.
\end{rem}

\begin{exmp}[Zonotopes of bounded treewidth]\label{exmp:zonotopestreewidth}
  Consider the following zonotopes:
  \begin{subequations}\label{eq:hardzonotopes}
  \begin{align}
      z^1 &= [0,1] (a_1 e_n + e_1 )  + [0,1] e_1 \\
      z^i &= [0,1] (a_i e_n + e_i -e_{i-1})  + [0,1] (e_i-e_{i-1}) &\mbox{for }i=2,\ldots,n-1\\
      z^n &=[0,1] (a_n e_n - e_{n-1})
  \end{align}
  \end{subequations}
  where $\{e_i\}_i$ is the canonical basis and $a_i\in \Z$ are some integers.

  Note that the segments of the above zonotopes are all nonparallel (there are $2n+1$ edge directions).
  This means that the edge graph $G$ has $n$ connected components, one for each of the zonotopes, and each of component is either a 2-path or a 1-path.
  Thus, $G$ has treewidth~$1$.
  As for the coordinates graph $G^X$, it is the union of the triangles:
  $X_i := \{x_{i},x_{i+1}, x_{n}\}$.
  It is easy to see that $G^X$ has treewidth~$2$.
\end{exmp}

The following example shows the relationship between these graphs and the matrix graphs we used before.

\begin{exmp}[Relationship with matrix graphs]
  Let $z^i= \sum_j [0,1]M_{i,j} u_j$ be zonotopes as in Corollary~\ref{thm:mvolparallelotopes}.
  The edge graph $G$ of the zonotopes has vertices $Z\cup U$ where $Z = \{z^i\}_i$ and $U = \{u_j\}_j$, and it has an edge $(z^i,u_j)$ whenever $M_{i,j} \neq 0$.
  If we replace $Z$ with the row set and $U$ with the column set, this is precisely the bipartite graph of matrix $M$.

  On the other hand, the coordinates graph $G^X$ depends on the sparsity structure of vectors $U$.
  Assume now that $U = \{e_j\}_j$ is the canonical basis.
  Then $G^X$ has an edge $(x_j,x_k)$ whenever there is some $z^i$ with  $M_{i,j}\neq 0$ and $M_{i,k}\neq 0$.
  This corresponds to the column graph of $M$.
\end{exmp}

Because of the above example it is expected that the tree decomposition methods derived for the permanent should allow to compute mixed volumes of certain families of zonotopes.
Indeed, we show now if \emph{there are few edge directions} and the edge graph $G$ has bounded treewidth then the mixed volume can be computed efficiently.


\begin{thm}\label{thm:mvolzonotopeseasy}
  Let $Z$ be a set of $n$ zonotopes in $\R^n$.
  Let $U$ be the set of edge directions of $Z$, and assume that $d := \card{U} - n$ is constant.
  Let $G$ denote the edge graph of $Z$.
  Given a tree decomposition of $G$ of width~$\omega$, we can compute $\mvol(Z)$ in $\widetilde{O}(n^{d+3} + n^{d+1}\,2^{\omega})$.
\end{thm}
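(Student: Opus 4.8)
The plan is to expand the mixed volume with formula~\eqref{eq:mvolzonotopes} and regroup its terms according to which edge directions are used. Writing each zonotope as $z^i=\sum_{j\in J_i}[0,1]z^i_j$ with $z^i_j$ parallel to some direction in $U$, a summand $\card{\det(z^1_{j_1},\ldots,z^n_{j_n})}$ vanishes unless the $n$ chosen edges point in $n$ \emph{distinct} directions, since a repeated direction yields two parallel columns. Hence only tuples whose directions form a set $W\subset U$ with $\card{W}=n$ contribute, and each such tuple determines $W$ uniquely, so I would partition the sum by this set $W$. Because $\card{U}=n+d$, there are exactly $\binom{n+d}{d}=O(n^d)$ such sets (recall $d$ is constant), so this regrouping is cheap.

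Next I would factor each group. Fix $W=\{w_1,\ldots,w_n\}$ with a chosen representative vector for each direction. If edges with directions exactly $W$ are selected, say $z^i_{j_i}=c^i_{j_i}\,w_{\tau(i)}$ for a bijection $\tau\colon Z\to W$, then multilinearity and antisymmetry of the determinant give $\card{\det(z^1_{j_1},\ldots,z^n_{j_n})}=\bigl(\prod_i\card{c^i_{j_i}}\bigr)\,\card{\det(w_1,\ldots,w_n)}$, whose last factor is independent of the particular choice. Thus $\card{\det(W)}:=\card{\det(w_1,\ldots,w_n)}$ factors out of the entire group, and the remaining sum over edge choices and over the bijection $\tau$ is exactly $\perm(M^W)$, where $M^W$ is the $n\times n$ matrix whose $(i,w)$ entry is the total absolute length $\sum_{j:\,z^i_j\parallel w}\card{c^i_j}$ of the edges of $z^i$ parallel to $w$. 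This yields
\begin{align*}
  \mvol(Z)=\sum_{\substack{W\subset U\\ \card{W}=n}}\card{\det(W)}\;\perm(M^W).
\end{align*}
The main obstacle is precisely the absolute value in~\eqref{eq:mvolzonotopes}, which destroys the multilinearity that the permanent algorithm exploits; the whole point of grouping by $W$ is that within one group the absolute value collapses to a single common constant $\card{\det(W)}$, leaving a genuinely multilinear, permanent-shaped combinatorial sum.

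Finally I would invoke the bipartite decomposition algorithm. The bipartite graph of $M^W$ has an edge $(z^i,w)$ precisely when $z^i$ has an edge parallel to $w$, so it is the subgraph of the edge graph $G$ induced on $Z\cup W$. Deleting the vertices of $U\setminus W$ from every bag of the given decomposition of $G$, and reading the result as a bipartite decomposition in the sense of Section~\ref{s:bipartite}, produces a bipartite decomposition of this graph of width at most~$\omega$. Theorem~\ref{thm:bipartitepermanent} then computes each $\perm(M^W)$ in $\widetilde{O}(n\,2^{\omega})$, and over the $O(n^d)$ sets $W$ this totals $\widetilde{O}(n^{d+1}\,2^{\omega})$. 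The factors $\card{\det(W)}$ are the absolute maximal minors of the $n\times(n+d)$ matrix of direction vectors; computing each by Gaussian elimination costs $O(n^3)$, for a total of $O(n^{d+3})$ (the entries of all the $M^W$ are read off from a single $n\times(n+d)$ table of edge lengths precomputed from the input, which is subsumed in this bound). Adding the two contributions gives $\widetilde{O}(n^{d+3}+n^{d+1}\,2^{\omega})$, as claimed.
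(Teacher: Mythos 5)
Your proposal is correct and follows essentially the same route as the paper's proof: partition the terms of~\eqref{eq:mvolzonotopes} by the $n$-subset $W$ of edge directions used (the $O(n^d)$ count coming from $d$ constant), reduce each group to $\card{\det(W)}\cdot\perm(M^W)$ exactly as in Corollary~\ref{thm:mvolparallelotopes}, and apply Theorem~\ref{thm:bipartitepermanent} to the induced bipartite subgraph of $G$. The only difference is cosmetic: you re-derive the determinant--permanent factorization inline rather than citing the corollary, and you make the restriction of the tree decomposition to $Z\cup W$ explicit where the paper merely notes the graph is a subgraph of $G$.
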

\begin{proof}
  If $|U|<n$ it follows from \eqref{eq:mvolzonotopes} that $\mvol(Z)=0$. If $|U| = n$, then Corollary~\ref{thm:mvolparallelotopes} tells us that we just need to compute the determinant of the $u_i$'s and the permanent of some matrix $M$.
  We can find the determinant in $O(n^3)$ with linear algebra.
  For the permanent, as the edge graph $G$ corresponds to the bipartite graph of $M$, we can use Theorem~\ref{thm:bipartitepermanent}.
  Thus, we can find this permanent in $\widetilde{O}(n\,2^{\omega})$.

  If $|U|>n$, let $W\subset U$ of size $n$.
  For each $z\in Z$ assume $z = \sum_{u\in U} [0,1]c_u u$ for some scalars $c_u$.
  Let $z_{W} =  \sum_{u\in W} [0,1]c_u u$, and let $Z_{W} = \{z_W: z\in Z\}$. 
  It follows from equation~\eqref{eq:mvolzonotopes} that
  \begin{align*}
    \mvol(Z) = \sum_{W\subset U,\; \card{W}=n} \mvol(Z_{W}).
  \end{align*}
  For each such $W$ the associated graph is a subgraph of $G$, so we can compute $\mvol(Z_W)$ in $\widetilde{O}(n^3 + n\,2^{\omega})$.
  As there are ${n+d \choose n} =  O(n^d)$ possible $W$, the result follows.
\end{proof}

\begin{exmp}[Zonotopes with few edge directions]\label{exmp:easyzonotopes}
  Consider the following zonotopes:
  \begin{align*} 
      z^i &= [0,1] a_i e_i   + [0,1] b_i e &\mbox{for }i=1,\ldots,n
  \end{align*}
  where $\{e_i\}_i$ is the canonical basis, $e := \sum_j e_j=(1,\ldots,1)$ and $a_i,b_i\in \Z$ are some integers.
  Observe that there are $n+1$ edge directions: $e_1,\ldots,e_n,e$.
  Also note the that the edge graph $G$ is a tree, consisting of pairs $(z^i,e_i)$ and $(z^i,e)$.
  Therefore, Theorem~\ref{thm:mvolzonotopeseasy} says that can compute the mixed volume of these polytopes in polynomial time.

  As an application, recall that Bernstein's Theorem~\cite{Bernstein1975} gives a correspondence between mixed volumes and the roots of systems of polynomials.
  This allows us to conclude that we can efficiently count the number of solutions of the following system of equations:
  \begin{align*}
    0 &= c_{i,1}+c_{i,2}\,x_i^{a_i} + c_{i,3}\prod_j x_j^{b_i}+ c_{i,4}\,x_i^{a_i}\prod_j x_j^{b_i} &\mbox{for }i=1,\ldots,n.
  \end{align*}
\end{exmp}

\subsection{Hardness result}
Theorem~\ref{thm:mvolzonotopeseasy} shows that it is possible to exploit tree decompositions for mixed volume computations.
However, it restricts the zonotopes to have a small number of edge directions.
This is a strong requirement which is not satisfied in many cases (such as in Example~\ref{exmp:zonotopestreewidth}). 
Unfortunately, we will see that we need this condition.
We remark that the same condition appears in discrete optimization, where it allows to derive (strongly) polynomial time algorithms for certain discrete convex optimization problems~\cite{Onn2004}.

We show now that computing the mixed volume of the zonotopes in Example~\ref{exmp:zonotopestreewidth} is $\sharpp$-hard.
This shows that mixed volumes of zonotopes continue to be hard, even if both $G$ and $G^X$ have bounded treewidth.
We use a similar reduction as in \cite{Dyer1998}, where they prove that the volume of zonotopes is $\sharpp$-hard.

\begin{lem}\label{thm:determinantsumentries}
  The determinant of the following $n\times n$ matrix is $s_1+s_2+\cdots +s_n$.
  \begin{align*}
    M = \left[\begin{smallmatrix}\\
    1     &-1    &0     &0     &\cdots&0      &0     \\ 
    0     &1     &-1    &0     &\cdots&0      &0     \\ 
    0     &0     &1     &-1    &\cdots&0      &0     \\ 
    \vdots&\vdots&\vdots&\vdots&\ddots&\vdots &\vdots\\\\ 
    0     &0     &0     &0     &\cdots&-1     &0     \\ 
    0     &0     &0     &0     &\cdots&1      &-1    \\ 
    s_1   &s_2   &s_3   &s_4   &\cdots&s_{n-1}&s_n 
\end{smallmatrix}\right]
  \end{align*}
\end{lem}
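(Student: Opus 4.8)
The plan is to reduce the computation to a single cofactor expansion after performing one determinant-preserving column operation. The key structural observation is that each of the first $n-1$ rows of $M$ has exactly one entry equal to $1$ (on the diagonal) and one entry equal to $-1$ (on the superdiagonal), so each such row sums to zero; only the last row, with entries $s_1,\ldots,s_n$, has a nonzero row sum, namely $s_1+\cdots+s_n$.

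First I would replace the last column $C_n$ by the sum $C_1+C_2+\cdots+C_n$ of all columns. This operation leaves the determinant unchanged. After it, the $i$-th entry of the new last column equals the row sum of row $i$, which is $0$ for $i=1,\ldots,n-1$ and $s_1+\cdots+s_n$ for $i=n$. Thus the transformed matrix $M'$ agrees with $M$ in its first $n-1$ columns and has last column $(0,\ldots,0,\,s_1+\cdots+s_n)^T$.

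Next I would expand $\det(M')$ along this last column. Its only nonzero entry sits in position $(n,n)$, so $\det(M)=\det(M')=(s_1+\cdots+s_n)\cdot\det(B)$, where $B$ is the top-left $(n-1)\times(n-1)$ block obtained by deleting the last row and column. This block consists of rows $1,\ldots,n-1$ and columns $1,\ldots,n-1$ of the bidiagonal part, hence it is upper triangular with all diagonal entries equal to $1$; therefore $\det(B)=1$ and $\det(M)=s_1+\cdots+s_n$.

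There is essentially no hard step here; the only point requiring a moment of care is verifying that the leftover block $B$ is exactly the upper-triangular bidiagonal matrix with unit diagonal (in particular, that deleting column $n$ removes the stray $-1$ from the last surviving row), which is immediate from the explicit form of $M$. An alternative route is a direct Laplace expansion along the bottom row, but that forces one to evaluate $n$ minors with alternating signs, whereas the single column operation collapses the whole determinant into one term.
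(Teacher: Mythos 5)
Your proof is correct: each of the first $n-1$ rows does sum to zero, the column operation $C_n \mapsto C_1+\cdots+C_n$ preserves the determinant, and the remaining minor is upper triangular with unit diagonal, so $\det(M)=s_1+\cdots+s_n$. This is essentially the same elementary argument as the paper's, which instead runs Gaussian elimination on the rows to reach an upper triangular matrix with diagonal $1,\ldots,1,s_1+\cdots+s_n$; your single column operation plus cofactor expansion is an interchangeable one-line variant.
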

\begin{proof}
  If we perform Gaussian elimination we end up with an upper triangular matrix where the diagonal is: $M_{i,i}=1$ for $i<n$ and $M_{n,n} = s_1+\cdots+s_n$.
\end{proof}

\begin{prop}\label{thm:zonotopeshard}
  The following problem is $\sharpp$-hard.
  Given integers $a_1,\ldots,a_n$, compute the mixed volume of the $n$ zonotopes of equations~\eqref{eq:hardzonotopes}.
\end{prop}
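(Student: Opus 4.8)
The plan is to use equation~\eqref{eq:mvolzonotopes} to expand $\mvol(z^1,\ldots,z^n)$ as a sum of absolute determinants, to recognize each of those determinants as a subset sum by means of Lemma~\ref{thm:determinantsumentries}, and then to reduce the $\sharpp$-hard problem of counting subsets with a prescribed sum to a constant number of mixed volume evaluations.

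First I would apply~\eqref{eq:mvolzonotopes}. The zonotope $z^n$ has a single generator $-e_{n-1}+a_n e_n$, while each $z^i$ with $i<n$ has exactly two generators that differ only in the summand $a_i e_n$. Indexing the choice of generator for $z^1,\ldots,z^{n-1}$ by a vector $\epsilon\in\{0,1\}^{n-1}$, with $\epsilon_i=1$ selecting the generator containing $a_i e_n$, I would check column by column that the matrix whose columns are the chosen generators is exactly the matrix of Lemma~\ref{thm:determinantsumentries} with $s_i=\epsilon_i a_i$ for $i<n$ and $s_n=a_n$: the base vectors $e_1$ and $e_i-e_{i-1}$ produce the bidiagonal pattern of $\pm1$'s in the first $n-1$ rows, and the optional $a_i e_n$ terms land in the last row. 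Hence Lemma~\ref{thm:determinantsumentries} gives $\det(\cdots)=a_n+\sum_{i=1}^{n-1}\epsilon_i a_i$, and therefore
\begin{align*}
  \mvol(z^1,\ldots,z^n) = \sum_{\epsilon\in\{0,1\}^{n-1}} \Big| a_n + \sum_{i=1}^{n-1}\epsilon_i a_i \Big|.
\end{align*}

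With this closed form I would set up a Turing reduction from the problem of counting the subsets of given integers $a_1,\ldots,a_{n-1}$ that sum to a target $s$, a problem known to be $\sharpp$-hard; write $f(s)$ for this count. Treating $a_n$ as a free parameter $c$, let $g(c)$ denote the mixed volume above, so that grouping the terms by the value $t$ of the partial sum yields $g(c)=\sum_t f(t)\,|c+t|$. The key observation is that the discrete second difference of $|x|$ at an integer $x$ equals $2$ when $x=0$ and $0$ otherwise, whence $g(c+1)-2g(c)+g(c-1)=2\,f(-c)$. Consequently, choosing $c=-s$, the value $f(s)$ is recovered from the three evaluations $g(-s-1)$, $g(-s)$, $g(-s+1)$, each of which is a legitimate instance of our problem (the integers $a_1,\ldots,a_{n-1}$ are those of the counting instance, and $a_n\in\{-s-1,-s,-s+1\}$). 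Since every quantity involved has bit-size polynomial in the input, this is a polynomial-time reduction, establishing that the mixed volume problem is $\sharpp$-hard.

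The main obstacle is the verification in the second step that the generator matrix coincides, column by column, with the matrix of Lemma~\ref{thm:determinantsumentries}; once that identification is in place the determinant collapses to a subset sum and the remainder is the standard finite-difference inversion that turns a sum of absolute values into exact counts. I would be careful to confirm that $z^n$ contributes no choice (so $s_n=a_n$ is fixed) and that the $2^{n-1}$ terms are in bijection with $\{0,1\}^{n-1}$, so that no spurious or missing terms appear in the closed form.
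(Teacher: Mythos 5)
Your proposal is correct and follows essentially the same route as the paper's proof: expand the mixed volume via equation~\eqref{eq:mvolzonotopes}, identify each generator matrix with the one in Lemma~\ref{thm:determinantsumentries} so that the determinants become (shifted) subset sums, and then recover the subset count via the second difference of $\lvert x\rvert$ (the paper uses the coefficients $\tfrac12,-1,\tfrac12$ with $a_n=\delta\in\{-1,0,1\}$, which is the same finite-difference identity you use with $c=-s$). The only cosmetic difference is that the paper reduces from Subset-Sum with target zero while you allow a general target, which changes nothing of substance.
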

\begin{proof}
  We consider the $\sharpp$-complete problem Subset-Sum:
  given a set of integers $A$, determine the number of subsets $S\subset A$ with sum zero.
  Let $k=n-1$, $A = \{a_1,\ldots,a_k\}$ and let $a_n = \delta$ be a parameter. 
  We will show that the solution to the Subset-Sum problem is given by
  \begin{align}\label{eq:subsetsumzonotopes}
    \frac{1}{2} \mvol(z^1,\ldots,z^n_{\delta=-1}) - \mvol(z^1,\ldots,z^n_{\delta=0}) + \frac{1}{2}\mvol(z^1,\ldots,z^n_{\delta=1}).
  \end{align}

  Let's evaluate equation~\eqref{eq:mvolzonotopes}.
  Consider the following $n\times (2n-1)$ matrix.
  \setcounter{MaxMatrixCols}{20}
  \begin{align*}
    M_\delta = \left[\begin{smallmatrix}\\
    1     &1     &-1    &-1    &0     &0     &\cdots&0     &0     &0     &0     &0     \\ 
    0     &0     &1     &1     &-1    &-1    &\cdots&0     &0     &0     &0     &0     \\ 
    0     &0     &0     &0     &1     &1     &\cdots&0     &0     &0     &0     &0     \\ 
    \vdots&\vdots&\vdots&\vdots&\vdots&\vdots&\ddots&\vdots&\vdots&\vdots&\vdots&\vdots\\\\ 
    0     &0     &0     &0     &0     &0     &\cdots&-1    &-1    &0     &0     &0     \\ 
    0     &0     &0     &0     &0     &0     &\cdots&1     &1     &-1    &-1    &0     \\ 
    0     &0     &0     &0     &0     &0     &\cdots&0     &0     &1     &1     &-1    \\ 
    a_1   &0     &a_2   &0     &a_3   &0     &\cdots&a_{k-1}&0     &a_{k}&0     &\delta 
\end{smallmatrix}\right]
  \end{align*}
  Observe that columns $2i-1$ and $2i$ of $M_\delta$ correspond to $z^i$, and the last column corresponds to $z^n_\delta$.
  Then formula~\eqref{eq:mvolzonotopes} considers submatrices of $M_\delta$ that use columns $j_1,\ldots,j_n$ where $j_i\in \{2i-1,2i\}$ for $i=1,\ldots,k$ and $j_n=2n-1$.
  Note now that for any subset $S\subset A$, there is a natural submatrix $M_\delta^S$ to consider: if $a_i \in S$ then $j_i = 2i-1$ and otherwise $j_i=2i$.
  This correspondence is a bijection.
  Observe also that each submatrix $M_\delta^S$ has the form of Lemma~\ref{thm:determinantsumentries}.
  Thus, we have the following equation:
  \begin{align*}
    \mvol(z^1,\ldots,z^n_\delta) = \sum_{S\subset A} \left|\delta+ \sum_{a_i\in S} a_i\right|.
  \end{align*}
  Finally, observe that for any integer $s$ we have
  \begin{align*}
    \frac{1}{2}\,|(-1)+s|\,-\,|s|\,+\frac{1}{2}\,|1+s| = \begin{cases}
      1 &\mbox{if } s = 0 \\
      0 &\mbox{otherwise}
    \end{cases}
  \end{align*}
  The last two equations imply that \eqref{eq:subsetsumzonotopes} indeed counts the subsets $S$ with sum zero.
\end{proof}


\bibliography{../../../groebner/refernces}
\bibliographystyle{plain}

\end{document}